\documentclass[runningheads]{llncs}

\usepackage{savesym}
\usepackage{amsmath}
\savesymbol{iint}

\usepackage{graphicx}
\usepackage{tikz}
\usetikzlibrary{positioning,shadows,arrows}
\usetikzlibrary{arrows,automata,positioning}
\usepackage{pgf,tikz}


\usepackage{comment}
\usepackage[disable]{todonotes}
\usepackage{shuffle}

\usepackage{wrapfig}
\usepackage{wasysym}
\usepackage{hyperref}



\usepackage{apxproof}




\theoremstyle{plain}
\newtheoremrep{theorem}{Theorem}[section]
 \newtheoremrep{proposition}[theorem]{Proposition}
\newtheoremrep{lemma}[theorem]{Lemma}
 \newtheoremrep{claim}[theorem]{Claim}
 \newtheoremrep{conjecture}[theorem]{Conjecture}
 \newtheoremrep{corollary}[theorem]{Corollary}
 \newtheoremrep{definition}[theorem]{Definition}

\newcommand{\NP}{\textsf{NP}}
\newcommand{\PSPACE}{\textsf{PSPACE}}

\newcommand{\PTIME}{\textsf{P}}

\newcommand{\pref}{\operatorname{Pref}}
\newcommand{\suff}{\operatorname{Suff}}
\newcommand{\factor}{\operatorname{Fact}}

\usepackage{tabularx,environ,amssymb}

\newcounter{problemcounter}
\makeatletter
\newcommand{\problemtitle}[1]{\gdef\@problemtitle{#1}}
\newcommand{\probleminput}[1]{\gdef\@probleminput{#1}}
\newcommand{\problemquestion}[1]{\gdef\@problemquestion{#1}}
\NewEnviron{decproblem}{
  \refstepcounter{problemcounter}
  \problemtitle{}\probleminput{}\problemquestion{}
  \BODY
  \par\addvspace{.5\baselineskip}
  \noindent
  \begin{tabularx}{\textwidth}{@{\hspace{\parindent}} l X c}
    \multicolumn{2}{@{\hspace{\parindent}}l}{\textbf{Decision Problem \theproblemcounter:} \@problemtitle} \\
    \textbf{Input:} & \@probleminput \\
    \textbf{Question:} & \@problemquestion
  \end{tabularx}
  \par\addvspace{.5\baselineskip}
}
\makeatother

\newenvironment{claiminproof}[1]{\medskip\par\noindent\underline{Claim:}\space#1}{}
\newenvironment{claimproof}[1]{\begin{quote}\par\noindent\emph{Proof of the Claim:}\space#1}{[\emph{End, Proof of the Claim}]\end{quote}}


\title{Ideal Separation and General Theorems for Constrained Synchronization and their Application to  Small Constraint Automata} 

\titlerunning{Ideal Separation and Applications to Constrained Synchronization} 


\author{Stefan Hoffmann\orcidID{0000-0002-7866-075X}}
\authorrunning{S.\,Hoffmann} 
\institute{Informatikwissenschaften, FB IV, 
  Universit\"at Trier,  Universitätsring 15, 54296~Trier, Germany, 
  \email{hoffmanns@informatik.uni-trier.de}}

\begin{document}

\maketitle

\begin{abstract}

  In the constrained synchronization problem we
  ask if a given automaton admits a synchronizing
  word coming from a fixed regular constraint language.
  We show that intersecting a given constraint language
  with an ideal language decreases the computational complexity.
  Additionally, we state a theorem giving $\PSPACE$-hardness
  that broadly generalizes previously used constructions
  and a result on how to combine languages
  by concatenation to get polynomial time solvable  constrained synchronization problems. 
  We use these results to give a classification of the complexity
  landscape for small constraint automata of up to three states.
  
  \keywords{Synchronization \and Computational complexity \and Automata theory \and Finite automata} 
\end{abstract}

\section{Introduction}
\label{sec:introduction}


A deterministic semi-automaton is synchronizing if it admits a reset word, i.e., a word which leads to a definite
state, regardless of the starting state. This notion has a wide range of applications, from software testing, circuit synthesis, communication engineering and the like, see~\cite{San2005,Vol2008}.  The famous \v{C}ern\'y conjecture \cite{Cer64}
states that a minimal length synchronizing word, for an $n$-state automaton, has length
at most $(n-1)^2$. 
We refer to the mentioned survey articles for details.

Due to its importance, the notion of synchronization has undergone a range of generalizations and variations
for other automata models.
In some  generalizations, related to partial automata~\cite{Martyugin12}, only certain paths, or input words, are allowed (namely those for which the input automaton is defined). 

In~\cite{Gusev:2012}
the notion of constrained synchronization was 
introduced in connection with a reduction procedure
for synchronizing automata.
The paper \cite{DBLP:conf/mfcs/FernauGHHVW19} introduced the computational problem of constrained 
synchronization. In this problem, we search for a synchronizing word coming from a specific subset of allowed
input sequences. For further motivation and applications we refer to the aforementioned paper \cite{DBLP:conf/mfcs/FernauGHHVW19}.
In this paper, a complete analysis of the complexity landscape when the constraint language is given by 
small partial automata with up to two states and an at most ternary alphabet was done.
It is natural to extend this result to other language classes, or
even to give a complete classification of all the complexity classes that could arise.
For commutative regular constraint languages, a full classification of the realizable
complexities was given in~\cite{DBLP:conf/cocoon/Hoffmann20}.
In~\cite{DBLP:conf/ictcs/Hoffmann20}, it was shown that for polycyclic constraint languages, the problem is always in~$\NP$.

Let us mention that restricting the solution space by a regular language
has also been applied in other areas, for example to topological sorting~\cite{DBLP:conf/icalp/AmarilliP18},
solving word equations~\cite{Diekert98TR,DBLP:journals/iandc/DiekertGH05}, constraint programming~\cite{DBLP:conf/cp/Pesant04}, or
shortest path problems~\cite{DBLP:journals/ipl/Romeuf88}.
The road coloring problem asks for a labelling of a given graph  such that a synchronizing
automaton results. A closely related problem to our problem of constrained synchronization is to restrict the possible labeling(s), and
this problem was investigated in~\cite{DBLP:journals/jcss/VorelR19}.

\medskip

\noindent\textbf{Contribution and Motivation:} In~\cite{DBLP:conf/mfcs/FernauGHHVW19}
a complete classification of the computational complexity
for partial constraint automata with up to two states and an at most ternary alphabet
was given. Additionally, an example of a  a three-state automaton over a binary alphabet realizing
an $\NP$-complete constrained synchronization problem and a three-state automaton
over a binary alphabet admitting a $\PSPACE$-complete problem were given. The question was asked, if, and for what
constraint automata, other complexity classes might arise.
Here, we extend the classification by extending the two-state case to arbitrary alphabets
and giving a complete classification for three-state automata over a binary alphabet. 
It turned out that only $\PSPACE$-complete, or $\NP$-complete, or polynomial time solvable
constrained problems arise.
In~\cite{DBLP:conf/mfcs/FernauGHHVW19}, the analysis for the small constraint automata
were mainly carried out by case analysis. As for larger alphabets and automata this quickly becomes tedious, here we use, and present, new results to lift, extend and combine known results.
Among these are three main theorems, which, when combined, allow many cases to be handled
in an almost mechanical manner. More specifically, the motivation and application
of these theorems is the following.
\begin{enumerate}

\item The \emph{$UV^*W$-Theorem}  describes how to combine languages with concatenation to get polynomial time solvable constrained problems.
 
 \medskip

\item The \emph{uC-Theorem} gives a general condition on the form of a constraint language to yield a $\PSPACE$-complete constrained synchronization problem.

 \medskip 
 
\item The \emph{Ideal Separation Theorem}. 
 In general, if the constraint language could be written as the union of two languages,
 and for one of them the constrained problem is hard, we cannot deduce hardness for the original languages.
 However, under certain circumstances, namely if the hard language is contained in a unique
 regular ideal language, we can infer hardness for the original languages.
 
\end{enumerate}

\noindent We apply these results to small constraint automata of up to three states.

 
\label{sec:preliminaries}

\section{General Notions and Definitions}

By $\Sigma$ we will always denote a \emph{finite alphabet}, i.e., 
a finite set of \emph{symbols}, or \emph{letters}.
A \emph{word} is an element of the free monoid $\Sigma^*$, i.e., the set of all finite sequences
with concatenation as operation. For $u, v \in \Sigma^*$, we will 
denote their\textbf{} concatenation by $u\cdot v$, but often we will omit the concatenation symbol
and simply write $uv$. The subsets of $\Sigma^*$
are also called \emph{languages}. By $\Sigma^+$ we denote the set of all words of non-zero length.
We write $\varepsilon$ for the empty word, and for $w \in \Sigma^*$ we denote by $|w|$
the length of $w$. Let $L \subseteq \Sigma^*$,
then $L^* = \bigcup_{n \ge 0} L^n$, with $L^0 = \{\varepsilon\}$
and $L^n = \{ u_1 \cdots u_n \mid u_1, \ldots, u_n \in L \}$ for $n > 0$,
denotes the \emph{Kleene star} of $L$.
For some language $L\subseteq \Sigma^*$,
we denote by $\pref(L) = \{ w \mid \exists u \in \Sigma^* : wu \in L \}$,
$\suff(L) = \{ w \mid \exists u \in \Sigma^* : uw \in L \}$
and $\factor(L) = \{ w \mid \exists u,v \in \Sigma^* : uwv \in L \}$
the set of \emph{prefixes}, \emph{suffixes} and \emph{factors} of words in $L$.
The language $L$ is called \emph{prefix-free} if
for each $w \in L$ we have $\pref(w) \cap L = \{w\}$.
If $u, w \in \Sigma^*$, a prefix $u \in \pref(w)$ is called a \emph{proper prefix} if
$u \ne w$.
%
%
A language $L \subseteq \Sigma^*$ is called a \emph{right} (\emph{left}-) \emph{ideal}
if $L = L\cdot \Sigma^*$ ($ = \Sigma^* \cdot L$), or
a \emph{two-sided ideal} (or simply an \emph{ideal} for short), if $L$ is both, a right and a left ideal.
A language $L \subseteq \Sigma^*$ is called \emph{bounded}, 
if there exist words $w_1, \ldots, w_n \in \Sigma^*$
such that $L \subseteq w_1^* \cdots w_n^*$.
%
%
%
%

Throughout the paper, we consider deterministic finite automata (DFAs).
Recall that a DFA~$\mathcal A$ is a tuple
$\mathcal A = (\Sigma, Q, \delta, q_0, F)$,
where the alphabet $\Sigma$ is a finite set of input symbols,~$Q$ is the finite state set, with start state $q_0 \in Q$, and final state set $F \subseteq Q$.
The transition function $\delta \colon Q\times \Sigma \to Q$ extends to words from $\Sigma^*$ in the usual way. The function $\delta$ can be further extended to sets of states in the following way. For every set $S \subseteq Q$ and $w \in \Sigma^*$, we set $\delta(S, w) := \{\,\delta(q, w) \mid q \in S\,\}$.
We sometimes refer to the function $\delta$ as a relation and we identify a transition $\delta(q, \sigma) = q'$ with the tuple $(q, \sigma, q')$. 
We call $\mathcal A$ \emph{complete} if~$\delta$ is defined for every $(q,a)\in Q \times \Sigma$; if $\delta$ is undefined for some $(q,a)$, the automaton~$\mathcal A$ is called \emph{partial}.
%
The set $L(\mathcal A) = \{\, w \in \Sigma^* \mid \delta(q_0, w) \in F\,\}$ denotes the language
\emph{recognized} by $\mathcal A$.

A \emph{semi-automaton} is a finite automaton without a specified start state
and with no specified set of final states.
The properties of being \emph{deterministic}, \emph{partial}, and \emph{complete} of semi-automata are defined as for DFA.
When the context is clear, we call both deterministic finite automata and semi-automata simply \emph{automata}.
We call a deterministic complete semi-automaton a DCSA
and a partial deterministic finite automaton a PDFA for short. If we want to add an explicit initial state $r$ and an explicit set of  final states $S$ to a DCSA $\mathcal A$, which changes it to a DFA, we use the notation $\mathcal A_{r,S}$.

A complete automaton $\mathcal A$ is called \emph{synchronizing} if there exists a word $w \in \Sigma^*$ with $|\delta(Q, w)| = 1$. In this case, we call $w$ a \emph{synchronizing word} for $\mathcal A$.
We call a state $q\in Q$ with $\delta(Q, w)=\{q\}$ for some $w\in \Sigma^*$ a \emph{synchronizing state}.

For an automaton $\mathcal A = (\Sigma, Q, \delta, q_0, F)$, 
we say that two states $q, q' \in Q$ are \emph{connected}, if one is reachable from the other,
i.e., we have a word $u \in \Sigma^*$ such that $\delta(q, u) = q'$. 
A subset $S \subseteq Q$ of states is called
\emph{strongly connected}, if all pairs from $S$ are connected.
A maximal strongly connected subset is called a \emph{strongly connected component}.
A state from which some final state is reachable is called \emph{co-accessible}.
An automaton $\mathcal A$ is called \emph{returning}, if for every state $q \in Q$, there exists a word $w \in \Sigma^*$ such that $\delta(q, w) = q_0$, where $q_0$ is the start state of $\mathcal A$.
A state $q \in Q$ such that for all $x \in \Sigma$ we have $\delta(q,x) = q$
is called a \emph{sink state}.

%
The set of synchronizing words forms a two-sided ideal. We will use this fact frequently
without further mentioning.

\begin{toappendix} 
The following obvious remark, stating that the set of synchronizing words
is a two-sided ideal, will be used frequently without further mentioning.

\begin{lemma}
	\label{lem:append_sync} 
	Let $\mathcal A = (\Sigma, Q, \delta)$ be a DCSA and $w\in \Sigma^*$ be a synchronizing word for $\mathcal A$. Then for every $u, v \in \Sigma^*$, the word $uwv$ is also synchronizing for~$\mathcal A$. 
\end{lemma}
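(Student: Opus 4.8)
The plan is to treat the prepended word $u$ and the appended word $v$ separately, relying on two elementary properties of the set-extension of $\delta$: it is monotone with respect to set inclusion, and it never increases cardinality (because $\delta(\cdot,a)$ is a function). First I would recall that, since $w$ synchronizes $\mathcal A$, there is a state $q \in Q$ with $\delta(Q, w) = \{q\}$, and the goal is to show $|\delta(Q, uwv)| = 1$ for arbitrary $u, v \in \Sigma^*$.

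For the prepended word, set $S := \delta(Q, u)$. Since $\delta$ maps into $Q$ we have $S \subseteq Q$, and because $\mathcal A$ is complete and $Q \ne \emptyset$, the set $S$ is nonempty. By monotonicity, $\delta(S, w) \subseteq \delta(Q, w) = \{q\}$; combining this with the nonemptiness of $\delta(S,w)$ forces $\delta(Q, uw) = \delta(S, w) = \{q\}$. For the appended word, I simply apply $\delta(\cdot, v)$ to this singleton: using the compositional identity $\delta(Q, uwv) = \delta(\delta(Q, uw), v) = \delta(\{q\}, v) = \{\delta(q, v)\}$, which has cardinality one. Hence $uwv$ is synchronizing.

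There is no genuine obstacle here; the statement is essentially the observation that the synchronizing words form a two-sided ideal. The only points requiring a moment of care are verifying that $S$ is nonempty, so that the inclusion $\delta(S, w) \subseteq \{q\}$ actually yields equality rather than permitting the empty set, and that the associativity of the extended transition function, $\delta(Q, uwv) = \delta(\delta(\delta(Q, u), w), v)$, holds; both follow directly from completeness and the definition of the set-extension of $\delta$.
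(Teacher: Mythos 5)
Your proof is correct, and it is exactly the routine verification the paper has in mind: the paper states Lemma~\ref{lem:append_sync} as an ``obvious remark'' without giving any proof, and your argument (completeness guarantees nonemptiness, monotonicity of the set-extension gives $\delta(Q,uw)=\{q\}$, and applying $v$ to a singleton keeps it a singleton) is the standard way to fill in that gap. Nothing to add.
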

\end{toappendix}

For a fixed PDFA $\mathcal B = (\Sigma, P, \mu, p_0, F)$,
we define the \emph{constrained synchronization problem}:

\begin{quote}
\begin{definition}
 \textsc{$L(\mathcal B)$-Constr-Sync}\\
\emph{Input}: DCSA $\mathcal A = (\Sigma, Q, \delta)$.\\
\emph{Question}: Is there a synchronizing word $w$ for $\mathcal A$ with  $w \in L(\mathcal B)$?
\end{definition}
\end{quote}

The automaton $\mathcal B$ will be called the \emph{constraint automaton}.
If an automaton $\mathcal A$ is a yes-instance of \textsc{$L(\mathcal B)$-Constr-Sync} we call $\mathcal A$ \emph{synchronizing with respect to $\mathcal{B}$}. Occasionally,
we do not specify $\mathcal{B}$ and rather talk about \textsc{$L$-Constr-Sync}.
We are going to inspect the complexity of this problem for different
(small) constraint automata.
The unrestricted synchronization problem, i.e., $\Sigma^*$\textsc{-Constr-Sync}
in our notation, is in $\PTIME$~\cite{Vol2008}.

We assume the reader to have some basic knowledge in computational complexity theory and formal language theory, as contained, e.g., in~\cite{HopMotUll2001}. For instance, we make use of  regular expressions to describe languages.
We also identify singleton sets with its elements. 
And we make use of complexity classes like $\PTIME$, $\NP$, or $\PSPACE$. 
With  $\leq^{\log}_m$ we denote a logspace many-one reduction.
If for two problems $L_1, L_2$ it holds that $L_1 \leq^{\log}_m L_2$ and $L_2 \leq^{\log}_m L_1$, then we write $L_1 \equiv^{\log}_m L_2$.

\section{Known Results on Constrained Synchronization}

Here we collect results from \cite{DBLP:conf/mfcs/FernauGHHVW19,DBLP:conf/cocoon/Hoffmann20,DBLP:conf/ictcs/Hoffmann20},
and some consequences, that will be used later.

\begin{lemma}[\cite{DBLP:conf/cocoon/Hoffmann20}]
	\label{lem:union}
	Let $\mathcal X$ denote any of the complexity classes
	 $\PSPACE$, $\NP$ and $\PTIME$.
	If $L(\mathcal B)$ is a finite union of languages $L(\mathcal B_1),
	L(\mathcal B_2), \dots, L(\mathcal B_n)$ such that for each $1\leq i\leq n$
 we have $L(\mathcal B_i)\textsc{-Constr-Sync}\in \mathcal X$, 
	then $L\textsc{-Constr-Sync}\in \mathcal X$.
\end{lemma}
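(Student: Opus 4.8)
The plan is to establish membership in $\mathcal{X}$ for $L(\mathcal B)\textsc{-Constr-Sync}$ by reducing it to the $n$ individual problems $L(\mathcal B_i)\textsc{-Constr-Sync}$ and exploiting closure of the complexity classes under the relevant operations. The key observation is that a word $w$ witnesses a yes-instance for the union constraint $L(\mathcal B) = \bigcup_{i=1}^n L(\mathcal B_i)$ if and only if $w$ is synchronizing for the input DCSA $\mathcal A$ and $w \in L(\mathcal B_i)$ for \emph{at least one} index $i$. In other words, $\mathcal A$ is a yes-instance of $L(\mathcal B)\textsc{-Constr-Sync}$ if and only if $\mathcal A$ is a yes-instance of $L(\mathcal B_i)\textsc{-Constr-Sync}$ for some $i \in \{1, \ldots, n\}$.

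First I would make this equivalence precise: since synchronization of $\mathcal A$ by $w$ is a property of $w$ alone (independent of which constraint language we are using), the existence of a synchronizing word in the union is exactly the disjunction over $i$ of the existence of a synchronizing word in $L(\mathcal B_i)$. This gives the reduction: on input $\mathcal A$, we simply forward the \emph{same} instance $\mathcal A$ to each of the $n$ subproblems and accept if and only if at least one of them accepts. Because $n$ is a fixed constant (determined by the constraint automaton $\mathcal B$, not part of the input), this is a disjunctive truth-table reduction of constant width to the $n$ fixed problems, each of which lies in $\mathcal X$.

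Next I would argue that each of $\PSPACE$, $\NP$ and $\PTIME$ is closed under taking the disjunction (union of yes-instances) of a constant number of problems from the class. For $\PTIME$ and $\PSPACE$ this is immediate: run each of the $n$ decision procedures sequentially, reusing the same space in the $\PSPACE$ case, and accept if any returns yes; the total time (resp.\ space) overhead is a constant factor. For $\NP$ the argument is equally direct, since $\NP$ is closed under finite union: nondeterministically guess an index $i$ together with the certificate for $L(\mathcal B_i)\textsc{-Constr-Sync}$ on input $\mathcal A$, and verify it in polynomial time; equivalently, one may observe that a single guessed word $w$ of polynomial length can be checked for synchronization of $\mathcal A$ and for membership in at least one $L(\mathcal B_i)$ in polynomial time. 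Either formulation places the union problem in $\NP$.

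I do not expect a genuine obstacle here, as the statement is essentially a bookkeeping fact about closure of standard complexity classes under constant-width disjunctions. The only point requiring mild care is to ensure the decomposition $L(\mathcal B) = \bigcup_i L(\mathcal B_i)$ is fixed (part of the problem specification) rather than given in the input, so that $n$ is a constant and the overhead stays within the class; this is exactly the setting of the constrained synchronization problem, where $\mathcal B$ and hence its decomposition are fixed. A secondary point worth stating explicitly is that the reduction is the identity on the instance $\mathcal A$, which is trivially computable in logarithmic space, so the argument also respects the $\leq^{\log}_m$ framework used throughout the paper.
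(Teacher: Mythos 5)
Your proof is correct, and it is essentially the only natural argument: the instance-level equivalence ($\mathcal A$ is a yes-instance for the union constraint if and only if it is a yes-instance of $L(\mathcal B_i)\textsc{-Constr-Sync}$ for some $i$, which holds because being synchronizing is a property of the word alone) combined with closure of $\PTIME$, $\NP$ and $\PSPACE$ under constant-width disjunction; note that the paper itself imports this lemma from prior work without reproducing a proof, so there is no in-paper argument to diverge from. One minor caveat: your parenthetical ``equivalently'' formulation in the $\NP$ case --- guessing a single polynomial-length synchronizing word --- is not justified in general, since $L(\mathcal B_i)\textsc{-Constr-Sync}\in\NP$ does not by itself guarantee that yes-instances admit polynomial-length synchronizing words in $L(\mathcal B_i)$; however, your primary argument (guess the index $i$ and run the $\NP$ verifier for the $i$-th problem) is sound, so this aside does not affect correctness.
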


The next result from \cite{DBLP:conf/mfcs/FernauGHHVW19}
states that the computational complexity is always in $\PSPACE$.

\begin{theorem}[\cite{DBLP:conf/mfcs/FernauGHHVW19}]
  \label{thm:L-contr-sync-PSPACE}
  For any constraint automaton $\mathcal B = (\Sigma, P, \mu, p_0, F)$
  the problem \textsc{$L(\mathcal B)$-Constr-Sync} is in $\PSPACE$.
\end{theorem}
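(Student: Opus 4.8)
The plan is to reduce the problem to a reachability question in an exponentially large but implicitly represented graph, and then to invoke Savitch's theorem to place it in $\PSPACE$. The natural object is the product of the subset (power-set) automaton of the input $\mathcal A$ with the fixed constraint automaton $\mathcal B$.

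First I would define the product automaton $\mathcal C$ whose states are pairs $(S, p)$ with $S \subseteq Q$ and $p \in P$. The initial state is $(Q, p_0)$, and on a letter $a \in \Sigma$ we set
$$(S, p) \xrightarrow{a} (\delta(S,a),\, \mu(p,a)),$$
where the transition is defined only when $\mu(p,a)$ is defined (recall that $\mathcal B$ is a PDFA). The first component tracks the set of states $\mathcal A$ could be in after reading a word starting from all of $Q$, so that $\mathcal A$ is synchronized exactly when $|S| = 1$; the second component tracks the run of $\mathcal B$, so that the word read lies in $L(\mathcal B)$ exactly when $p \in F$ and every transition taken along the way was defined. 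Hence $\mathcal A$ is a yes-instance of \textsc{$L(\mathcal B)$-Constr-Sync} if and only if some state $(S,p)$ with $|S| = 1$ and $p \in F$ is reachable from $(Q, p_0)$ in $\mathcal C$.

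The key observation is that although $\mathcal C$ has $2^{|Q|}\cdot|P|$ states, and hence exponentially many, each individual state admits a description of size $O(|Q| + \log|P|)$, which is polynomial in the input size (here $|P|$ is a fixed constant, since $\mathcal B$ is fixed). Moreover, given a state $(S,p)$ and a letter $a$, its successor is computable in polynomial space directly from $\delta$ and $\mu$. Therefore I would solve the reachability question nondeterministically: starting from $(Q, p_0)$, repeatedly guess a letter $a \in \Sigma$ and update the current pair \emph{in place}, accepting as soon as a target state is encountered. To ensure termination, I maintain a binary counter bounded by the number of states $2^{|Q|}\cdot|P|$, which needs only $O(|Q| + \log|P|)$ bits, and reject once this bound is exceeded; this is sound because any reachable target state is reachable along a simple path, whose length is at most the number of states. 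This yields an $\NPSPACE$ procedure.

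Finally, I would appeal to Savitch's theorem, $\NPSPACE = \PSPACE$, to conclude that \textsc{$L(\mathcal B)$-Constr-Sync} is in $\PSPACE$. The main obstacle, and really the only subtle point, is precisely that $\mathcal C$ cannot be constructed explicitly within polynomial space; the whole argument hinges on exploring it on the fly using polynomially bounded configurations and a counter, and deferring the exponential blow-up in the number of vertices to the determinization provided by Savitch's theorem.
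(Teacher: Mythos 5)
Your proposal is correct and follows essentially the same route as the proof in the cited reference~\cite{DBLP:conf/mfcs/FernauGHHVW19}: nondeterministically guess a word letter by letter while maintaining, in polynomial space, the current subset $\delta(Q,w)$ of states of $\mathcal A$ together with the current state of the fixed PDFA $\mathcal B$, accept upon reaching a singleton paired with a final state, and conclude via $\NPSPACE = \PSPACE$ (Savitch). The length counter bounding the search by the number of configurations is the standard way to guarantee termination, and your handling of partiality of $\mu$ is the right one.
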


\begin{toappendix} 
If $|L(\mathcal B)| = 1$ then $L(\mathcal B)\textsc{-Constr-Sync}$
is obviously in $\PTIME$. Simply feed this single word into the input
semi-automaton for every state and check if a unique state results.
Hence by Lemma \ref{lem:union} the next is implied.

\begin{lemma}
\label{lem:finite} 
 Let $\mathcal B = (\Sigma, P, \mu, p_0, F)$ be a constraint automaton
 such that $L(\mathcal B)$ is finite, then
 $L(\mathcal B)\textsc{-Constr-Sync} \in \PTIME$.
\end{lemma}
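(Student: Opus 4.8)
The plan is to reduce the finite case to the singleton case and then close under finite union via Lemma~\ref{lem:union}. First I would observe that, since $L(\mathcal B)$ is finite, it can be written as the finite union
\[
  L(\mathcal B) = \bigcup_{w \in L(\mathcal B)} \{w\},
\]
where the number of parts equals $|L(\mathcal B)|$, a constant that does not depend on the input semi-automaton. This is exactly the shape required to invoke Lemma~\ref{lem:union} with $\mathcal X = \PTIME$, provided each singleton problem is shown to be in $\PTIME$.

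Next I would handle the singleton case $\{w\}$. Here the crucial point is that $w$ is a \emph{fixed} word: it is part of the constraint automaton $\mathcal B$, which parameterizes the problem, and is not part of the instance. Thus $|w|$ is a constant. Given an input DCSA $\mathcal A = (\Sigma, Q, \delta)$, the only candidate synchronizing word lying in the constraint $\{w\}$ is $w$ itself, so it suffices to compute $\delta(Q, w)$ and test whether $|\delta(Q, w)| = 1$. This simulation of $w$ on the set $Q$ can be carried out in time $O(|w| \cdot |Q|)$, which is polynomial in the size of $\mathcal A$ since $|w|$ is constant; hence $\{w\}\textsc{-Constr-Sync} \in \PTIME$.

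Finally I would combine the two observations: each of the finitely many singleton constrained synchronization problems is in $\PTIME$, so by Lemma~\ref{lem:union} (applied with $\mathcal X = \PTIME$) their union $L(\mathcal B)\textsc{-Constr-Sync}$ is also in $\PTIME$. I do not expect any genuine obstacle here; the only subtlety worth flagging explicitly is that both the length of each $w$ and the number of words in $L(\mathcal B)$ are constants relative to the input, which is precisely what makes the direct simulation polynomial and keeps the union finite as Lemma~\ref{lem:union} demands.
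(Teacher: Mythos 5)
Your proof is correct and follows essentially the same route as the paper: the paper likewise observes that the singleton case $|L(\mathcal B)|=1$ is in $\PTIME$ by simply feeding the fixed word into the input semi-automaton and checking whether a unique state results, and then invokes Lemma~\ref{lem:union} to handle the finite union of singletons. Your explicit remark that both the word lengths and the number of words are constants (parameters of $\mathcal B$, not of the instance) is exactly the implicit justification in the paper's argument.
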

\end{toappendix}

In~\cite[Theorems 24, 25 and 26]{DBLP:conf/mfcs/FernauGHHVW19}, for a two-state partial constraint 
automaton with an at most ternary alphabet, the following complexity
classification was proven. In Section~\ref{sec:two-state}, we will extend
this result to arbitrary alphabets.

\begin{theorem}[\cite{DBLP:conf/mfcs/FernauGHHVW19}]
\label{thm:classification_MFCS_paper}
 Let $\mathcal B = (\Sigma, P, \mu, p_0, F)$
 be a PDFA.
 If $|P|\le 1$ or $|P| = 2$ and $|\Sigma|\le 2$, then $L(\mathcal B)\textsc{-Constr-Sync} \in \PTIME$.
 For $|P| = 2$ with $|\Sigma| = 3$, up to symmetry by renaming of the letters,
 $L(\mathcal B)\textsc{-Constr-Sync}$
 is $\PSPACE$-complete precisely in the following cases for $L(\mathcal B)$:
 $$
  \begin{array}{llll}
    a(b+c)^*        & (a+b+c)(a+b)^*  & (a+b)(a+c)^* & (a+b)^*c \\
    (a+b)^*ca^*     & (a+b)^*c(a+b)^* & (a+b)^*cc^*  & a^*b(a+c)^* \\
    a^*(b+c)(a+b)^* & a^*b(b+c)^*     & (a+b)^*c(b+c)^* & a^*(b+c)(b+c)^*
  \end{array}
 $$
 and polynomial time solvable in all other cases.
\end{theorem}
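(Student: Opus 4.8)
Since every case is in $\PSPACE$ by Theorem~\ref{thm:L-contr-sync-PSPACE}, the plan is to split the classification into two tasks: exhibiting polynomial-time algorithms for the easy languages and $\PSPACE$-hardness reductions for the twelve displayed ones. I would first enumerate, up to renaming of states and of letters, all PDFAs with at most two states over an alphabet of size at most three, read off the regular language $L(\mathcal B)$ that each recognizes as a short regular expression, and then treat the resulting finite list one family at a time. The degenerate cases are immediate: for $|P|\le 1$ the language is either $\emptyset$, handled trivially, or $T^{*}$ for the set $T\subseteq\Sigma$ of letters carrying the self-loop at the unique accepting state.

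For the polynomial upper bounds I would assemble a small toolkit and then show that every language outside the displayed list is built from its pieces. First, a word of $T^{*}$ synchronizes $\mathcal A$ iff the semi-automaton obtained by deleting all transitions on letters outside $T$ is synchronizing; this is an unrestricted instance over the alphabet $T$, hence in $\PTIME$. Second, a block $a^{*}$ over a single letter is harmless, because the image sets $\delta(\cdot,a^{i})$ form an eventually periodic sequence with only polynomially many distinct members, which can be enumerated; concatenating finitely many single-letter stars and finite choices therefore stays polynomial. Finite languages are in $\PTIME$ by Lemma~\ref{lem:finite}, and Lemma~\ref{lem:union} lets one combine a bounded number of such pieces. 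A key observation is that whenever a multi-letter looping block ranges over the \emph{full} alphabet, any unrestricted synchronizing word can be patched into $L(\mathcal B)$ using that the synchronizing words form an ideal, so such blocks cause no hardness either. Verifying that each two-state binary language, and each ternary language not in the list, decomposes into these tame pieces then yields the claimed $\PTIME$ membership.

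For the hardness of the twelve languages I would reduce from a standard $\PSPACE$-complete problem, choosing the source and the gadget to fit each expression. The paradigm is $(a+b)^{*}c(a+b)^{*}$: from an instance of DFA intersection nonemptiness over the input alphabet $\{a,b\}$ one builds a DCSA in which a leading word over $\{a,b\}$ drives all component automata in parallel, the separator $c$ collapses a run to a common target exactly when every component sits in an accepting state, and the trailing $(a+b)^{*}$ completes the synchronization; the constructed automaton then has a synchronizing word of the prescribed shape iff the intersection is nonempty. The prefix-type expressions such as $a(b+c)^{*}$, $(a+b+c)(a+b)^{*}$ and $(a+b)(a+c)^{*}$ are instead most naturally obtained by forcing the single-symbol prefix to select a prescribed set of states and then demanding that the trailing looping block over a \emph{proper} two-letter sub-alphabet synchronize that set, i.e.\ by reduction from subset synchronization; the remaining collapse- and two-block expressions are variations on these two templates. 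This is exactly where the third letter is indispensable, and it is the structural reason the two-state binary case is uniformly in $\PTIME$: with only two letters one cannot simultaneously pin down an arbitrary subset and then synchronize it through a looping block over a proper sub-alphabet.

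The main obstacle is not any individual reduction but the exhaustiveness and sharpness of the dichotomy. One must certify that the enumeration, up to letter symmetry, is complete and that each listed expression is computed correctly; and one must rule out hidden hardness among the supposedly easy languages, since superficially tiny differences in a constraint — whether a separating letter is mandatory, or whether a looping block ranges over the full alphabet or over a proper part — can flip the complexity between $\PTIME$ and $\PSPACE$-complete. Producing a genuinely uniform treatment, rather than twelve bespoke gadgets together with a long ad hoc analysis on the easy side, is the real challenge, and it is precisely this that the general $UV^{*}W$- and uC-theorems developed later in the paper are meant to mechanize.
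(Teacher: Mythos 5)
A preliminary remark: this theorem is not proved in the paper at all---it is imported verbatim (as Theorems 24, 25 and 26) from the cited work \cite{DBLP:conf/mfcs/FernauGHHVW19}, so there is no in-paper proof to compare against. Measured against the cited paper, the architecture of your sketch is essentially the one actually used there: reduce constraint automata to a standard form (start state $1$, final state set $\{2\}$, which is exactly Lemma~\ref{lem:start-and-final}), dispose of the easy configurations by general polynomial-time criteria (the content of Proposition~\ref{prop:easy_cases} and Theorem~\ref{thm:returning_poly_alg}), and prove $\PSPACE$-hardness of the listed languages by reductions from synchronization-from-a-subset and from DFA intersection nonemptiness over a binary alphabet. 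Your two reduction templates (subset synchronization for the prefix-type languages such as $a(b+c)^*$, intersection nonemptiness for the separator-type languages such as $(a+b)^*c(a+b)^*$) are the right sources, and your structural explanation of why two letters cannot create hardness with only two states is the correct intuition.

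There is, however, a concrete flaw in your $\PTIME$ toolkit. You claim that for a single letter $a$ the image sets $\delta(\cdot,a^{i})$ form an eventually periodic sequence ``with only polynomially many distinct members, which can be enumerated.'' This is true only when the iteration starts from the full state set $Q$: there one has $\delta(Q,a^{i+1})\subseteq\delta(Q,a^{i})$, a decreasing chain that stabilizes within $|Q|$ steps. But in a concatenation such as $\Sigma_{1,1}^{*}\,x\,\Sigma_{2,2}^{*}$ the inner star is applied to an intermediate set $S=\delta(Q,a^{i}x)$, and for an arbitrary subset $S$ the sequence $\delta(S,a^{j})$ is not monotone. After at most $|Q|$ steps it lies in the cyclic core of $a$, where $a$ acts as a permutation $g$, and the number of distinct sets in the orbit equals the period of $g$ on that set, which can be superpolynomial in $|Q|$ (take disjoint cycles of distinct prime lengths $p_1,\dots,p_m$ and one marked state per cycle: the period is $p_1\cdots p_m$ while $|Q|\geq p_1+\dots+p_m$). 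So ``enumerate all distinct image sets'' is not a polynomial-time procedure. The standard repair is to argue via cardinality instead: $|\delta(S,a^{j})|$ is non-increasing in $j$, and $a$ is injective on $\delta(S,a^{|Q|})$, hence some $a^{j}$ takes $S$ to a singleton if and only if $|\delta(S,a^{|Q|})|=1$, so checking the single exponent $j=|Q|$ suffices. With that fix your upper-bound toolkit is sound; what then remains, and what your sketch explicitly defers, is the exhaustive enumeration certifying that every language outside the displayed list decomposes into these tame pieces---which, for this particular statement, is the actual bulk of the cited proof.
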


The next result from \cite[Theorem 17]{DBLP:conf/mfcs/FernauGHHVW19} will also be useful to single out certain polynomial time solvable
cases.

\begin{theorem}[\cite{DBLP:conf/mfcs/FernauGHHVW19}] 
\label{thm:returning_poly_alg}
	If  
    $\mathcal B$
	is returning, then  $\textsc{$L(\mathcal B)$-Constr-Sync}\in\PTIME$.	
\end{theorem}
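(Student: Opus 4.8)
The plan is to reduce the problem to a polynomial-time checkable \emph{pairwise} condition, exploiting that a returning $\mathcal B$ lets us route its state back to the start after every partial merge. Call an unordered pair $\{q_1,q_2\}\subseteq Q$ \emph{$\mathcal B$-mergeable} if there is a word $w$ with $\mu(p_0,w)=p_0$ and $\delta(q_1,w)=\delta(q_2,w)$. I claim that $\mathcal A$ is a yes-instance of \textsc{$L(\mathcal B)$-Constr-Sync} if and only if $L(\mathcal B)\neq\emptyset$ and every pair of states of $\mathcal A$ is $\mathcal B$-mergeable. Since $\mathcal B$ is fixed, the condition $L(\mathcal B)\neq\emptyset$ is a constant; if it fails, the algorithm rejects outright.

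For the forward direction, suppose $w\in L(\mathcal B)$ synchronizes $\mathcal A$. Then $L(\mathcal B)\neq\emptyset$, and for any pair $\{q_1,q_2\}$ we have $\delta(q_1,w)=\delta(q_2,w)$. Writing $f=\mu(p_0,w)\in F$ and using that $\mathcal B$ is returning, pick $t$ with $\mu(f,t)=p_0$; then $wt$ still merges the pair and satisfies $\mu(p_0,wt)=p_0$, witnessing $\mathcal B$-mergeability. For the backward direction I would build a synchronizing word of $L(\mathcal B)$ greedily. Starting from $S_0=Q$, as long as $|S_i|>1$ I pick two distinct states of $S_i$, use a $\mathcal B$-mergeable witness $w_i$ to form $S_{i+1}=\delta(S_i,w_i)$, and observe $|S_{i+1}|\le|S_i|-1$ because two states of $S_i$ collapse under the function $\delta(\cdot,w_i)$. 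After at most $|Q|-1$ steps $S_k$ is a singleton. Two features make the concatenation $W=w_0\cdots w_{k-1}$ work: once two states are identified by a prefix they remain identified under any continuation, so earlier merges are never undone; and each $w_i$ loops $\mathcal B$ at $p_0$, so $\mu(p_0,W)=p_0$. Finally I append a word $v$ with $\mu(p_0,v)\in F$, which exists since $L(\mathcal B)\neq\emptyset$; as $\mathcal A$ is already synchronized, $\delta(Q,Wv)$ stays a singleton and $Wv\in L(\mathcal B)$.

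For the algorithm I would form the product of $\mathcal B$ with the pair automaton of $\mathcal A$, whose states are triples $(p,q,q')\in P\times Q\times Q$ and whose transition on a letter $x$ is $(p,q,q')\mapsto(\mu(p,x),\delta(q,x),\delta(q',x))$ whenever $\mu(p,x)$ is defined. Because $\mathcal B$ is fixed, $|P|$ is a constant and this automaton has $O(|Q|^2)$ states. Testing $\mathcal B$-mergeability of $\{q_1,q_2\}$ amounts to a reachability query from $(p_0,q_1,q_2)$ to a diagonal state $(p_0,q,q)$; by returning-ness, this is equivalent to reaching any diagonal state $(p,q,q)$, since from there a path of $\mathcal B$ back to $p_0$ keeps the two $\mathcal A$-coordinates equal. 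Running one graph search per pair, together with the fixed emptiness check of $L(\mathcal B)$, decides the characterization in polynomial time.

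The main obstacle, and the reason the theorem is not immediate, is the partiality of $\mathcal B$: an unconstrained synchronizing word of $\mathcal A$ need not label a valid path in $\mathcal B$, so one cannot simply append a reset word of $\mathcal A$ to a word of $L(\mathcal B)$. The returning property is exactly what repairs this, letting each per-pair merge be carried out by a word that both lies on a closed $p_0$-to-$p_0$ path and, by the ideal and monotonicity behaviour of merging, composes with the others without unmerging. The step I would check most carefully is this composability invariant together with the claim that reaching a diagonal at $p_0$ is equivalent to reaching any diagonal state, as both rely on the interplay of returning-ness with the stability of merged pairs.
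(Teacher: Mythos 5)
Your proof is correct, and it is essentially the same argument as the one behind this result (which the paper itself does not prove but cites from~\cite{DBLP:conf/mfcs/FernauGHHVW19}): characterize yes-instances by pairwise mergeability via words that loop $\mathcal B$ at $p_0$, use the returning property both to close each merge back to $p_0$ and to justify appending a final accepting word, and decide mergeability by reachability to diagonal states in the product of $\mathcal B$ with the pair automaton of $\mathcal A$. All the delicate points — the greedy composition of merging words, the stability of merged pairs under continuation, and the equivalence between reaching any diagonal state and reaching one at $p_0$ — are handled correctly.
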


The next result allows us to assume a standard form for two-state constraint automata.
We will prove an analogous result for three-state constraint automata
in Section~\ref{sec:three-states}.

\begin{lemma}[\cite{DBLP:conf/mfcs/FernauGHHVW19}]
	\label{lem:start-and-final}
	Let  $\mathcal B = (\Sigma, P, \mu)$ be a partial deterministic semi-automaton with two states, i.e., $P=\{1,2\}$. Then, for each $p_0\in P$ and each $F\subseteq P$, either  $L(\mathcal{B}_{p_0,F})\textsc{-Constr-Sync}\in\PTIME$, or  $L(\mathcal{B}_{p_0,F})\-\textsc{-Constr-Sync}\equiv^{\log}_m L(\mathcal{B}')\-\textsc{-Constr\--Sync}$ for  a PDFA $\mathcal{B}'=(\Sigma, P,\mu',1,\{2\})$.
\end{lemma}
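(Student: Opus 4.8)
The plan is to reduce all eight combinations of a start state $p_0\in\{1,2\}$ and a final set $F\subseteq\{1,2\}$ to a handful of cases, using three ingredients: the renaming symmetry between the two states, the fact (Theorem~\ref{thm:returning_poly_alg}) that returning automata yield polynomial-time problems, and the fact that the synchronizing words of a DCSA form a two-sided ideal. First I would record that swapping the names of states $1$ and $2$ turns $\mathcal B$ into a semi-automaton $\tilde{\mathcal B}=(\Sigma,\{1,2\},\tilde\mu)$ with $\tilde\mu(q,w)=\sigma(\mu(\sigma(q),w))$ for the transposition $\sigma$ of $1$ and $2$; then $L(\tilde{\mathcal B}_{1,\tilde F})=L(\mathcal B_{2,F})$ as languages over $\Sigma$, where $\tilde F=\sigma(F)$. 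Since these are \emph{literally the same} languages, the two constrained problems coincide. Hence it suffices to treat $p_0=1$, and at the end I read off the $p_0=2$ cases by applying the $p_0=1$ analysis to $\tilde{\mathcal B}$, whose canonical form $\tilde{\mathcal B}_{1,\{2\}}$ is then an admissible witness $\mathcal B'$.

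Fix $p_0=1$ and write $A=\{a:\mu(1,a)=1\}$, $B=\{a:\mu(1,a)=2\}$, $C=\{a:\mu(2,a)=2\}$ and $D=\{a:\mu(2,a)=1\}$ for the four transition types. The first dichotomy is on $D$: if $D\neq\emptyset$, then state $2$ reaches state $1$ and state $1$ reaches itself, so $\mathcal B_{1,F}$ is returning for \emph{every} $F$, and Theorem~\ref{thm:returning_poly_alg} puts the problem in $\PTIME$. So I only need to handle $D=\emptyset$, where no path leaves state $2$ back to state $1$; this makes the accepted languages explicit. Indeed $L(\mathcal B_{1,\emptyset})=\emptyset$ is finite, hence in $\PTIME$ by Lemma~\ref{lem:finite}; a run returning to $1$ can never visit $2$, so $L(\mathcal B_{1,\{1\}})=A^*$; and the first excursion to $2$ is permanent, so $L(\mathcal B_{1,\{2\}})=A^*BC^*$. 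The last language is already of the required form, so here I simply take $\mathcal B'=\mathcal B_{1,\{2\}}$. The language $A^*$ is the free monoid over a sub-alphabet, so $A^*\textsc{-Constr-Sync}$ is just unrestricted synchronization of the input DCSA restricted to the letters of $A$ (a DCSA over $A$, computable in logspace), which is in $\PTIME$; if $A=\emptyset$ it is finite and Lemma~\ref{lem:finite} applies.

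The remaining and only genuinely delicate case is $F=\{1,2\}$ with $D=\emptyset$, where $L(\mathcal B_{1,\{1,2\}})=A^*\cup A^*BC^*$ is the canonical language $K:=A^*BC^*=L(\mathcal B_{1,\{2\}})$ enlarged by $A^*$. If $B=\emptyset$ then $K=\emptyset$ and $L(\mathcal B_{1,\{1,2\}})=A^*$, again in $\PTIME$. Otherwise I claim the \emph{identity} map witnesses $L(\mathcal B_{1,\{1,2\}})\textsc{-Constr-Sync}\equiv^{\log}_m K\textsc{-Constr-Sync}$. Indeed, since $K\subseteq L(\mathcal B_{1,\{1,2\}})$, any $K$-synchronizing word is an $L(\mathcal B_{1,\{1,2\}})$-synchronizing word. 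Conversely, given an $L(\mathcal B_{1,\{1,2\}})$-synchronizing word $w$, either $w\in K$ already, or $w\in A^*$, in which case I fix any $b\in B$ and observe that $wb\in A^*B\subseteq K$ and that $wb$ still synchronizes the (complete) input automaton, because appending a letter to a synchronizing word keeps it synchronizing (Lemma~\ref{lem:append_sync}, i.e. the ideal property of the set of synchronizing words). Thus the two problems have exactly the same yes-instances, and $\mathcal B'=\mathcal B_{1,\{2\}}$ is the desired canonical witness.

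I expect this append-a-letter step to be the real content of the argument: it is what lets the all-accepting final set collapse onto the canonical ``start $1$, final $\{2\}$'' language instead of introducing a new behaviour. Once it is in place, combining it with the returning dichotomy on $D$, the elementary identifications of the languages in the $D=\emptyset$ regime, and the renaming symmetry closes all eight cases, each ending either in $\PTIME$ or in a logspace equivalence to a PDFA of the form $\mathcal B'=(\Sigma,\{1,2\},\mu',1,\{2\})$.
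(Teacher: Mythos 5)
Your proof is correct, and every step checks out: the renaming symmetry is legitimate because the lemma allows $\mathcal B'$ to carry an arbitrary transition function $\mu'$, the returning dichotomy on $\Sigma_{2,1}$ correctly invokes Theorem~\ref{thm:returning_poly_alg}, the explicit language identifications under $\Sigma_{2,1}=\emptyset$ are right, and the append-a-letter argument for $F=\{1,2\}$ (with the $B=\emptyset$ degenerate case handled separately) gives a genuine two-way identity reduction. Note, however, that this paper does not actually prove the lemma — it is imported from~\cite{DBLP:conf/mfcs/FernauGHHVW19} — so the natural in-paper comparison is the three-state analogue, Lemma~\ref{lem:start-final-three-states}. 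There the author proceeds differently: the case $p_0\in F$ is dispatched wholesale by the remark following Theorem~\ref{thm:UVW-theorem} (start state equals the single final state gives $\PTIME$), and final-state sets are collapsed using the co-accessibility principle of Remark~\ref{remark:add-stuff} together with the union argument of Lemma~\ref{lem:union}, rather than by computing the accepted languages explicitly. Your central step — appending a letter $b\in B$ to turn an $A^*$-synchronizing word into one in $A^*BC^*$ — is exactly a hand-proved special case of Remark~\ref{remark:add-stuff} (equivalently of Theorem~\ref{thm:add-stuff-general}): when $B\neq\emptyset$, state $1$ is co-accessible, so adding it to the final set $\{2\}$ does not change the problem, and when $B=\emptyset$ it is not co-accessible, which matches your separate $\PTIME$ branch. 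What your route buys is a self-contained, elementary argument that needs only Theorem~\ref{thm:returning_poly_alg}, Lemma~\ref{lem:append_sync}, and the fact that unrestricted synchronization is in $\PTIME$; what the paper's machinery buys is scalability — explicit language enumeration is feasible for two states but becomes unwieldy at three, which is precisely why the paper's proof of Lemma~\ref{lem:start-final-three-states} leans on the general lemmas instead.
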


\begin{toappendix}
The polycyclic languages, and polycyclic automata, were introduced in~\cite{DBLP:conf/ictcs/Hoffmann20}.

\begin{definition}[\cite{DBLP:conf/ictcs/Hoffmann20}]
\label{def:polycyclic_automata} 
 A PDFA $\mathcal{B} = (\Sigma, P, \mu, p_0, F)$
 is called \emph{polycyclic},
 if for any $p\in P$
 we have $L(\mathcal{B}_{p,\{p\}})\subseteq \{u_p\}^*$ for some $u_p \in \Sigma^*$.
 A language $L \subseteq \Sigma^*$
 is called \emph{polycyclic},
 if there exists a polycylic PDFA recognizing~it.
\end{definition}


In~\cite[Theorem 2]{DBLP:conf/ictcs/Hoffmann20}, it was shown that
for polycyclic languages, the constrained synchronization problem
is always in $\NP$. 

\begin{theorem}[\cite{DBLP:conf/ictcs/Hoffmann20}]
 If the PDFA $\mathcal B = (\Sigma, P, \mu, p_0, F)$ is polyclic,
 then we have $L(\mathcal B)\textsc{-Constr-Sync}\in \NP$.
\end{theorem}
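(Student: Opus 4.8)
The plan is to exhibit, for any yes-instance, a polynomial-size succinct description of a suitable synchronizing word $w\in L(\mathcal B)$ together with a polynomial-time verification procedure. The starting point is a structural analysis forced by polycyclicity. First I would show that every strongly connected component of the transition graph of $\mathcal B$ is a single simple cycle. Suppose some state $p$ had two distinct transitions $\mu(p,a)$ and $\mu(p,b)$ with $a\ne b$, both leading to states that can return to $p$. Each return path yields a nonempty word in $L(\mathcal B_{p,\{p\}})\subseteq\{u_p\}^*$ (so $u_p\ne\varepsilon$); since every nonempty power of $u_p$ begins with the first letter of $u_p$, both $a$ and $b$ would have to equal that first letter, a contradiction. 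Hence inside any component each state has exactly one out-transition staying in the component, and a strongly connected functional graph is a single cycle. Consequently, the condensation of $\mathcal B$ is a DAG on at most $|P|$ components, any accepting computation visits the components in a simple topological order, and looping around the visited cycles yields a canonical factorization
\[
 w = v_0\, c_1^{k_1}\, v_1\, c_2^{k_2}\cdots c_m^{k_m}\, v_m,
\]
where $m\le|P|$, each $c_i$ is the cycle word of the $i$-th visited component, and each connecting word $v_i$ (inter-component transitions plus partial cycle traversals reaching entry/exit states) has length bounded by a polynomial in $|P|$.

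I would take the certificate to be this factorization, with the exponents $k_1,\dots,k_m$ written in binary. To bound its size, consider the product search space with states $(S,p)\in 2^Q\times P$, start $(Q,p_0)$, and accepting states $(\{q\},f)$ with $f\in F$: a synchronizing word in $L(\mathcal B)$ is exactly an accepting walk here, and if one exists there is a simple one. Along such a simple walk, the states reached each time the $\mathcal B$-coordinate returns to the entry state $e_i$ of component $i$ are $(\delta(S_i,c_i^{\,j}),e_i)$ for $j=0,1,\dots,k_i$; their first coordinates must be pairwise distinct, so $k_i+1\le 2^{|Q|}$ and each exponent needs at most $|Q|$ bits. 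Thus the whole certificate has polynomial size.

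The crucial point for verification is that $\delta(Q,w)$ can be evaluated in polynomial time despite the exponential exponents. Processing the short words $v_i$ is direct. For a cyclic block $c_i^{k_i}$ I would not iterate on subsets but instead compute the single-state transition map $g_i:=\delta(\cdot,c_i)\colon Q\to Q$ by reading the length-$\le|P|$ word $c_i$ through $\mathcal A$, and then evaluate $g_i^{\,k_i}(q)$ for each $q\in Q$. Since $g_i$ is a function on the $|Q|$-element set $Q$, the orbit of each $q$ has the usual tail-and-cycle (``rho'') shape of total length at most $|Q|$, so $g_i^{\,k_i}(q)$ is obtained by following the tail and then reducing $k_i$ modulo the cycle length, a computation polynomial in $|Q|$ and $\log k_i$. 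Applying this elementwise gives $\delta(S_i,c_i^{k_i})$, and carrying the subset through all blocks yields $\delta(Q,w)$; we accept iff it is a singleton. Checking that the factorization describes a genuine accepting path of $\mathcal B$ is routine.

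I expect the main obstacle to be the structural lemma that each component is a single cycle, together with the exponent bound: these are precisely the places where polycyclicity is used, and they are what reduce the a priori exponentially long synchronizing word to a polynomially describable and polynomially verifiable object. Note that the $\PSPACE$ upper bound of Theorem~\ref{thm:L-contr-sync-PSPACE} already guarantees a synchronizing word of at most exponential length, so the only genuine work is making such a word succinct and cheaply checkable, which the cyclic structure provides; this yields membership in $\NP$.
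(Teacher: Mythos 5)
Your proof is correct, but note that there is nothing in this paper to compare it against: the statement is quoted from the cited reference \cite{DBLP:conf/ictcs/Hoffmann20} and no proof is given here. Measured against the argument of that reference, your route is essentially the same one: a synchronizing word in $L(\mathcal B)$ is guessed in compressed form --- short connecting words together with binary-encoded exponents of fixed cycle words --- and verified by computing powers of the induced transformations of $Q$ in time polynomial in $|Q|$ and the bit-length of the exponents. Your two supporting steps are sound and are exactly where polycyclicity enters: the structural lemma (in a polycyclic PDFA every strongly connected component is a simple cycle, since two distinct in-component transitions at $p$ would yield two nonempty return words in $\{u_p\}^*$ with different first letters) is the standard characterization of polycyclic automata, and the bound $k_i + 1 \le 2^{|Q|}$, obtained from a simple accepting walk in the product $2^Q \times P$, is what makes the certificate polynomial in the input $\mathcal A$ (here $|P|$ is a constant, as $\mathcal B$ is fixed, so the skeleton of the factorization has constant size). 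The only cosmetic difference from the cited proof is that you evaluate $g_i^{k_i}$ via the tail-and-cycle structure of the functional graph of $g_i$, where repeated squaring of the transformation would serve equally well; both are polynomial, so nothing hinges on this choice.
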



 By this result, we can also deduce that for bounded constraint languages,
 the constrained synchronization problem is in $\NP$.
\end{toappendix}

 The next result combines results from~\cite{DBLP:conf/ictcs/Hoffmann20}
 and~\cite{GinsburgSpanier66} to show that for bounded constrained languages,
 the constrained synchronization problem is in $\NP$.

\begin{theoremrep}
\label{thm:bounded_in_NP}
 For bounded constraint languages, the constrained synchronization
 problem is in $\NP$.
\end{theoremrep}
\begin{proof}
By a result of Ginsburg~\cite[Theorem 1.2]{GinsburgSpanier66},
 every bounded regular language $L \subseteq w_1^* \cdots w_n^*$,
 with non-empty words $w_i, i \in \{1,\ldots,n\}$, 
 is a finite union 
 of languages of the form $X_1 \cdots X_n$
 with $X_i \subseteq w_i^*$ being regular.
 Define a homomorphism $\varphi : \{a\}^* \to \Sigma^*$
 by $\varphi(a) = w$. Then, for $i \in \{1,\ldots,n\}$,
 we have $\{ a^n \mid w^n \in X_i \} = \varphi^{-1}(X_i)$.
 Hence, this language is regular and from a unary automaton
 for it, we can easily construct a polycyclic automaton for $X_i$.
 By closure properties
of the polycyclic languages under concatenation and union~\cite[Proposition 5 and 6]{DBLP:conf/ictcs/Hoffmann20}, it is implied that $L$
is polycyclic.~\qed
\end{proof}

The following condition will be useful to single out, for bounded constraint
languages, those problems that are $\NP$-complete.

\begin{proposition}[\cite{DBLP:conf/ictcs/Hoffmann20}]
\label{prop:NPc}
 Suppose we find $u, v \in \Sigma^*$ 
 such that we can write
$
 L = u v^* U
$
 for some non-empty language $U \subseteq \Sigma^*$
 with 
 $$
  u \notin \factor(v^*), \quad
  v \notin \factor(U), \quad
  \pref(v^*) \cap U = \emptyset.
 $$
 Then $L\textsc{-Constr-Sync}$ is $\NP$-hard.
\end{proposition}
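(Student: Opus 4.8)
The plan is to prove $\NP$-hardness by a logspace many-one reduction from a classical $\NP$-complete problem such as \textsc{3-Sat} (or, equivalently for this purpose, a satisfiability problem for systems of congruence constraints). The exponent of $v$ in the decomposition $L = uv^*U$ is the natural place to store a candidate solution, so the reduction builds, from a given formula, a DCSA whose only route to synchronization by a word of $L$ forces a choice of the exponent $k$ that encodes a satisfying assignment.

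Before the construction I would record the structural consequence of the three side conditions that drives everything, namely a \emph{unique factorization}. If $uv^kx = uv^{k'}x'$ with $x,x'\in U$ and, say, $k<k'$, then $x = v^{k'-k}x'$, so $v$ is a prefix of $x\in U$ and hence $v\in\factor(U)$, contradicting $v\notin\factor(U)$; therefore $k=k'$ and $x=x'$. Consequently every $w\in L$ decomposes uniquely as $w = u\,v^{k}\,x$ with $k\ge 0$ and $x\in U$. The remaining two conditions make the three reading phases ``$u$'', ``$v^{k}$'', and ``$x$'' detectable by finite-state gadgets: since $u\notin\factor(v^*)$, the initial marker cannot reappear inside a power of $v$, and since $\pref(v^*)\cap U=\emptyset$, a suffix from $U$ cannot be mistaken for a continuation of the $v$-iteration.

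The input automaton $\mathcal A$ is then assembled in three layers over the fixed alphabet $\Sigma$; fix any witness $x_0\in U$, which exists as $U\neq\emptyset$. An \emph{initialization} layer reacts to the prefix $u$ by collapsing the states onto a fixed ``ready'' configuration. A \emph{counting} layer is a bank of cyclic counters of pairwise coprime (prime) lengths on which reading $v$ acts as a simultaneous rotation; by the Chinese Remainder Theorem the residue tuple reached after $v^{k}$ faithfully represents $k$ modulo the product of these lengths, so polynomially many states distinguish an exponentially large range of values, and because $v\notin\factor(U)$ no later suffix injects further rotations, making the effective count exactly $k$. A \emph{finalization} layer reacts to $x_0$ and merges a designated pair of states precisely when the encoded $k$ satisfies the target predicate (the clauses). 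All other transitions, for every letter in every state and in particular for readings that deviate from this script, are completed so that $\mathcal A$ is a genuine DCSA that routes off-script behaviour into an unsynchronizable configuration; the phase separation above is exactly what lets me complete the automaton without opening shortcuts. One then verifies that $u\,v^{k}\,x_0$ synchronizes $\mathcal A$ iff $k$ encodes a satisfying assignment, and, using the unique factorization, that no other word of $L$ helps.

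The step I expect to be the main obstacle is an \emph{irrecoverability} property of the counting layer: each gadget must be built so that reaching a ``bad'' residue at the end of the $v$-block creates an obstruction that \emph{no} subsequent suffix $x\in U$ can repair, since otherwise a cleverly chosen $x$ (rather than a good exponent $k$) might synchronize $\mathcal A$ and destroy the equivalence. Making this blocking permanent while keeping the automaton complete and deterministic, having the finalization layer actually compute the $\NP$-hard predicate from the CRT-encoded residues, and simultaneously rendering the freedom in the choice of $x\in U$ irrelevant so that only the residue class of $k$ carries information, is the technical heart of the argument; the three hypotheses in the statement are precisely what make this clean phase-separation possible.
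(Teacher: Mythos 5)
Your overall strategy is the right family of argument, and it matches how this result is actually established: the paper itself does not prove Proposition~\ref{prop:NPc} but imports it from \cite{DBLP:conf/ictcs/Hoffmann20}, where hardness is obtained by reducing \textsc{Intersection-Non-Emptiness} for \emph{unary} DFAs (which is $\NP$-complete, see \cite{Koz77,fernau2017problems}, and which this paper also uses in the appendix proof of Theorem~\ref{thm:three_states}) to $(uv^*U)\textsc{-Constr-Sync}$: the exponent of $v$ plays the role of the common unary input word, and each unary DFA becomes a cyclic gadget that reading $v$ rotates by one step. Your bank of prime-length counters driven by 3-SAT via CRT is exactly this reduction with the $\NP$-hardness proof of the unary intersection problem inlined rather than cited, and your unique-factorization observation is correct (note it also forces $u \ne \varepsilon$ and $v \ne \varepsilon$). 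So the skeleton is sound.

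The genuine gap is precisely the step you flag and then leave open, and it is not a routine detail, because as stated your completion rule is self-defeating. You propose to route ``off-script behaviour into an unsynchronizable configuration'' and to make the blocking ``permanent''; but in a synchronization instance \emph{every} state of the constructed DCSA must be mergeable, including the blocking states themselves, so a truly permanent obstruction is a second sink, and a deterministic complete automaton with two sinks admits no synchronizing word at all --- the forward direction of your equivalence then fails even on satisfiable instances. The working proof must instead make bad states \emph{revivable}, typically only by reading a fresh occurrence of $u$, and must then argue that no such revival inside a suffix $x \in U$ can ever reach the true sink. This is where two issues you never address become essential: (i) the hypotheses do \emph{not} forbid $u \in \factor(U)$, so your initialization layer can re-fire in the middle of reading $x$; one needs a normalization making the exponent $0$ rejecting (harmless for unary intersection) together with $v \notin \factor(U)$ to conclude that after any re-initialization inside $x$ no complete $v$-rotation, hence no accepting exit, can occur before $x$ ends; (ii) the accept/reject decision must be triggered at the first position where the input stops being a prefix of a power of $v$, and it is the combination $\pref(v^*) \cap U = \emptyset$ and $v \notin \factor(U)$ that guarantees every $x \in U$ produces such a trigger within its first $\min(|x|,|v|)$ letters --- your sketch invokes ``phase separation'' but never pins down this mechanism. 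Until the automaton is written out with these devices and both directions are checked against \emph{arbitrary} $x \in U$ (not only your chosen witness $x_0$), what you have is an accurate plan, not a proof.
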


\section{General Results}
\label{sec:general_results}



Here, we state various general results, among them our three main theorems: the Ideal Separartion Theorem, the $UV^*W$-Theorem and the $uC$-Theorem.
The first result is a slight generalization of a Theorem
from~\cite[Theorem 27]{DBLP:conf/mfcs/FernauGHHVW19}.

\begin{theoremrep}
\label{thm:hom_lower_bound_complexity}
 Let $\varphi \colon \Sigma^* \to \Gamma^*$ be a homomorphism
 and $L \subseteq \Sigma^*$.
 Then $\varphi(L)\textsc{-Constr-Sync} \le_m^{\log} L\textsc{-Constr-Sync}$.
\end{theoremrep}
\begin{proof}
 
 

 Let $\mathcal A = (\Gamma, Q, \delta)$ be a DCSA.
	We want to know if it is synchronizing with respect to~$\varphi(L)$.
	Build the automaton $\mathcal A' = (\Sigma, Q, \delta')$ according to the rule
	$$
	\delta'(p, x) = q\quad\mbox{if and only if}\quad
        \delta(p, \varphi(x)) = q,$$ for $x\in\Sigma^*$.
	As~$\varphi$ is a mapping,~$\mathcal A'$ is indeed deterministic and
        complete, as~$\mathcal A$ is a DCSA. As the homomorphism~$\varphi$ is
        independent of $\mathcal A$, automaton~$\mathcal A'$ can be constructed from~$\mathcal A$
        in logarithmic space.  Next we prove that the 
        translation is indeed a reduction.

        If $u \in \varphi(L)$ is some synchronizing word for~$\mathcal A$, then there is
        some~$s\in Q$ such that $\delta(r,u)=s$, for all $r\in Q$.  By
        choice of $u$, we find $w \in L$ such that $u = \varphi(w)$. As with
        $\delta(r,\varphi(w))=s$, it follows $\delta'(r,w)=s$, hence~$w$
        is a synchronizing word for~$\mathcal A$.
        Conversely, if $w \in L$ is a synchronizing word
        for~$\mathcal A'$, then there is some $s\in Q$ such that
        $\delta'(r,w)=s$, for all $r\in Q$.  Further, $\varphi(w)$ is
        a synchronizing word for $\mathcal A$, as by definition for all $r\in
        Q$, we have $\delta(r,\varphi(w))=s$. \qed
\end{proof}


We will also need the next slight generalization of a Theorem
from~\cite[Theorem 14]{DBLP:conf/mfcs/FernauGHHVW19}.

\begin{theoremrep}
	\label{thm:add-stuff-general}
	Let $L, L' \subseteq \Sigma^*$.
	If $L \subseteq \factor(L')$
	and $L' \subseteq \factor(L)$, then
	 \[ L\textsc{-Constr-Sync}\equiv^{\log}_m L'\textsc{-Constr-Sync}. \]
\end{theoremrep}
\begin{proof}
 Let $\mathcal A = (\Sigma, Q, \delta)$ be a DCSA.
 If $\mathcal A$ has a synchronizing $u \in L$,
 then by assumption we find $x,y \in \Sigma^*$
 such that $xuy \in L'$, and by Lemma~\ref{lem:append_sync}
 the word $xuy$ also synchronizes $\mathcal A$.
 Similarly, if $\mathcal A'$ has a synchronizing word in $L'$,
 then we know it has one in $L$.\qed
\end{proof}

\begin{toappendix} 
\begin{remark}\label{remark:add-stuff} \todo{genauer?}
Considering a PDFA~$\mathcal{B} = (\Sigma, P, \mu, p_0, F)$, we conclude:
If $p \in P$ is co-accessible, then $L(\mathcal B)\textsc{\--Constr\--Sync}\equiv^{\log}_m L(\mathcal{B}_{p_0, F\cup \{p\}})\textsc{\--Constr\--Sync}$.
\end{remark}
\end{toappendix}

Next, we state a result on how we can combine languages using concatenation,
while still getting polynomial time solvable problems. 
Another result, namely Theorem~\ref{thm:uC_PSPACE-hard},
is contrary in the sense that it states conditions for which the concatenation
yields $\PSPACE$-hard problems.

\begin{toappendix}
 
 In the proof of Theorem~\ref{thm:UVW-theorem}
 we need the next observation, which is simple 
 to see, as every accepting path has to use only co-accessible
 states.

\begin{lemma}
\label{lem:co-accessible-states-only} 
 Let $\mathcal B = (\Sigma, P, \mu, p_0, F)$ be a PDFA. If $p \in P \setminus \{p_0\}$ is not co-accessible,
 then let $\mathcal B' = (\Sigma, P', \mu', p_0, F')$
 be the automaton constructed from $\mathcal B$ by removing the state $p$, i.e.,
 (i) $P' = P \setminus \{p\}$, 
 (ii) $\mu' = \mu \cap ( P' \times \Sigma \times P' )$, and
 (iii) $F' = F \setminus \{p\}$.
 Then
 $L(\mathcal B) = L(\mathcal B')$.
\end{lemma}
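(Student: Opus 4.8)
The plan is to establish the two inclusions $L(\mathcal B') \subseteq L(\mathcal B)$ and $L(\mathcal B) \subseteq L(\mathcal B')$ separately; the first is immediate from the sub-automaton structure, and the second rests on a single observation about accepting runs. As a preliminary remark I would note that $p \notin F$: otherwise $\mu(p,\varepsilon) = p \in F$ would make a final state reachable from $p$, contradicting that $p$ is not co-accessible. In particular $F' = F \setminus \{p\} = F$, although the argument below does not rely on this simplification.

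For $L(\mathcal B') \subseteq L(\mathcal B)$, I would observe that $\mathcal B'$ is literally a restriction of $\mathcal B$: we have $\mu' \subseteq \mu$ by construction and $F' = F \setminus \{p\} \subseteq F$. Hence any accepting run of $\mathcal B'$ on a word $w$, namely a state sequence starting at $p_0$, following transitions in $\mu'$, and ending in $F'$, is also an accepting run of $\mathcal B$, so $w \in L(\mathcal B)$.

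For the reverse inclusion, take $w = a_1 \cdots a_n \in L(\mathcal B)$ and consider the unique run $q_0, q_1, \ldots, q_n$ of $\mathcal B$ on $w$, with $q_0 = p_0$, $\mu(q_{i-1}, a_i) = q_i$, and $q_n \in F$. The key step is to show that this run never visits $p$, i.e. $q_i \neq p$ for all $i$. Indeed, if $q_i = p$ for some $i$, then $\mu(p, a_{i+1} \cdots a_n) = q_n \in F$, so a final state would be reachable from $p$, contradicting non-co-accessibility; the endpoint case $i = n$ is covered by the preliminary remark that $p \notin F$, and $i = 0$ is impossible since $q_0 = p_0 \neq p$. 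Once the run avoids $p$, each transition it uses lies in $P' \times \Sigma \times P'$ and hence in $\mu'$, while its endpoint satisfies $q_n \in F \setminus \{p\} = F'$; thus the same run witnesses $w \in L(\mathcal B')$.

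The only real content is the observation that a non-co-accessible state cannot lie on any accepting path, which follows directly from the definition of co-accessibility by reading off the suffix of the word from that state; everything else is bookkeeping about the restricted automaton. I do not expect a genuine obstacle, but the one point to state carefully is the boundary case where the accepting state itself equals $p$, which is precisely why isolating the remark $p \notin F$ is worthwhile.
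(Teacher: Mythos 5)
Your proof is correct and follows exactly the idea the paper uses, namely that every accepting run can only pass through co-accessible states, so deleting a non-co-accessible state (which, as you note, cannot be final) changes neither the accepting runs nor the language. The paper states this as a one-line observation; your version merely fills in the routine details, including the boundary cases $q_i = p$ at the start and end of the run.
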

\end{toappendix}

\begin{theoremrep}[$UV^*W$-Theorem]
\label{thm:UVW-theorem}
 Let $U, V, W \subseteq \Sigma^*$ be regular and $\mathcal B = (\Sigma, P, \mu, p_0, \{p_0\})$
 be a PDFA, whose initial state equals its single final state, such that
 \begin{enumerate}
 \item 
   $V = L(\mathcal B)$, 
 \item $U \subseteq \suff(V)$ and
 \item $W \subseteq \pref(V)$.
 \end{enumerate}
 Then $(UVW)\textsc{-Constr-Sync}\in \PTIME$.

\end{theoremrep}
\begin{proof}
 Let $\mathcal A = (\Sigma, Q, \delta)$ be a DCSA.

 \begin{enumerate}
 \item If $v \in V$ synchronizes $\mathcal A$, then, for $u \in U$ and $w \in W$,
  also $uvw \in UVW$ synchronizes $\mathcal A$ by Lemma~\ref{lem:append_sync}.
  
 \item Conversely, suppose $uvw \in UVW$, with $u \in U, v \in V, w \in W$,
  synchronizes $\mathcal A$.
  By assumption, there exist $x,y \in \Sigma^*$
  such that $xu \in V$ and $wy \in V$.
  As $\mu(p_0, xu) = p_0$, $\mu(p_0, wy) = p_0$
  and $\mu(p_0, v) = p_0$.
  So, $\mu(p_0, xuvwy) = p_0$, i.e., $xuvwy \in V$
  and by Lemma~\ref{lem:append_sync}
  the word $xuvy \in V$ also synchronizes $\mathcal A$.
 \end{enumerate}
 
 Hence, $\mathcal A$ has a synchronizing word in $V$
 if and only if it has a synchronizing word in $UVW$.
 So, deciding synchronizability with respect to $UVW$
 is at most as hard as deciding synchronizability
 with respect to $V$.
 By Lemma~\ref{lem:co-accessible-states-only}, we can assume
 $\mathcal B$ has only co-accessible states.
 But as $p_0$ is the only final state, this in particular
 implies that $\mathcal B$ is returning.
 Hence, by Theorem~\ref{thm:returning_poly_alg},
 $V\textsc{-Constr-Sync}\in\PTIME$ and by the above
 considerations $(UVW)\textsc{-Constr-Sync}\in\PTIME$. \qed
\end{proof}

\begin{toappendix}
  
  In the proof of the Ideal Separation Theorem, we need the following
  observation. Basically, it combines the two observations
  that for an ideal language, we have a single final sink state
  in the minimal automaton, and that every state is reachable from 
  the start state and combining an accepted word with some word reaching this state
  gives an accepted word.

  \begin{lemma}[\cite{DBLP:journals/corr/Maslennikova14,DBLP:conf/cwords/GusevMP13}]
  \label{lem:ideal_language} 
   Let $L \subseteq \Sigma^*$ be an ideal language. Then $L$
   is precisely the set of synchronizing words for the minimal automaton
   of $L$.
  \end{lemma}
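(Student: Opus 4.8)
The plan is to analyze the minimal complete DFA $\mathcal A_L = (\Sigma, Q, \delta, q_0, F)$ of $L$ through its Myhill--Nerode (left) quotients $u^{-1}L = \{ v \mid uv \in L\}$, which label the reachable states $\delta(q_0, u)$, and then read off the synchronizing words directly; throughout I assume $L \neq \emptyset$, the empty ideal being a degenerate exception. First I would establish two structural facts. (i) Since $L$ is a right ideal, for every $w \in L$ we have $w^{-1}L = \Sigma^*$; hence a class $u^{-1}L$ is accepting iff $u \in L$ iff $u^{-1}L = \Sigma^*$, so the accepting class is unique, $F = \{q_f\}$ with quotient $\Sigma^*$. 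Moreover, as $a^{-1}\Sigma^* = \Sigma^*$ for every $a \in \Sigma$, the state $q_f$ is an absorbing sink: $\delta(q_f, x) = q_f$ for all $x \in \Sigma^*$. (ii) Since $L$ is a left ideal and nonempty, for every $u \in \Sigma^*$ and every $w \in L$ we have $uw \in L$, so $u^{-1}L \neq \emptyset$; thus no quotient is empty, there is no dead state, and $\mathcal A_L$ is a complete semi-automaton (a DCSA) whose states are exactly the reachable nonempty classes $u^{-1}L$.

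For the inclusion $L \subseteq \{\text{synchronizing words}\}$, take $w \in L$ and an arbitrary state $q \in Q$. By minimality $q$ is reachable, say $q = \delta(q_0, u)$, and then $\delta(q, w) = \delta(q_0, uw)$. Since $L$ is a left ideal and $w \in L$, we have $uw \in L$, so $\delta(q_0, uw) = q_f$ by (i). As $q$ was arbitrary, $\delta(Q, w) = \{q_f\}$, and hence $w$ synchronizes $\mathcal A_L$.

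For the reverse inclusion, let $w$ be a synchronizing word, so $\delta(Q, w) = \{q\}$ for a single state $q$. Since $q_f \in Q$ is an absorbing sink by (i), the element $\delta(q_f, w) = q_f$ lies in $\delta(Q,w)$, forcing $q = q_f$. In particular $\delta(q_0, w) = q_f \in F$, so $w$ is accepted, i.e., $w \in L$. Combining the two inclusions yields that $L$ is exactly the set of synchronizing words of $\mathcal A_L$.

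The only real obstacle is the completeness requirement: synchronization is meaningful only for a complete semi-automaton, so step (ii) (using the left-ideal property together with $L \neq \emptyset$) is essential to guarantee that the minimal automaton carries no non-accepting trap state, which could otherwise be absorbing and spuriously create synchronizing words lying outside $L$. Once (i) and (ii) are in place, both inclusions are immediate quotient computations.
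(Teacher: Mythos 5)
Your proof is correct and follows essentially the same route the paper indicates: it cites the literature and sketches exactly the two observations you formalize, namely that a (two-sided) ideal's minimal automaton has a single final state which is a sink, and that every state is reachable and the left-ideal property forces $\delta(q,w)=\delta(q_0,uw)=q_f$ for any $w\in L$. Your additional care about non-emptiness of $L$ and the absence of a dead state (so that the minimal automaton is a genuine DCSA with no second absorbing state) is a worthwhile refinement of a point the paper's sketch leaves implicit, but it does not change the underlying argument.
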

  
\end{toappendix}

\begin{remark}
 Note that in Theorem~\ref{thm:UVW-theorem}, $U = \{\varepsilon\}$
 or $W = \{\varepsilon\}$ is possible. In particular, 
 $L(\mathcal B)\textsc{-Constr-Sync}\in \PTIME$
 for every PDFA $\mathcal B = (\Sigma, P, \mu, p_0, \{p_0\})$.
\end{remark}

The next theorem is useful, as it allows us to show $\PSPACE$-hardness
by reducing the problem, especially ones
that are written as unions, to known $\PSPACE$-hard problems.
Please see Example~\ref{ex:gen_results},
or the proof sketch of Theorem~\ref{thm:three_states}, for applications.

\begin{theoremrep}[Ideal Separation Theorem]
\label{thm:ideal_hardness}
 Let $I \subseteq \Sigma^*$ be a fixed regular ideal language.
 Suppose $L \subseteq \Sigma^*$ is any regular language,
 then
 $$
  (I \cap L)\textsc{-Constr-Sync} \le_{m}^{\log} L\textsc{-Constr-Sync}.
 $$
 In particular, let $u \in \Sigma^*$ and $L \subseteq \Sigma^*$. 
 Then $(L\cap \Sigma^*u\Sigma^*)\textsc{-Constr-Sync} \le_{m}^{\log} L\textsc{-Constr-Sync}$.
\end{theoremrep}
\begin{proof}
 By Lemma~\ref{lem:ideal_language},
 we can suppose we have a DCSA $\mathcal C = (\Sigma, T, \eta)$ 
 whose set of synchronizing words is precisely $I$.
 Let $\mathcal A = (\Sigma, Q, \delta)$ be an input DCSA
 for which we want to know if it has a synchronizing word
 in $L \cap I$. Set $\mathcal A' = (\Sigma, Q \times T, \gamma)$
 with
 $
  \gamma( (q, t), x) = (\delta(q,x), \eta(t,x))
 $
 for $x \in \Sigma$.
 We claim that $\mathcal A'$ has a synchronizing word in $L$
 if and only if $\mathcal A$ has a synchronizing word in $L \cap I$.
 
 \begin{enumerate}
 \item Suppose we have some $w \in L$ such that $|\gamma(Q\times T, w)| = 1$.
 Then, by construction of $\mathcal A'$,
 $\delta(Q, w) = q$ for some $q \in Q$ and $\eta(T, w) = s$ for some $s \in T$.
 The last equation implies $w \in I$ and the former that $w$ is a synchronizing word for~$\mathcal A$.
 
 \item Suppose we have some $w \in I \cap L$ and $q \in Q$ such that $\delta(Q, w) = \{q\}$.
  Then, as $w \in I$, we have $\eta(T, w) = s$ for some $s \in T$. So,
  $\gamma(Q\times T, w) = \{ (\delta(q, w), \eta(t, w)) \mid q \in Q, t \in T \} = \{ (q, t) \}$, i.e. $w$ synchronizes $\mathcal A'$.
 \end{enumerate}
 Hence, we can solve $(I \cap L)\textsc{-Constr-Sync}$
 with $L\textsc{-Constr-Sync}$, where the reduction, as $I$, and hence its minimal automaton,  is fixed,
 could be done in polynomial time, as it is essentially the product automaton
 construction of the input DCSA with the minimal automaton of $I$.
 
 \medskip 
 
 For the additional sentence, note that, for $u \in \Sigma^*$,
 the language $\Sigma^*u \Sigma^*$ is an ideal.~\qed
\end{proof}

Most of the time, we will apply Theorem~\ref{thm:ideal_hardness}
with \emph{principal ideals}, i.e., ideals of the form $\Sigma^* u \Sigma^*$
for $u \in \Sigma^*$.
The next proposition is a broad generalization of arguments
previously used to establish $\PSPACE$-hardness~\cite{DBLP:conf/mfcs/FernauGHHVW19,DBLP:conf/cocoon/Hoffmann20}.

\begin{theoremrep}[uC-Theorem] \todo{Gamma in zweiten Teil noch, umbenennen uCu?}
\label{thm:uC_PSPACE-hard}
%
 Suppose $u \in \Sigma^+$ is a non-empty word.
 \begin{enumerate}
 \item  Let $C \subseteq \Sigma^*$ be a \emph{finite} prefix-free set of cardinality at least two
 with $C^* \cap \Sigma^* u \Sigma^* = \emptyset$.
 
 \item Let $\Gamma\subseteq \Sigma$ be such that $u$ uses
 at least one symbol not in $\Gamma$. More precisely, if $u = u_1 \cdots u_n$ with $u_1,\ldots, u_n \in \Sigma$,
 then $\{ u_1, \ldots, u_n \}\setminus \Gamma\ne \emptyset$.
 \end{enumerate}
 %
 %
 Then, the problem $(\Gamma^*uC^*)\textsc{-Constr-Sync}$
 is $\PSPACE$-hard.
 If, additionally, we have $\suff(u) \cap \pref(u) = \{\varepsilon,u\}$
 and the following is true:
 \begin{quote}
  There exists $x \in C$ such that,  
   for $v,w \in \Sigma^*$, if $vxw \in (C \cup \{u\})^*$, then $vx \in (C \cup \{u\})^*$.
 \end{quote}
 Then, $(C^*u\Gamma^*)\textsc{-Constr-Sync}$ is $\PSPACE$-hard.
\end{theoremrep}
\begin{proof} 

 We show both $\PSPACE$-hardness statements
 separately, where for the $\PSPACE$-hardness
 of $(C^*u\Gamma^*)$\textsc{-Constr}\textsc{-Sync}, we first
 show that $(C^*u^*)$\textsc{-Con\-str-Sync}
 is $\PSPACE$-hard and use this result
 to show it for the original constraint language $C^*u\Gamma^*$.
 
 \medskip 

 \noindent\underline{The problem $(\Gamma^*uC^*)\textsc{-Constr-Sync}$ is $\PSPACE$-hard.}
 
 \medskip 
 
  
 %
 Choose two distinct $x,y \in C$.
 Let $\Delta = \{a,b,c\}$ with $\Delta \cap \Sigma = \emptyset$
 and $\mathcal A = (\Delta, Q, \delta)$
 be a semi-automaton for which we want to know if it has a synchronizing word in $a(b+c)^*$.
 First, let $\mathcal C = (\Sigma, T, \eta, t_0, F)$ be the minimal automaton
 for $\Sigma^* u \Sigma^*$.
 As it only has to detect if $u$ is read at least once,
 we can deduce that $\mathcal C$ has a single final state, which
 is also a sink state. Write $F = \{t_f\}$.
 Set $S = \delta(Q, a)$ and fix an arbitrary $s' \in S$.
 Then construct $\mathcal A' = (\Sigma, Q', \delta')$
 with
 \[
  Q' = Q'' \cup S \times (T \setminus \{t_f\})
 \]
 for $Q'' = \{ q_w \mid w \in \pref(C) \setminus C, q \in Q \}$.
 We identify $Q$ with $\{ q_{\varepsilon} \mid q \in Q \}$, hence $Q \subseteq Q'$,
 but the states $q_w$, $w \ne \varepsilon$, are new and disjoint to\footnote{Note that by choice of
 notation, a correspondence between the states $q$ and $q_w$
 for $q \in Q$ and $w \in \pref(C) \setminus\{\varepsilon\}$ was set up, which
 will be used in the following constructions.} $Q$.
 For $q_w$ with $w \in \pref(C) \setminus C$ and $z \in \Sigma$ set
 \[
  \delta'(q_w, z) = \left\{
  \begin{array}{ll}
  \delta(q, b) & \mbox{if } wz = x; \\ 
  \delta(q, c) & \mbox{if } wz = y; \\
  q            & \mbox{if } wz \in C \setminus\{x,y\}; \\
  s'           & \mbox{if } wz \notin \pref(C); \\
  q_{wz}       & \mbox{if } wz \in \pref(C) \setminus C.
  \end{array}
  \right.
 \]
 and, for $q \in S$, $r \in T \setminus\{t_f\}$ and $z \in \Sigma$,\todo{intuition, skizze?}
 \[ 
  \delta'((q, r), z) = \left\{ 
   \begin{array}{ll}
     (q, \eta(r, z)) & \mbox{if } \eta(r,z) \ne t_f;  \\ 
     q_{\varepsilon} & \mbox{if } \eta(r,z) = t_f. \\ 
   \end{array}
   \right.
 \]
 Let $\varphi : \{b,c\}^* \to \{x,y\}^*$ with $\varphi(b) = x$ and $\varphi(c) = y$.
 Then we have for each $q \in Q$ and $w \in \{b,c\}^*$,
 by construction of $\mathcal A'$,
 \begin{equation}\label{eqn:Abar} 
     \delta'(q, \varphi(w)) = \delta(q, w).
 \end{equation}

 Finally, we show that we have a synchronizing word for $\mathcal A'$
 in $\Gamma^*uC^*$ if and only if $\mathcal A$
 has a synchronizing word in $a(b+c)^*$.
 
 \begin{enumerate}
 \item Suppose $\mathcal A'$ has a synchronizing word in $\Gamma^* u C^*$.
     
     


     Let the synchronizing word be $v_1uv_2$ with $v_1 \in \Gamma^*$
     and $v_2 \in C^*$.
     By construction of $\mathcal A'$,
     we have $\delta'(Q'', v_1) \in Q''$.
     Furthermore,
     as $C^* \cap \Sigma^* u \Sigma^*$,
     we have $\delta(Q'', u) = \{s'\}$,
     for if for some $w,w' \in \pref(C) \setminus C$
     we have $\delta'(q_w, v) = q_{w'}$,
     then $v$ is a factor of some word in $C^+$ by construction of $\mathcal A'$.
     As $u$ contains some symbol not in $\Gamma$,
     we have $\Sigma^* u \Sigma^* \cap \Gamma^* = \emptyset$,
     and so $\eta(t_0, z) \ne t_f$.
     Hence, $\delta'(S \times \{t_0\}, v_1) = S \times ( T \setminus \{ t_f \} )$.
     Also $\delta'(S \times T \setminus\{t_0, t_f\}, v_1) \subseteq S \times T \setminus\{t_f\} \cup Q''$.
     So,
     $
      \delta'(S \times \{t_0\}, v_1 u) = S.
     $
     Hence
     \begin{align*}
      \delta'(Q'' \cup S \times T\setminus\{t_f\}, v_1u) 
       & = \delta'(Q'', v_1u) \cup \delta'(S \times T\setminus\{t_f\}, v_1u) \\
       & = \{s'\} \cup S \\
       & = S.
     \end{align*}
     We can assume $v_2 \in \{x,y\}^*$, as
     for any $z \in C \setminus \{x,y\}$ and $q \in Q$, we
     have 
     $
      \delta'(q, z) = q.
     $
     Hence, if $v_2 = u_1 \cdots u_n$ with $u_1, \ldots, u_n \in C$,
     where this decomposition is unique as $C$ is prefix-free,
     we can remove all factors $u_i \in C\setminus\{x,y\}$, $i \in \{1,\ldots,n\}$,
     to get a new word $v_2' \in \{x,y\}^*$ such that
     $\delta'(Q'', v_2) = \delta(Q'', v_2')$.
     Let $w' \in \{b,c\}^*$ be such that $\varphi(w') = v_2$.
     As $S \subseteq Q$, by Equation~\eqref{eqn:Abar},
     and as $|\delta'(S, v_2)| = 1$,
     we find $|\delta(S, w')| = 1$.
     Hence $|\delta(Q, aw')| = 1$.
     
  \item Suppose $\mathcal A$ has a synchronizing word $w \in a(b+c)^*$.
  
   Write $w = av$ with $v \in (b+c)^*$.
   By construction of $\mathcal A'$,
   \[
    \delta'(Q'' \cup S \times (T \setminus \{t_f\}), u) = S
   \]
   by similar arguments as in the first case above.
   As, by assumption $\delta(Q, av) = \delta(S, v)$
   is a singleton set. Hence, by Equation~\eqref{eqn:Abar},
   we find that $\delta'(S, \varphi(v))$
   is a singleton set. Combining all arguments,
   we find that $u\varphi(v) \in \Gamma^*uC^*$
   is a synchronizing word for $\mathcal A$.
 \end{enumerate}
 So, we have reduced the problem of synchronization with the constraint language $a(b+c)^*$, which is
 known to be \PSPACE-complete by Theorem~\ref{thm:classification_MFCS_paper}, to our
 problem, which gives the claim. \qed

 \medskip 
 
 \noindent\underline{The problem $(C^*u\Gamma^*)\textsc{-Constr-Sync}$ is $\PSPACE$-hard.}
 
 \medskip 
 We show it first for the constraint $C^*u$, and then use this result
 to show it for $C^*u\Gamma^*$.

 Choose two distinct $x,y \in C$.
 Let $\Delta = \{a,b,c\}$ with $\Delta \cap \Sigma = \emptyset$
 and $\mathcal A = (\Delta, Q, \delta)$ be a semi-automaton
 for which we want to know if it has a synchronizing word
 in $(a+b)^*c$. As shown in Theorem~\ref{thm:classification_MFCS_paper},
 this is a $\PSPACE$-complete problem, and we will reduce
 it to our problem at hand. 
 Looking at the reduction
 used in~\cite{DBLP:conf/mfcs/FernauGHHVW19}, we see that 
 $((a+b)^*c)\textsc{-Constr-Sync}$ is $\PSPACE$-hard
 even for input semi-automata with a sink state $s \in Q$ that is only reachable
 by the letter $c$, i.e, if $q \in Q \setminus\{s\}$, $x \in \Delta$
 and $\delta(q, x) = s$ implies $x = c$.
 We will use this observation, where $s \in Q$.
 denotes the sink state of $\mathcal A$.

 Define $\mathcal A' = (\Sigma, Q', \delta')$
 with (note that $C \cup \{u\}$ is prefix-free)
 \[
  Q' = \{ q_w \mid w \in \pref(C \cup \{u\}) \setminus (C \cup \{u\}) \mbox{ and } q \in Q \setminus\{s\} \} \cup\{ s \}.
 \]
 We identify $Q \setminus\{s\}$
 with $\{ q_{\varepsilon} \mid q \in Q \setminus\{s\} \}$. Hence $Q \subseteq Q'$,
 but the states $q_w$, $w \ne \varepsilon$, are new and disjoint to\footnote{Note that by choice of
 notation, a correspondence between the states $q$ and $q_w$
 for $q \in Q$ and $w \in \pref(C) \setminus\{\varepsilon\}$ was set up, which
 will be used in the following constructions.} $Q$.
 For $q_w$, $q \in Q \setminus\{s\}$, with $w \in \pref(C \cup \{u\}) \setminus (C \cup \{u\})$ and $z \in \Sigma$ set
 \[
 %
  \delta'(q_w, z) = \left\{
  \begin{array}{ll}
  \delta(q, a) & \mbox{if } wz = x; \\ 
  \delta(q, b) & \mbox{if } wz = y; \\
  \delta(q, c) & \mbox{if } wz = u; \\ 
  q            & \mbox{if } wz \in C \setminus\{x,y\}; \\
  s            & \mbox{if } wz \notin \pref(C); \\
  q_{wz}       & \mbox{if } wz \in \pref(C) \setminus C; \\ 
  q_{wz}       & \mbox{if } wz \in \pref(u) \setminus \{u\}. 
  \end{array}
  \right.
 \]
 and $\delta'(s, z) = s$. Let
 $\varphi : \{a,b,c\}^* \to \Sigma^*$ 
 be the homomorphism given by $\varphi(a) = x$, $\varphi(b) = y$
 and $\varphi(c) = u$. Then, by construction of $\mathcal A'$,
 for any $q \in Q$ and $v \in \{a,b,c\}^*$, we have
 \begin{equation}\label{eqn:Cu_hom} 
  \delta(q, v) = \delta'(q, \varphi(v)).
 \end{equation}
 
 \begin{claiminproof}
  The semi-automaton $\mathcal A'$ has a synchronizing word in $C^*u$
  if and only if $\mathcal A$ has a synchronizing word in $(a+b)^*c$.
 \end{claiminproof}
 \begin{claimproof}
 First, suppose $w \in (a+b)^*c$ synchronizes $\mathcal A$.
 As $s$ is a sink state, we have $\delta(Q, w) = \{s\}$.
 With the assumptions $\suff(u) \cap \pref(u) = \{u,\varepsilon\}$
 and $C^* \cap \Sigma^* u \Sigma^* = \emptyset$, we can deduce
 that $\delta'(Q' \setminus Q, u) = \{s\}$.

   \begin{claiminproof}
   We must have $\delta'(Q' \setminus Q, u) = \{s\}$.
  \end{claiminproof}
  \begin{claimproof}
   First, assume $\delta'(q_{w_1}, u) = q'_{w_2}$ with $w_1, w_2 \in \pref(C \cup \{u\}) \setminus (\{ u, \varepsilon \} \cup C)$
   and $q, q' \in Q \setminus \{s\}$.
   In this case, we can write $w_1 u = w_3 w_2$
   with $w_3 \in (C \cup \{u\})^*$.
   Write $w_3 = w_4 w_5$ with $w_4 \in C \cup \{u\}$
   and $w_5 \in (C \cup \{u\})^*$.
   As $w_1 \in \pref(C\cup \{u\}) \setminus (C\cup\{u\})$
   and $C \cup \{u\}$ is a prefix code\footnote{A prefix code is the same as a prefix-free set. I only call them codes here to emphasize that every word in $(C\cup\{u\})^+$ has a unique factorization,
   the defining property of codes.}, $w_4$ could not be a prefix of $w_1$
   and so $w_1$ must be a proper prefix of $w_4$.
   This yields $w_4 \ne u$, for otherwise some non-trivial 
   suffix of $u$ is a non-trivial prefix of $u$, which is excluded.
   We have $|w_1| + |u| = |w_4| + |w_5| + |w_2|$,
   and as $w_1$ is a proper prefix of $w_4$,
   this yields $|u| > |w_5| + |w_2|$.
   Then, as $|w_5| < |u|$, we must have $w_5 \in C^*$.
   So, as $w_4 \in C$ and $w_5 \in C^*$,
   we have $w_3 = w_4 w_5 \in C^*$.
   Now, as $w_2$ is a non-trivial suffix of $u$, we
   also must have $w_2 \in \pref(C)\setminus C$.
   However, $w_3 \in C^*$ and $w_2 \in \pref(C) \setminus C$
   would imply $\Sigma^*u\Sigma^* \cap C^* \ne \emptyset$,
   which is excluded. Hence, this case is not possible.

   Next, assume $\delta'(q_{w_1}, u) = q$, $q \in Q \setminus \{s\}$ and $w_1 \in \pref(C \cup \{u\}) \setminus (\{ u, \varepsilon \} \cup C)$.
   In that case $w_1u \in (C \cup \{u\})^*$. 
  Write $w_1u = w_2 w_3$ with $w_2 \in C \cup \{u\}$ and $w_3 \in (C\cup \{u\})^*$.
  Then, as $C \cup \{u\}$ is a prefix code, $w_1$ is a proper prefix of $w_2$.
  Also, a non-trivial suffix of $w_2$ is a non-trivial prefix of $u$,
  which implies $w_2 \ne u$.
  As $|w_3| < |u|$, we also find $w_3 \in C^*$.
  But then, as $\Sigma^* u \Sigma^* \cap C^* = \emptyset$,
  this is not possible and so the above could not happen.
  
  Hence, the only case left is that for any $q_{w_1}$
  with $w_1 \in \pref(C \cup \{u\}) \setminus (C \cup \{u\})$
  we must have $\delta'(q_{w_1}, u) = s$, which shows the claim.
  \end{claimproof}

  \medskip

  \todo{genauer?}
  For $x \in C$, which we chose above in the construction,
  assume it is the word in $C$ such that, for $v,v' \in \Sigma^*$, if $vxv' \in (C \cup \{u\})^*$,
  then $vx \in (C \cup \{u\})^*$.
  Let $q_v \in Q' \setminus Q$
  with $v \in \pref(C \cup \{u\}) \setminus (C \cup \{u\})$.
  By the operational mode of $\mathcal A'$,
  if $\delta(q_v, x) \notin Q$, then there exists a non-empty $v \in \Sigma^+$
  such that $vxv' \in (C \cup \{u\})^*$. By assumption,
  this yields $vx \in (C \cup \{u\})^*$.
  However, also by the operational mode of $\mathcal A'$,
  this yields $\delta(q_v, x) \in Q$.
  So, we must have $\delta(q_v, x) \in Q$.
  Then, together with Equation~\eqref{eqn:Cu_hom}, we find
  \[
  \delta'(Q', x\varphi(w)) \subseteq \delta(Q,w) = \{s\},
 \]
 and the word $x\varphi(w) \in C^*u$ synchronizes $\mathcal A'$.

  Now, conversely, suppose $w = vu$ with $v \in C^*$ synchronizes $\mathcal A'$.
  Then, as $s$ is a sink state, $\delta'(Q', w) = \{s\}$.
  Let $v' \in \{x,y\}^*$ be the word that results out of $v$
  by deleting all factors that are words in $C \setminus \{x,y\}$.
  Note that, as $C$ is prefix-free, we have a unique factorization
  for $v$.
  As any word in $C \setminus \{x,y\}$ maps every state in $Q$
  to itself, we have, for any $q \in Q$,
  that $\delta'(q, v) = \delta(q, v')$.

  Let $q \in Q \setminus \{s\}$.
  Let $\varphi : \{a,b,c\}^* \to \Sigma^*$
  with $\varphi(a) = x$, $\varphi(b) = y$
  and $\varphi(c) = u$ and let $w' \in \{a,b,c\}^*$
  with $\varphi(w') = v'u$.
  By Equation~\eqref{eqn:Cu_hom},
  \[
   \delta'(q, v'u) = \delta(q, w').
  \]
  Hence, $\{s\} = \delta(Q, v'u) = \delta(Q, w')$
  and so $w'$ synchronizes $\mathcal A$.
 \end{claimproof}

 However, up to now we have only shown hardness
 for the constraint $C^*u$, but we want it for $C^*u\Gamma$
 as stated in the theorem.
 
 \begin{claiminproof}
  The automaton $\mathcal A'$
  has a synchronizing word in $C^*u$ if and only if it has a synchronizing word
  in $C^*u\Gamma^*$.
 \end{claiminproof}
 \begin{claimproof}
  As $C^*u \subseteq C^*u\Gamma^*$, we only have to show one implication.
  So, assume $\mathcal A'$
  has a synchronizing word $w \in C^*u \Gamma^*$.
  Write $w = x'u y'$ with $x' \in C^*$ and $y' \in \Gamma^*$.
  By the assumptions, we have $u \notin \Gamma^*$.
  If $q \in Q \setminus \{s\}$, as, by assumption about $\mathcal A$,
  only $u$ maps any such state
  to $s$, we have $\delta'(q, y') \ne s$.
  As shown above, $\delta'(Q', x'u) \subseteq Q$.
  Combining both observations, we must have $\delta'(Q', x'u) = \{s\}$,
  for otherwise, we cannot have $\delta'(Q', x'uy') = \{s\}$.
 \end{claimproof}
 
 \noindent This finishes the proof.~\qed 
\end{proof}

%

\begin{example}
\label{ex:abcbc_MFCS19}
 Set $L = (a+b)^*ac(b+c)^*$.
 Using Theorem~\ref{thm:uC_PSPACE-hard}
 with $\Gamma = \{a,b\}$, $u = ac$
 and $C = \{b,c\}$ gives $\PSPACE$-hardness.
 Hence, by Theorem~\ref{thm:L-contr-sync-PSPACE},
 it is $\PSPACE$-complete.
 Note that $(a+b)^*c(b+c)^* \cap \Sigma^*ac\Sigma^* = L$.
 Hence, together with Theorem~\ref{thm:ideal_hardness},
 we get $\PSPACE$-completeness for $(a+b)^*c(b+c)^*$.
 For the latter language, this was already shown in~\cite{DBLP:conf/mfcs/FernauGHHVW19}, as stated in Theorem~\ref{thm:classification_MFCS_paper},
 by more elementary means, i.e., by giving a reduction from a different problem.
\end{example}

\begin{example}
\label{ex:gen_results}
 For the following $L \subseteq \{a,b\}^*$
 we have that $L\textsc{-Constr-Sync}$
 is \PSPACE-hard. For the first two, this is implied by a straightforward application
 of Theorem~\ref{thm:uC_PSPACE-hard}, for the last one a more detailed proof is given.
 \begin{enumerate}
 \item $L= \Gamma^*aa(ba+bb)^*$ for $\Gamma\subseteq\{b\}$. 
 \item $L= \Gamma^*aba(a+bb)^*$ for $\Gamma\subseteq\{b\}$.
 \item $L= b^*a(a+ba)^*$.
   Then $L = b^*bba(a+ba)^* \cup ba(a+ba)^* \cup a(a+ba)^*$.
   Set $U = L \cap \Sigma^*bba\Sigma^* = b^*bba(a+ba)^*$.
   By Theorem~\ref{thm:ideal_hardness}, 
   \[ 
   U\textsc{-Constr-Sync} \le_m^{\log} L\textsc{-Constr-Sync}.
   \]
   For $U$, with $\Gamma = \{b\}$, $u = bba$ and $C = \{a,ba\}$
   and Theorem~\ref{thm:uC_PSPACE-hard},
   we find that $U\textsc{-Constr-Sync}$ is $\PSPACE$-hard.
   So, $L\textsc{-Constr-Sync}$ is also $\PSPACE$-hard
  \end{enumerate}
\end{example}

\section{Application to Small Constraint Automata}

%
%
%
%
%
%
%
%
%
%
%
%
%
%

Here, we apply the results obtained in Section~\ref{sec:general_results}.
In Subsection~\ref{sec:two-state} we will give a complete overview of the complexity
landscape for two-state constraint automata over an arbitrary alphabet, thus
extending a result from~\cite{DBLP:conf/mfcs/FernauGHHVW19}, where it was only proven
for an at most ternary alphabet. In Subsection~\ref{sec:three-states} we will
give a complete overview of the complexity landscape for three-state 
constraint automata over a binary alphabet, the least number of states over a binary
alphabet such that we get $\PSPACE$-complete and $\NP$-complete constrained
synchronization problems~\cite{DBLP:conf/mfcs/FernauGHHVW19}.

\medskip

\noindent\underline{Notational Conventions in this Section:} 
Let $\mathcal B = (\Sigma, P, \mu, p_0, F)$ be a constraint PDFA
with $|P| = n$. Here, we will denote the states by natural numbers
$P = \{1,\ldots, n\}$, and we will assume that $1$ always denotes the
start state, i.e., $p_0 = 1$. In this section, $\mathcal B$
will always denote the fixed constraint PDFA. 
By Lemma~\ref{lem:start-and-final},
for $|P| = 2$, we can assume $F = \{2\}$. We will show in Section~\ref{sec:three-states},
stated in Lemma~\ref{lem:start-final-three-states}, that also for $|P| = 3$
we can assume $F = \{3\}$. So, if nothing else is said, by default we will
assume $F = \{n\}$ in the rest of this paper.
Also, for a fixed constraint automaton\footnote{Note that this notation
only makes sense with respect to a fixed alphabet and a fixed automaton, or said
differently we have implicitly defined a function dependent on both of these parameters. But every more formal way of writing this might be cumbersome, and as the automaton
used in this notation is always the (fixed) constraint automaton, in the following, usage of this notation should pose no problems. It is just a shorthand whose usage is restricted
to the next two sections.}, 
we set $\Sigma_{ij}:=\{\,a\in\Sigma\mid \mu(i,a)=j\,\}$ for $1\le i,j\le n$.
As $\mathcal B$ is deterministic, $\Sigma_{i1}\cap\Sigma_{i2}=\emptyset$.

\subsection{Two States and Arbitrary Alphabet}
\label{sec:two-state}


Let $\mathcal B = (\Sigma, P, \mu, p_0, F)$
be a two-state constraint PDFA. 
Recall the definitions
of the sets $\Sigma_{i,j}$, $1 \le i,j \le 2$ and that here, by our notational conventions,
$P = \{1,2\}$, $p_0 = 1$ and $F = \{2\}$. 
In general, for two states, we have
\[
 L(\mathcal B) = (\Sigma_{1,1}^* \Sigma_{1,2} \Sigma_{2,2}^*  \Sigma_{2,1})^*\Sigma_{1,1}^* \Sigma_{1,2} \Sigma_{2,2}^*.
\]
First, as shown in~\cite{DBLP:conf/mfcs/FernauGHHVW19},
for two-state constraint automata, some easy cases could be excluded from further analysis 
by the next result, as they give polynomial time solvable instances.

\begin{proposition}[\cite{DBLP:conf/mfcs/FernauGHHVW19}]
\label{prop:easy_cases}
	If one of the following conditions hold, 
	then  $L(\mathcal{B}_{1,\{2\}})\-\textsc{-Constr\--Sync}\in\PTIME$:
	(1) $\Sigma_{1,2}=\emptyset$, (2) $\Sigma_{2,1}\neq\emptyset$, (3) $\Sigma_{1,1}\cup\Sigma_{1,2}\subseteq \Sigma_{2,2}$, or (4) $\Sigma_{1,1}\cup\Sigma_{2,2}=\emptyset$. 
	\end{proposition}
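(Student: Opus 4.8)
The plan is to treat the four sufficient conditions separately, reducing each to a tool already established in the excerpt. A useful preliminary observation is that in conditions (3) and (4) I may assume $\Sigma_{2,1}=\emptyset$: if $\Sigma_{2,1}\neq\emptyset$, then condition (2) already applies to the same automaton, so those cases are subsumed. For condition (1), $\Sigma_{1,2}=\emptyset$ means there is no transition from the start state $1$ toward the unique final state $2$; reading the general form $L(\mathcal B)=(\Sigma_{1,1}^*\Sigma_{1,2}\Sigma_{2,2}^*\Sigma_{2,1})^*\Sigma_{1,1}^*\Sigma_{1,2}\Sigma_{2,2}^*$, the central factor $\Sigma_{1,2}$ forces $L(\mathcal B)=\emptyset$, so every instance is a no-instance and the problem is trivially in $\PTIME$. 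For condition (2), $\Sigma_{2,1}\neq\emptyset$ supplies a letter sending state $2$ back to $1$, while $1$ reaches itself via $\varepsilon$; hence $\mathcal B$ is returning and Theorem~\ref{thm:returning_poly_alg} gives membership in $\PTIME$.

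For condition (3), assuming $\Sigma_{2,1}=\emptyset$ the general form collapses to $L(\mathcal B)=\Sigma_{1,1}^*\Sigma_{1,2}\Sigma_{2,2}^*$, and I would invoke the $UV^*W$-Theorem (Theorem~\ref{thm:UVW-theorem}) with the decomposition $V=\Sigma_{2,2}^*$, $U=\Sigma_{1,1}^*\Sigma_{1,2}$ and $W=\{\varepsilon\}$. Here $V$ is recognized by the one-state automaton looping on $\Sigma_{2,2}$ whose start state is its unique final state, as the theorem requires. The hypothesis $W=\{\varepsilon\}\subseteq\pref(V)$ is immediate, and the point where condition (3) is used is the verification $U\subseteq\suff(V)$: from $\Sigma_{1,1}\cup\Sigma_{1,2}\subseteq\Sigma_{2,2}$ one gets $\Sigma_{1,1}^*\Sigma_{1,2}\subseteq\Sigma_{2,2}^*=\suff(\Sigma_{2,2}^*)$, so the theorem applies and yields $\PTIME$. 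For condition (4), $\Sigma_{1,1}=\Sigma_{2,2}=\emptyset$, and again assuming $\Sigma_{2,1}=\emptyset$ the form degenerates to the single-letter language $L(\mathcal B)=\Sigma_{1,2}$, which is finite, so Lemma~\ref{lem:finite} gives $\PTIME$.

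I do not expect a genuine obstacle here, since the whole statement is designed to be cleared by the available machinery; the only step requiring care is condition (3). There one must resist attempting a direct factor-equivalence argument and instead pick the $UV^*W$-decomposition in which the self-loop language $\Sigma_{2,2}^*$ plays the role of $V$, so that the single inclusion hypothesis of Theorem~\ref{thm:UVW-theorem} becomes exactly the stated set-containment $\Sigma_{1,1}\cup\Sigma_{1,2}\subseteq\Sigma_{2,2}$. Keeping track of the reduction of cases (3) and (4) to case (2) when $\Sigma_{2,1}\neq\emptyset$ is the only bookkeeping needed to make the four implications independent as stated.
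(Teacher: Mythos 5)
Your proof is correct, but note that the paper itself never proves this proposition: it is imported verbatim from~\cite{DBLP:conf/mfcs/FernauGHHVW19}, where it was established by direct, case-by-case polynomial-time arguments for each of the four conditions. What you have done instead is rederive the result from the general machinery of \emph{this} paper, which is a genuinely different (and arguably cleaner) route: condition (2) via the returning criterion of Theorem~\ref{thm:returning_poly_alg}, condition (3) via the $UV^*W$-Theorem (Theorem~\ref{thm:UVW-theorem}) with the decomposition $U=\Sigma_{1,1}^*\Sigma_{1,2}$, $V=\Sigma_{2,2}^*$, $W=\{\varepsilon\}$, and conditions (1) and (4) via emptiness resp.\ finiteness (Lemma~\ref{lem:finite}). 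All steps check out: the subsumption of (3) and (4) under (2) when $\Sigma_{2,1}\neq\emptyset$ is the right bookkeeping, the inclusion $\Sigma_{1,1}^*\Sigma_{1,2}\subseteq\Sigma_{2,2}^*=\suff(\Sigma_{2,2}^*)$ is exactly where the hypothesis of (3) enters, and the one-state automaton for $\Sigma_{2,2}^*$ satisfies the start-equals-final requirement of Theorem~\ref{thm:UVW-theorem}. The only degenerate point worth a remark is that under condition (3) the set $\Sigma_{1,2}$ may be empty, making $U=\emptyset$; the conclusion of the $UV^*W$-Theorem still holds trivially then (the constraint language is empty), or one can observe that this subcase is already covered by condition (1). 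What your approach buys is self-containedness: the two-state classification of Section~\ref{sec:two-state} no longer needs to lean on the external case analysis of the cited paper for its polynomial-time cases, illustrating that the paper's three general theorems really do subsume the earlier ad hoc arguments.
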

	
Next, we will single out those cases that give $\PSPACE$-hard problem 
in Lemma~\ref{lem:Sigma11_2_PSPACE-hard}
and Lemma~\ref{lem:Sigma22_2_PSPACE-hard}.
Finally, in Theorem~\ref{thm:two-state-classification}
we will combine these results and show that the remaining cases all give polynomial
time solvable instances.

\begin{lemma}
\label{lem:Sigma22_2_PSPACE-hard}
 Suppose $(\Sigma_{1,1} \cup \Sigma_{1,2}) \setminus \Sigma_{2,2} \ne \emptyset$,
 $\Sigma_{1,2} \ne \emptyset$, $\Sigma_{2,1} = \emptyset$
 and $|\Sigma_{2,2}|\ge 2$.
 Then $L(\mathcal B)\textsc{-Constr-Sync}$ is $\PSPACE$-hard.
\end{lemma}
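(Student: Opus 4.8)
The plan is to derive $\PSPACE$-hardness by combining the $uC$-Theorem (Theorem~\ref{thm:uC_PSPACE-hard}) with the Ideal Separation Theorem (Theorem~\ref{thm:ideal_hardness}). First I would use the hypothesis $\Sigma_{2,1} = \emptyset$ to simplify the general two-state expression: the repeated block $\Sigma_{1,1}^*\Sigma_{1,2}\Sigma_{2,2}^*\Sigma_{2,1}$ contains the factor $\Sigma_{2,1}=\emptyset$ and hence collapses, so that $L(\mathcal B) = \Sigma_{1,1}^*\Sigma_{1,2}\Sigma_{2,2}^*$. The target is to exhibit a word $u$ and its principal ideal $\Sigma^* u\Sigma^*$ such that $L(\mathcal B)\cap\Sigma^* u\Sigma^*$ is exactly of the form $\Gamma^* u C^*$ with $\Gamma = \Sigma_{1,1}$ and $C = \Sigma_{2,2}$. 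Since the alphabet is finite and $|\Sigma_{2,2}|\ge 2$, the set $C = \Sigma_{2,2}$ is a finite prefix-free set of single letters of cardinality at least two, and $C^* = \Sigma_{2,2}^*$. The Ideal Separation Theorem would then give $(\Gamma^* u C^*)\textsc{-Constr-Sync}\le_m^{\log} L(\mathcal B)\textsc{-Constr-Sync}$, while the first part of the $uC$-Theorem supplies the $\PSPACE$-hardness of $(\Gamma^* u C^*)\textsc{-Constr-Sync}$.

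The choice of $u$ is where a case distinction enters, and this is the main obstacle. The hypothesis $(\Sigma_{1,1}\cup\Sigma_{1,2})\setminus\Sigma_{2,2}\ne\emptyset$ only guarantees a witness letter $x\notin\Sigma_{2,2}$ that may lie in $\Sigma_{1,1}$ rather than in $\Sigma_{1,2}$. If some such $x$ lies in $\Sigma_{1,2}$, I would set $u=x$. As $\mathcal B$ is deterministic, $\Sigma_{1,1}\cap\Sigma_{1,2}=\emptyset$, so $x\notin\Sigma_{1,1}$; together with $x\notin\Sigma_{2,2}$ a factor-position argument shows that a word $\alpha d\beta\in\Sigma_{1,1}^*\Sigma_{1,2}\Sigma_{2,2}^*$ contains $x$ exactly when its unique $\Sigma_{1,2}$-letter $d$ equals $x$, whence $L(\mathcal B)\cap\Sigma^* x\Sigma^* = \Sigma_{1,1}^* x\Sigma_{2,2}^*$. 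Otherwise $\Sigma_{1,2}\subseteq\Sigma_{2,2}$, so the witness must satisfy $x\in\Sigma_{1,1}\setminus\Sigma_{2,2}$; here I would pick any $c\in\Sigma_{1,2}$ (nonempty by hypothesis) and set $u=xc$. The analogous position analysis — using $x\notin\Sigma_{2,2}$, $x\notin\Sigma_{1,2}$ and $c\notin\Sigma_{1,1}$ — shows that $xc$ occurs in $\alpha d\beta$ only when $\alpha$ ends in $x$ and $d=c$, giving $L(\mathcal B)\cap\Sigma^* xc\Sigma^* = \Sigma_{1,1}^*(xc)\Sigma_{2,2}^*$.

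Finally I would verify the $uC$-Theorem hypotheses uniformly in both cases. Since $x\notin\Sigma_{2,2}=C$ and $C$ consists of single letters, no word of $C^*=\Sigma_{2,2}^*$ contains $u$ (whether $u=x$ or $u=xc$) as a factor, so $C^*\cap\Sigma^* u\Sigma^* = \emptyset$. Moreover $u$ uses a letter of $\Sigma_{1,2}$ (namely $x$ in the first case, $c$ in the second), which by determinism does not lie in $\Gamma=\Sigma_{1,1}$, so $u$ contains a symbol outside $\Gamma$. Thus Theorem~\ref{thm:uC_PSPACE-hard} applies and $(\Gamma^* u C^*)\textsc{-Constr-Sync}$ is $\PSPACE$-hard; chaining it with the reduction from Theorem~\ref{thm:ideal_hardness} yields $\PSPACE$-hardness of $L(\mathcal B)\textsc{-Constr-Sync}$. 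The only delicate point is the factor-position bookkeeping establishing the exact shape of $L(\mathcal B)\cap\Sigma^* u\Sigma^*$; the rest is a mechanical check of the theorem hypotheses.
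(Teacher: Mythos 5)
Your proposal is correct and follows essentially the same route as the paper's proof: the same case distinction on whether the witness letter outside $\Sigma_{2,2}$ lies in $\Sigma_{1,2}$ (take $u$ to be that letter) or only in $\Sigma_{1,1}$ (take $u$ to be that letter followed by some letter of $\Sigma_{1,2}$), the same identification of $L(\mathcal B)\cap\Sigma^*u\Sigma^*$ as $\Sigma_{1,1}^*u\Sigma_{2,2}^*$, and the same combination of the $uC$-Theorem (with $\Gamma=\Sigma_{1,1}$, $C=\Sigma_{2,2}$) and the Ideal Separation Theorem. The factor-position bookkeeping you flag as delicate is exactly what the paper relies on (via determinism, $\Sigma_{1,1}\cap\Sigma_{1,2}=\emptyset$), and your verification of the $uC$-Theorem hypotheses is sound.
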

\begin{proof}
 Choose $a \in (\Sigma_{1,1} \cup \Sigma_{1,2}) \setminus \Sigma_{2,2}$.
 Then 
 \[ 
  L \cap \Sigma^* a \Sigma^* = \left\{ 
   \begin{array}{ll}
    \Sigma_{1,1}^* a \Sigma_{2,2}^* & \mbox{if } a \in \Sigma_{1,2}; \\
    \Sigma_{1,1}^* a \Sigma_{1,1}^* \Sigma_{1,2} \Sigma_{2,2}^* & \mbox{if } a \in \Sigma_{1,1}.
   \end{array}
  \right.
 \]
 In the first case we can apply Theorem~\ref{thm:uC_PSPACE-hard}
 with $\Gamma = \Sigma_{1,1}, u = a$ and $C = \Sigma_{2,2}$
 to find that $(L\cap \Sigma^* a \Sigma^*)\textsc{-Constr-Sync}$ is $\PSPACE$-hard.
 In the second case, choose some $x \in \Sigma_{1,2}$, then, as, by determinism
 of $\mathcal B$, $x \notin \Sigma_{1,1}$,
 we find $L \cap \Sigma^* ax \Sigma^* = \Sigma_{1,1}^* ax \Sigma_{2,2}^*$
 and we can apply Theorem~\ref{thm:uC_PSPACE-hard}
 with $\Gamma = \Sigma_{1,1}^*$, $u = ax$ and $C = \Sigma_{2,2}$
 to find that $(L \cap \Sigma^* ax \Sigma^*)\textsc{-Constr-Sync}$ is $\PSPACE$-hard.
 Finally, the claim follows by Theorem~\ref{thm:ideal_hardness}. \qed
\end{proof}

The next lemma states a condition such that we get $\PSPACE$-hardness
if the set $\Sigma_{1,1}$ contains at least two distinct symbols.

\begin{lemma}
\label{lem:Sigma11_2_PSPACE-hard}
 Suppose $|\Sigma_{1,1}| \ge 2$, $\Sigma_{1,2}\ne \emptyset$, $\Sigma_{2,1} = \emptyset$
 and $(\Sigma_{1,1} \cup \Sigma_{1,2}) \setminus \Sigma_{2,2} \ne \emptyset$. 
 Then $L(\mathcal B)\textsc{-Constr-Sync}$ is $\PSPACE$-hard.
\end{lemma}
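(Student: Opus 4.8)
The plan is to reduce a principal-ideal slice of $L(\mathcal B)$ to the exact shape handled by the second part of the $uC$-Theorem (Theorem~\ref{thm:uC_PSPACE-hard}), and then transfer the resulting hardness back to $L(\mathcal B)$ through the Ideal Separation Theorem (Theorem~\ref{thm:ideal_hardness}). Since $\Sigma_{2,1} = \emptyset$, the general two-state expression collapses to $L(\mathcal B) = \Sigma_{1,1}^* \Sigma_{1,2} \Sigma_{2,2}^*$. The governing idea is to take $C = \Sigma_{1,1}$ as the left-hand generator: by hypothesis $|\Sigma_{1,1}| \ge 2$, so $C$ is a prefix-free set of single letters of cardinality at least two, exactly what the theorem wants on the left of $C^* u \Gamma^*$, and I will take $\Gamma = \Sigma_{2,2}$ throughout.

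First I would fix a witness $a \in (\Sigma_{1,1} \cup \Sigma_{1,2}) \setminus \Sigma_{2,2}$ and branch on which set it lies in (these are disjoint by determinism of $\mathcal B$). If $\Sigma_{1,2} \setminus \Sigma_{2,2}$ is nonempty, I pick $a$ there and set $u = a$. Because $a \notin \Sigma_{1,1}$ and $a \notin \Sigma_{2,2}$, the only occurrence of $a$ in any word of $L(\mathcal B)$ is at the unique $\Sigma_{1,2}$-position, so $L(\mathcal B) \cap \Sigma^* a \Sigma^* = \Sigma_{1,1}^* a \Sigma_{2,2}^*$. This is $C^* u \Gamma^*$ with $C = \Sigma_{1,1}$, and all side conditions are immediate: $u = a$ is a single letter with $\suff(u) \cap \pref(u) = \{\varepsilon, u\}$, $a \notin \Sigma_{1,1}$ gives $C^* \cap \Sigma^* u \Sigma^* = \emptyset$, $a \notin \Gamma$, and the absorption condition holds for any $x \in C$ since $C \cup \{u\}$ consists of single letters and $(C\cup\{u\})^*$ is then prefix-closed.

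The remaining case is $\Sigma_{1,2} \subseteq \Sigma_{2,2}$ together with some $a \in \Sigma_{1,1} \setminus \Sigma_{2,2}$; here I would pick any $b \in \Sigma_{1,2}$ (nonempty by hypothesis) and set $u = ab$, noting $a \ne b$. Tracking where the factor $ab$ can sit in a word of $L(\mathcal B)$ — $a$ occurs only in the $\Sigma_{1,1}^*$ prefix (as $a \notin \Sigma_{1,2} \cup \Sigma_{2,2}$) and must be immediately followed by $b$, forcing $a$ to be the last prefix letter and $b$ to be the $\Sigma_{1,2}$-letter — yields $L(\mathcal B) \cap \Sigma^* ab \Sigma^* = \Sigma_{1,1}^* ab \Sigma_{2,2}^*$, again of the form $C^* u \Gamma^*$. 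I then verify the side conditions: $b \notin \Sigma_{1,1}$ gives $C^* \cap \Sigma^* u \Sigma^* = \emptyset$, $a \notin \Gamma$, and $\suff(ab) \cap \pref(ab) = \{\varepsilon, ab\}$ since $a \ne b$.

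The step I expect to be the genuine obstacle is the absorption condition of the $uC$-Theorem in this last case, because $C \cup \{u\} = \Sigma_{1,1} \cup \{ab\}$ is \emph{not} prefix-free: the letter $a \in \Sigma_{1,1}$ is a proper prefix of $ab$. The resolution, and the place where $|\Sigma_{1,1}| \ge 2$ is essential, is to choose the witness $x \in \Sigma_{1,1} \setminus \{a\}$. One checks that $\Sigma_{1,1} \cup \{ab\}$ is still uniquely decodable (since $b$ occurs only inside the block $ab$, every $b$ forces its preceding $a$, and every other letter is a singleton block), so in any factorization of $vxw \in (C \cup \{u\})^*$ the chosen letter $x$ — being neither $a$ nor $b$ — must be a standalone block, placing a block boundary immediately before it and hence giving $vx \in (C \cup \{u\})^*$. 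With the slice identified and all hypotheses checked, Theorem~\ref{thm:uC_PSPACE-hard} yields $\PSPACE$-hardness of the slice, and Theorem~\ref{thm:ideal_hardness} transfers it to $L(\mathcal B)$.
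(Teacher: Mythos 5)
Your proposal is correct and is essentially the paper's own proof: the same choice $C = \Sigma_{1,1}$, $\Gamma = \Sigma_{2,2}$, with $u = a$ when the witness lies in $\Sigma_{1,2}$ and $u = ab$ (for $b \in \Sigma_{1,2}$) otherwise, followed by intersecting with the principal ideal $\Sigma^* u \Sigma^*$ and combining Theorem~\ref{thm:uC_PSPACE-hard} with Theorem~\ref{thm:ideal_hardness}. The only difference is that you spell out the verification of the side conditions of the $uC$-Theorem --- in particular the absorption condition via a witness $x \in \Sigma_{1,1}\setminus\{a\}$, which is where $|\Sigma_{1,1}|\ge 2$ enters --- details the paper leaves implicit with ``the conditions are fulfilled.''
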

\begin{proof}
 Set $C = \Sigma_{1,1}$ and $\Gamma = \Sigma_{2,2}$.
 By assumption, we find $a \in (\Sigma_{1,1} \cup \Sigma_{1,2}) \setminus \Sigma_{2,2}$.
 If $a \in \Sigma_{1,2}$, then set $u = a$.
 If $a \in \Sigma_{1,1} \setminus \Sigma_{1,2}$, then choose $b \in \Sigma_{1,2}$
 and set $u = ab$.
 Note that, by determinism of the constraint automaton,
 we have $\Sigma_{1,1} \cap \Sigma_{1,2} = \emptyset$.
 Then, $L(\mathcal B) \cap \Sigma^* u \Sigma^* = C^*u\Gamma^*$.
 For this language, the conditions of Theorem~\ref{thm:uC_PSPACE-hard}
 are fulfilled and hence, together with Theorem~\ref{thm:ideal_hardness}, the claim follows.~\qed
\end{proof}

Combining everything, we derive our main result of this section.

\begin{theorem}
\label{thm:two-state-classification}
 For a two-state constraint PDFA $\mathcal B$,
 $L(\mathcal B)\textsc{-Constr-Sync}$
 is $\PSPACE$-complete precisely when
 $\Sigma_{1,2} \ne \emptyset$, $\Sigma_{2,1} = \emptyset$
 and
 $$
  ( \Sigma_{1,1} \cup \Sigma_{1,2} ) \setminus \Sigma_{2,2} \ne \emptyset 
  \mbox{ and } \max\{ |\Sigma_{1,1}|, |\Sigma_{2,2}| \} \ge 2.
 $$
 Otherwise, $L(\mathcal B)\textsc{-Constr-Sync}\in\PTIME$. 
\end{theorem}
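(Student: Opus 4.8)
The plan is to prove the two directions separately and to lean on Theorem~\ref{thm:L-contr-sync-PSPACE}, so that for the ``when'' direction it suffices to establish $\PSPACE$-hardness. First I would dispatch the hardness direction. Assume $\Sigma_{1,2}\ne\emptyset$, $\Sigma_{2,1}=\emptyset$, $(\Sigma_{1,1}\cup\Sigma_{1,2})\setminus\Sigma_{2,2}\ne\emptyset$ and $\max\{|\Sigma_{1,1}|,|\Sigma_{2,2}|\}\ge 2$. The last inequality means $|\Sigma_{2,2}|\ge 2$ or $|\Sigma_{1,1}|\ge 2$; in the former case the hypotheses of Lemma~\ref{lem:Sigma22_2_PSPACE-hard} hold verbatim, and in the latter those of Lemma~\ref{lem:Sigma11_2_PSPACE-hard}. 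Either way $L(\mathcal B)\textsc{-Constr-Sync}$ is $\PSPACE$-hard, and hence $\PSPACE$-complete by Theorem~\ref{thm:L-contr-sync-PSPACE}.

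For the ``otherwise'' direction I would negate the four conjuncts and peel off the easy possibilities using Proposition~\ref{prop:easy_cases}: if $\Sigma_{1,2}=\emptyset$ invoke case~(1); if $\Sigma_{2,1}\ne\emptyset$ invoke case~(2); if $(\Sigma_{1,1}\cup\Sigma_{1,2})\setminus\Sigma_{2,2}=\emptyset$, i.e.\ $\Sigma_{1,1}\cup\Sigma_{1,2}\subseteq\Sigma_{2,2}$, invoke case~(3). This leaves exactly one residual case, in which $\Sigma_{1,2}\ne\emptyset$, $\Sigma_{2,1}=\emptyset$, $(\Sigma_{1,1}\cup\Sigma_{1,2})\setminus\Sigma_{2,2}\ne\emptyset$, yet $\max\{|\Sigma_{1,1}|,|\Sigma_{2,2}|\}\le 1$. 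Since $\Sigma_{2,1}=\emptyset$ the general formula for $L(\mathcal B)$ collapses to $L(\mathcal B)=\Sigma_{1,1}^*\Sigma_{1,2}\Sigma_{2,2}^*$, and because $|\Sigma_{1,1}|\le 1$ and $|\Sigma_{2,2}|\le 1$ we may write $\Sigma_{1,1}\subseteq\{a\}$ and $\Sigma_{2,2}\subseteq\{d\}$, so that $L(\mathcal B)=\bigcup_{b\in\Sigma_{1,2}} a^* b\, d^*$, a finite union in which a missing letter contributes only $\varepsilon$.

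This residual case is the main obstacle, because, unlike the hardness lemmas and the easy cases, it is not covered by any imported general theorem: the $UV^*W$-Theorem cannot produce the single occurrence of $b$ out of a submonoid $V$ (any submonoid containing $b$ contains $bb$), and the natural $\mathcal B$ here is not returning. I would therefore give a direct polynomial-time decision procedure. The key observation is that for any letter $x$, writing $f_x\colon q\mapsto\delta(q,x)$, the images form a non-increasing chain $Q=\delta(Q,x^0)\supseteq\delta(Q,x^1)\supseteq\delta(Q,x^2)\supseteq\cdots$ (since $\delta(Q,x^{i+1})=f_x(\delta(Q,x^i))\subseteq f_x(\delta(Q,x^{i-1}))=\delta(Q,x^i)$ by induction), which stabilizes after at most $|Q|$ steps. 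Hence the set $\{\delta(Q,x^i)\mid i\ge 0\}$ has at most $|Q|+1$ elements, all realized among $i=0,\dots,|Q|$.

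Using this, the algorithm loops over the at most $|\Sigma|$ letters $b\in\Sigma_{1,2}$ and over $i\in\{0,\dots,|Q|\}$, forms $T=\delta(\delta(Q,a^i),b)$, and accepts iff $\delta(T,d^{|Q|})$ is a singleton. Correctness of the final test uses the same chain argument for $d$: the sizes $|\delta(T,d^j)|$ are non-increasing, so they attain the value $1$ for some $j$ iff the stabilized set $\delta(T,d^{|Q|})$ already has size $1$. Ranging $i$ over $\{0,\dots,|Q|\}$ exhausts all distinct sets $\delta(Q,a^i)$, and $a^i b\, d^j$ synchronizes $\mathcal A$ iff $\delta(Q,a^i b\, d^j)$ is a singleton, so the procedure decides exactly the existence of a synchronizing word in $L(\mathcal B)$ and runs in polynomial time. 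Finally I would remark on the degenerate subcases $\Sigma_{1,1}=\emptyset$ or $\Sigma_{2,2}=\emptyset$ (restrict to $i=0$, respectively replace the $d$-test by $|T|=1$) and note that the coincidences $a=d$ or $b=d$ do not affect the argument. Combining both directions yields the stated dichotomy.
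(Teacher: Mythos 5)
Your hardness direction and your use of Proposition~\ref{prop:easy_cases} coincide exactly with the paper's proof; the genuine divergence is in the residual case $\max\{|\Sigma_{1,1}|,|\Sigma_{2,2}|\}\le 1$. There your premise that this case ``is not covered by any imported general theorem'' is mistaken: the paper disposes of it in a few lines by writing $L(\mathcal B)=\bigcup_{x\in\Sigma_{1,2}}\Sigma_{1,1}^*x\Sigma_{2,2}^*$, observing that each member is of the form $y^*xz^*$ (or $xz^*$, $y^*x$), i.e.\ a two-state constraint over an at most ternary alphabet that does not appear among the $\PSPACE$-complete cases of Theorem~\ref{thm:classification_MFCS_paper} (letters not occurring in the constraint can be ignored, since no synchronizing word from the constraint can use them), and then applying Lemma~\ref{lem:union}. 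Your replacement---a direct polynomial-time decision procedure---is a legitimate, more self-contained alternative; what it buys is independence from the imported ternary classification (and it sidesteps the alphabet-restriction remark), at the price of re-proving a special case of a known result.

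However, one step of your algorithm is under-justified. For the $a$-part the containment chain $\delta(Q,a^{i+1})\subseteq\delta(Q,a^i)$ is sound because it starts at the full set $Q$; but for the $d$-part, starting from an arbitrary $T=\delta(\delta(Q,a^i),b)$, the sets $\delta(T,d^j)$ need not be nested, and you only have that their \emph{sizes} are non-increasing. Non-increasingness alone does not imply that the minimum size is attained by step $|Q|$: sizes can plateau and then drop (e.g.\ $T=\{1,2\}$ with $d$ mapping $1\mapsto 3$, $2\mapsto 4$ and $3,4\mapsto 5$ has sizes $2,2,1$), so equality of consecutive sizes is not a stabilization criterion, and calling $\delta(T,d^{|Q|})$ ``the stabilized set'' begs the question. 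The claim you need is true, but it requires the functional-graph structure of the map $f_d\colon q\mapsto\delta(q,d)$: after $|Q|$ steps every trajectory lies in the cyclic part of $f_d$, on which $f_d$ is a permutation, so no merging occurs after step $|Q|$; equivalently, any two states that ever merge under powers of $d$ already merge within at most $|Q|-1$ steps. With this (standard and easy) lemma inserted, your test ``$|\delta(T,d^{|Q|})|=1$'' is indeed equivalent to ``$|\delta(T,d^{j})|=1$ for some $j\ge 0$'', and the rest of your argument, including the degenerate subcases, goes through.
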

\begin{proof}
 We can 
 assume 
 $\Sigma_{1,2} \ne \emptyset$, $\Sigma_{2,1} = \emptyset$
 and $(\Sigma_{1,1} \cup \Sigma_{1,2}) \setminus \Sigma_{2,2}\ne \emptyset$,
 for otherwise, by Proposition~\ref{prop:easy_cases},
 we have $L(\mathcal B)\textsc{-Constr-Sync}\in \PTIME$.
 If $|\Sigma_{1,1}| \ge 2$ or $|\Sigma_{2,2}| \ge 2$,
 by Lemma~\ref{lem:Sigma11_2_PSPACE-hard}
 or Lemma~\ref{lem:Sigma22_2_PSPACE-hard},
 we get $\PSPACE$-hardness, and so,
 by Theorem~\ref{thm:L-contr-sync-PSPACE},
 it is $\PSPACE$-complete in these cases.
 Otherwise, assume $|\Sigma_{1,1}| \le 1$ and $|\Sigma_{2,2}| \le 1$.
 With the other assumptions,
 $$
  L = \bigcup_{x \in \Sigma_{1,2}} \Sigma_{1,1}^* x \Sigma_{2,2}^*.
 $$
 Each language of the form $\Sigma_{1,1}^* x \Sigma_{2,2}^*$
 is over the at most ternary alphabet $\Sigma_{1,1} \cup \{x\}\cup \Sigma_{2,2}$.
 Hence, each such language has 
 the form
 $y^* x z^*$, $xz^*$ or $y^*x$ with $|\{ y,z,x\}| \le 3$ and $\{x,y,z\}\subseteq \Sigma$.
 If a letter is not used in the constraint language, we can, obviously,
 assume the problem is over the smaller alphabet of all letters
 used in the constraint, as usage of letters not occurring in any accepting path in the constraint
 automaton  is forbidden in any input semi-automaton.
 So, by Theorem~\ref{thm:classification_MFCS_paper},
 for the languages $y^* x z^*$ the constraint problem
 is polynomial time solvable, and by Lemma~\ref{lem:union}
 we have $L\textsc{-Constr-Sync}\in \PTIME$. \qed
\end{proof}

\subsection{Three States and Binary Alphabet}
\label{sec:three-states}

Let $\mathcal B = (\Sigma, P, \mu, p_0, F)$
be a three-state constraint PDFA.
Recall the definitions
of the sets $\Sigma_{i,j}$, $1 \le i,j \le 2$ and that here, by our notational conventions,
$P = \{1,2,3\}$, $p_0 = 1$ and $F = \{3\}$. 
First, we will show an analogous result to Lemma~\ref{lem:start-and-final}
for the three-state case, which justifies the mentioned notational conventions.

\begin{lemmarep} 
\label{lem:start-final-three-states}
 Let $\mathcal B = (\Sigma, P, \mu, p_0, F)$ be a PDFA 
 with three states.
 Then, either $L(\mathcal B)\textsc{\--Constr\--Sync}\in \PTIME$,
 or $L(\mathcal B)\textsc{\--Constr\--Sync} \equiv_m^{\log} 
 L(\mathcal B')\textsc{\--Constr\--Sync}$
 for a PDFA $\mathcal B' = (\Sigma, \{1,2,3\}, \mu', 1, \{3\})$.
\end{lemmarep}
\begin{proof}
Let $\mathcal B = (\Sigma, P, \mu, p_0, F)$
be a constraint automaton with $|P| = 3$ and $\emptyset \ne F \subseteq \{1,2,3\}$ arbitrary.
We will assume $P = \{1,2,3\}$ with $p_0 = 1$.
If $F = \{1\}$, then by Theorem~\ref{thm:UVW-theorem},
also the the remark thereafter, we have \[ L(\mathcal B)\textsc{\--Constr\--Sync}\in\PTIME \]
in this case.
So, we can assume $F \ne \{1\}$. 
Also, we can assume that both $2$ and $3$ are reachable from the start state $1$, for non-reachable
states could be removed, giving a constraint automaton with strictly less than
three states, and these
cases where already handled in 
Theorem~\ref{thm:classification_MFCS_paper},
giving polynomial time solvable constraint problems
as $|\Sigma|= 2$.
Set $E = F \setminus \{1\} \ne \emptyset$.
By the previous arguments, $1$ is co-accessible for the final state set $E$,
hence, by Remark~\ref{remark:add-stuff},
$$
 L(\mathcal B_{1, E})\textsc{-Constr-Sync} \equiv_m^{\log}
 L(\mathcal B_{1, F})\textsc{-Constr-Sync}.
$$
So, we can assume $F \subseteq \{2,3\}$.
If $F = \{2,3\}$ and $3$ is reachable from $2$, similarly to the previous
reasoning, then
$$
 L(\mathcal B_{1, \{3\}})\textsc{-Constr-Sync} \equiv_m^{\log}
 L(\mathcal B_{1, \{2,3\}})\textsc{-Constr-Sync}.
$$
Similarly, if $2$ is reachable from $3$. 
Hence, in both cases we can reduce to the case $|F| = 1$.
If none is reachable
from the other, $F = \{2,3\}$ and both are reachable from $1$,
then $\Sigma_{2,1} = \Sigma_{3,1} = \Sigma_{2,3} = \Sigma_{3,2} = \emptyset$ and
$$
 L = \Sigma_{1,1}^*\Sigma_{1,2} \Sigma_{2,2}^* \cup 
     \Sigma_{1,1}^*\Sigma_{1,3} \Sigma_{3,3}^*.
$$
So, $L$ is a union of languages recognizable by two-state automata
over a binary alphabet, both give polynomial time solvable
constraint problems\footnote{Note, for non-binary alphabets, other complexity
classes than $\PTIME$ might arise here.} by Theorem~\ref{thm:classification_MFCS_paper},
hence, by Lemma~\ref{lem:union}, $L\textsc{-Constr-Sync}\in \PTIME$.
So, either the problem is polynomial time solvable,
or equivalent to a constraint automaton whose final state set
is a singleton set non containing the start state,
i.e., without loss of generality we can assume $p_0 = 1$
and $F = \{3\}$ for the state set $P = \{1,2,3\}$ in these cases. 
\qed
\end{proof}

In the general theorem, stated next, the complexity classes we could realize 
depend on the number of strongly connected components
in the constraint automaton.

\begin{theoremrep}
\label{thm:three_states}
 For a constraint PDFA $\mathcal B$ with three states
 over a binary alphabet 
 $L(\mathcal B)\textsc{-Constr-Sync}$
 is either in $\PTIME$, or $\NP\mbox{-complete}$, or $\PSPACE\mbox{-complete}$.
 More specifically, 
 \begin{enumerate}
 \item if $\mathcal B$ is strongly connected
 the problem is always in \PTIME,
 
 \item if the constraint automaton has two strongly connected components,
 the problem is in $\PTIME$ or \PSPACE-complete,
 
 \item and if we have three strongly connected components,
 the problem is either in $\PTIME$ or \NP-complete.
 \end{enumerate}
\end{theoremrep}
\begin{proofsketch}
 This is only a proof sketch,  as even up to symmetry, more than fifty cases have to be checked. We only show a few cases to illustrate how to apply
 the results from Section~\ref{sec:general_results}.
 We will handle the cases illustrated in Table~\ref{tab:three_states_proof_sketch},
 please see the table for the naming of the constraint automata.
 In all automata, the left state is the start state $1$,
 the middle state is state $2$ and the rightmost state
 is state $3$. If not said otherwise, $3$ will be the single final state,
 a convention in correspondence with Lemma~\ref{lem:start-final-three-states}.
 By Theorem~\ref{thm:L-contr-sync-PSPACE},
 for $\PSPACE$-completeness, it is enough to establish
 $\PSPACE$-hardness.
 
 \begin{table}[ht] 
    \centering
 
\begin{tabular}{cll|cll}
  Type & \multicolumn{1}{c}{Automaton} & Complexity & Type & \multicolumn{1}{c}{Automaton} & Complexity \\ \hline
 
$\mathcal B_1$ & \scalebox{.7}{
\begin{tikzpicture}[>=latex',shorten >=1pt,node distance=1.5cm,on grid,auto,baseline=0]
 \tikzset{every state/.style={minimum size=1pt}}
 \node[state] (1) {};
 \node[state] (2) [ right of=1] {};
 \node[state, accepting] (3) [ right of=2] {};
 
 \path[->] (1) edge node {$a$} (2);
 \path[->] (2) edge [bend left] node {$a,b$} (3);
 \path[->] (3) edge [bend left] node {$b$} (2);
\end{tikzpicture}} & $\PSPACE$-c &

$\mathcal B_2$ & \scalebox{.7}{
\begin{tikzpicture}[>=latex',shorten >=1pt,node distance=1.5cm,on grid,auto,baseline=0]
 \tikzset{every state/.style={minimum size=1pt}}
 \node[state] (1) {};
 \node[state] (2) [right of=1] {};
 \node[state, accepting] (3) [right of=2] {};
 
 \path[->] (1) edge node {$a$} (2);
 \path[->] (2) edge [loop above] node {$a$} (2);
 \path[->] (2) edge [bend left] node {$b$} (3);
 \path[->] (3) edge [bend left] node {$b$} (2);
 \path[->] (1) edge [bend right=55] node [above] {$b$} (3);
\end{tikzpicture}} & $\PTIME$ \\ \hline

$\mathcal B_3$ & \scalebox{.7}{
\begin{tikzpicture}[>=latex',shorten >=1pt,node distance=1.5cm,on grid,auto,baseline=0]
 \tikzset{every state/.style={minimum size=1pt}}
 \node[state] (1) {};
 \node[state] (2) [right of=1] {};
 \node[state, accepting] (3) [right of=2] {};
 
 \path[->] (1) edge [loop above] node  {$b$} (1);
 \path[->] (1) edge node {$a$} (2);
 \path[->] (2) edge [loop above] node {$a$} (2);
 \path[->] (2) edge [bend left] node {$b$} (3);
 \path[->] (3) edge [bend left] node {$a$} (2);
\end{tikzpicture}} & $\PSPACE$-c & 

$\mathcal B_{4}$ & \scalebox{.7}{
\begin{tikzpicture}[>=latex',shorten >=1pt,node distance=1.5cm,on grid,auto,baseline=0]
 \tikzset{every state/.style={minimum size=1pt}}
 \node[state] (1) {};
 \node[state] (2) [right of=1] {};
 \node[state, accepting] (3) [right of=2] {};
 
 \path[->] (1) edge [bend right=55] node {$b$} (3);
 \path[->] (1) edge node {$a$} (2);
 \path[->] (2) edge [loop above] node {$b$} (2);
 \path[->] (2) edge [bend left] node {$a$} (3);
 \path[->] (3) edge [bend left] node {$b$} (2);
\end{tikzpicture}} & $\PSPACE$-c \\ \hline

$\mathcal B_{5}$ & \scalebox{.7}{
\begin{tikzpicture}[>=latex',shorten >=1pt,node distance=1.5cm,on grid,auto,baseline=0]
 \tikzset{every state/.style={minimum size=1pt}}
 \node[state] (1) {};
 \node[state] (2) [right of=1] {};
 \node[state, accepting] (3) [right of=2] {};
 
 \path[->] (1) edge node {$a$} (2);
 \path[->] (3) edge [loop above] node {$\Sigma_{3,3}$} (3);
 \path[->] (2) edge [loop above] node {$\Sigma_{2,2}$} (2);
 \path[->] (2) edge [bend left] node {$\Sigma_{2,3}$} (3);
 \path[->] (3) edge [bend left] node {$a \in\Sigma_{3,2}$} (2);
\end{tikzpicture}} & $\PTIME$ &

$\mathcal B_6$ & \scalebox{.7}{
\begin{tikzpicture}[>=latex',shorten >=1pt,node distance=1.5cm,on grid,auto,baseline=0]
 \tikzset{every state/.style={minimum size=1pt}}
 \node[state] (1) {};
 \node[state] (2) [right of=1] {};
 \node[state, accepting] (3) [right of=2] {};
 
 \path[->] (1) edge node {$a$} (2);
 \path[->] (2) edge [loop above] node {$b$} (2);
 \path[->] (2) edge  node {$a$} (3);
 \path[->] (3) edge [loop above] node {$a$} (3);
 \path[->] (1) edge [bend right] node [below] {$b$} (3);
\end{tikzpicture}} & $\NP$-c \\ \hline
\end{tabular}
 \caption{\footnotesize The constraint automata $\mathcal B_i$, $i \in \{1,\ldots,6\}$, with the respective  computational complexities of $L(\mathcal B_i)\textsc{-Constr-Sync}$, for which
  these complexities are proven in the proof sketch of Theorem~\ref{thm:three_states}. Please see
  the main text for more explanation.  }
 \label{tab:three_states_proof_sketch}
 \vspace{-8mm}
\end{table}
 
 \begin{enumerate} 
 \item The constraint automaton\footnote{This constraint automaton was already given in~\cite{DBLP:conf/mfcs/FernauGHHVW19}
 as the single example of a three-state constraint automaton yielding a $\PSPACE$-complete
 problem.} 
 $\mathcal B_1$.
 
   Here $L(\mathcal B_1) = a(a+b)(bb+ba)^*$.
   Set $U = L(\mathcal B_1) \cap \Sigma^* aa \Sigma^* = aa(bb+ba)^*$.
   By Theorem~\ref{thm:ideal_hardness},
   $U\textsc{-Constr-Sync} \le_m^{\log} L(\mathcal B_1)\textsc{-Constr-Sync}$.
   As $(bb+ba) \cap \Sigma^* aa \Sigma^* = \emptyset$
   and $\{bb,ba\}$ is prefix-free, by Theorem~\ref{thm:uC_PSPACE-hard},
   $U\textsc{-Constr-Sync}$ is $\PSPACE$-hard,
   which gives $\PSPACE$-hardness for $L(\mathcal B_1)\textsc{-Constr-Sync}$.
  
 \item The constraint automaton $\mathcal B_2$.
 
   Here $L(\mathcal B_2) = aa^*b(ba^*b)^* \cup b(ba^*b)$.
   We have $aa^*b \subseteq \suff((ba^*b)^*)$
   and $b \subseteq \suff((ba^*b)^*)$.
   By Theorem~\ref{thm:UVW-theorem}
   and Lemma~\ref{lem:union},
   $L(\mathcal B_2)\textsc{-Constr-Sync}\in \PTIME$.
   
 \item The constraint automaton $\mathcal B_3$.
 
   Here $L(\mathcal B_3) = b^*aa^*b(aa^*b)^*$.
   Set $U = L(\mathcal B_3) \cap \Sigma^*bbaba\Sigma^* = b^*bbaba(a+ba)^*b$.
   We have $b^*bbaba(a+ba)^*b \subseteq \factor(b^*bbaba(a+ba)^*)$
   and $b^*bbaba(a+ba)^* \subseteq \factor(b^*bbaba(a+ba)^*b)$.
   Hence, by Theorem~\ref{thm:add-stuff-general}, $U\textsc{-Constr-Sync}$
   has the same computational complexity as synchronization for $b^*bbaba(a+ba)^*$.
   As $(a+ba)^* \cap \Sigma^* bbaba \Sigma^* = \emptyset$
   and $\{a,ba\}$ is a prefix-free set, by Theorem~\ref{thm:uC_PSPACE-hard},
   $(b^*bbaba(a+ba)^*)\textsc{-Constr-Sync}$ is $\PSPACE$-hard,
   and so also synchronization by $U$.
   As, by Theorem~\ref{thm:ideal_hardness},
   $ 
   U\textsc{-Constr-Sync} \le_m^{\log} L(\mathcal B_3)\textsc{-Constr-Sync},
   $
   we get $\PSPACE$-hardness for $L(\mathcal B_3)\textsc{-Constr-Sync}$.
   
 \item The constraint automaton $\mathcal B_4$.

   Here $L(\mathcal B_4) = ab^*a(bb^*a)^* \cup b(bb^*a)^*$.
   Set $U = L(\mathcal B_4) \cap \Sigma^*aab\Sigma^* = aab(b+ab)^*a$.
   As $(b+ab)^* \cap \Sigma^* aab \Sigma^* = \emptyset$
   and $\{b,ab\}$ is a prefix-free set,
   as above, $\PSPACE$-hardness follows by a combination of Theorem~\ref{thm:add-stuff-general}, Theorem~\ref{thm:ideal_hardness}
   and Theorem~\ref{thm:uC_PSPACE-hard}.
   
 \item The constraint automaton $\mathcal B_5$.
 
   Here, $\mathcal B_5$ denotes an entire family of automata.
   In general,
   \[
   L(\mathcal B_5) = a\Sigma_{2,2}^*\Sigma_{2,3}(\Sigma_{3,3}^* \Sigma_{3,2} \Sigma_{2,2}^* \Sigma_{2,3})^*
   \]
   with $a \in \Sigma_{3,2}$.
   As $a \in \Sigma_{3,2}$, we have
   $a\Sigma_{2,2}^*\Sigma_{2,3} \subseteq \Sigma_{3,2} \Sigma_{2,2}^* \Sigma_{2,3}$.
   So, \[ 
   a\Sigma_{2,2}^*\Sigma_{2,3} \subseteq \suff((\Sigma_{3,3}^* \Sigma_{3,2} \Sigma_{2,2}^* \Sigma_{2,3})^*) 
   \]
   and by Theorem~\ref{thm:UVW-theorem} 
   we find $L(\mathcal B_5)\textsc{-Constr-Sync}\in \PTIME$.

 \item The constraint automaton $\mathcal B_6$.
 
  Here, 
  $
   L(\mathcal B_6) = ab^*aa^* \cup ba^*.
  $
  As $L(\mathcal B_6) \subseteq a^* b^* a^* a^*$ 
  the language $L(\mathcal B_6)$
  is a bounded language, hence by Theorem~\ref{thm:bounded_in_NP}
  we have $L(\mathcal B_6)\textsc{-Constr-Sync} \in \NP$.
  Furthermore $L(\mathcal B_6) \cap \Sigma^*abb^*a\Sigma^* = abb^*aa^*$.
  So, by Theorem~\ref{thm:ideal_hardness},
  the original problem is at least as hard as for the constraint
  language $abb^*aa^*$.
  As $ab \notin \factor(b^*)$, $b \notin \factor(aa^*)$
  and $\pref(b^*) \cap aa^* = \emptyset$, by Proposition~\ref{prop:NPc},
  for $abb^*aa^*$ the problem is $\NP$-hard.
  So, by Theorem~\ref{thm:ideal_hardness},
  $L(\mathcal B_6)\textsc{-Constr-Sync}$
  is $\NP$-complete. \qed
 \end{enumerate} 
\end{proofsketch}

\begin{toappendix}
\begin{proof}
 
 For the naming of the states, we will use the same convention
 as written in the proof sketch of Theorem~\ref{thm:three_states}
 in the main text.
 
 \medskip 
 
 \underline{General Assumption:} We will assume $\Sigma_{3,1} = \emptyset$
 and $|\Sigma_{3,3}|\le 1$
 in this proof.
 \begin{quote}
 \emph{Justification of this assumption:} As $L(\mathcal B) = L(\mathcal B_{1,\{3\}}) L(\mathcal B_{3,\{3\}})$,
 if $\Sigma_{3,3} = \{a,b\}$, Theorem~\ref{thm:UVW-theorem}
 would give $L(\mathcal B)\textsc{-Constr-Sync}\in\PTIME$.
 If $\Sigma_{3,1} \ne \emptyset$ and $\Sigma_{2,3} \ne \emptyset$, then $\mathcal B$
 is returning, hence, by Theorem~\ref{thm:returning_poly_alg},
 $L(\mathcal B)\textsc{-Constr-Sync}\in \PTIME$.
 If $\Sigma_{3,1} \ne \emptyset$ and $\Sigma_{2,3} = \emptyset$, 
 then, if $\Sigma_{2,1} = \emptyset$, the state $2$
 is not co-accessible and could be omitted by Lemma~\ref{lem:co-accessible-states-only}.
 Then, $L(\mathcal B)$ could be described by a two-state automaton
 over a binary alphabet, and so, by Theorem~\ref{thm:classification_MFCS_paper},
 $L(\mathcal B)\textsc{-Constr-Sync}\in \PTIME$.
 If, for $\Sigma_{3,1} \ne \emptyset$ and $\Sigma_{2,3} = \emptyset$,
 we have $\Sigma_{2,1} \ne \emptyset$, then $\mathcal B$
 is returning, hence, by Theorem~\ref{thm:returning_poly_alg},
 $L(\mathcal B)\textsc{-Constr-Sync}\in \PTIME$. 
 So, in any case for $\Sigma_{3,1} \ne \emptyset$,
 we find $L(\mathcal B)\textsc{-Constr-Sync}\in\PTIME$,
 and only the cases with $\Sigma_{3,1} = \emptyset$
 remain as interesting cases.
 \end{quote}
 

 \begin{table}[ht] 
    \centering
 
\begin{tabular}{c|c}
 Name & Automaton Template \\ \hline \\
 (a) & \scalebox{.9}{
\begin{tikzpicture}[>=latex',shorten >=1pt,node distance=1.5cm,on grid,auto,baseline=0]
 \tikzset{every state/.style={minimum size=1pt}}

 \node at (3.1,0.25) [rectangle,draw,thick,text width=3.9cm,minimum height=1.05cm,
	text centered,rounded corners, fill=white, name = re] {};
 \node at (3.1,0.5) {Remaining Automaton $\mathcal B^{(i)}$.};
	
 \node[state] (1) {};
 \node[state] (2) [right of=1] {};
 \node[state, accepting] (3) [right of=2] {};
 
 \path[->] (1) edge node {$a$} (2);
 
\end{tikzpicture}} \\ \\

 (b) & \scalebox{.9}{
\begin{tikzpicture}[>=latex',shorten >=1pt,node distance=1.5cm,on grid,auto,baseline=0]
 \tikzset{every state/.style={minimum size=1pt}}

 \node at (3.1,0.25) [rectangle,draw,thick,text width=3.9cm,minimum height=1.05cm,
	text centered,rounded corners, fill=white, name = re] {};
 \node at (3.1,0.5) {Remaining Automaton $\mathcal B^{(i)}$.};
	
 \node[state] (1) {};
 \node[state] (2) [right of=1] {};
 \node[state, accepting] (3) [right of=2] {};
 
 \path[->] (1) edge node {$a$} (2);
 \path[->] (1) edge [loop above] node {$b$} (1);

\end{tikzpicture}} \\ \\

 (c) & \scalebox{.9}{
\begin{tikzpicture}[>=latex',shorten >=1pt,node distance=1.5cm,on grid,auto,baseline=0]
 \tikzset{every state/.style={minimum size=1pt}}

 \node at (3.1,0.25) [rectangle,draw,thick,text width=3.9cm,minimum height=1.05cm,
	text centered,rounded corners, fill=white, name = re] {};
 \node at (3.1,0.5) {Remaining Automaton $\mathcal B^{(i)}$.};
	
 \node[state] (1) {};
 \node[state] (2) [right of=1] {};
 \node[state, accepting] (3) [right of=2] {};
 
 \path[->] (1) edge node {$a,b$} (2);
 
\end{tikzpicture}} \\ \\

 (d) & \scalebox{.9}{
\begin{tikzpicture}[>=latex',shorten >=1pt,node distance=1.5cm,on grid,auto,baseline=0]
 \tikzset{every state/.style={minimum size=1pt}}

 \node at (3.1,0.25) [rectangle,draw,thick,text width=3.9cm,minimum height=1.05cm,
	text centered,rounded corners, fill=white, name = re] {};
 \node at (3.1,0.5) {Remaining Automaton $\mathcal B^{(i)}$.};
	
 \node[state] (1) {};
 \node[state] (2) [right of=1] {};
 \node[state, accepting] (3) [right of=2] {};
 
 \path[->] (1) edge node {$a$} (2);
 \path[->] (1) edge [bend right] node [below] {$b$} (3);

\end{tikzpicture}}
\end{tabular}
 \caption{\footnotesize The automata templates that are combined with
  the partial subautomata $\mathcal B^{(i)}$, $i \in \{1,\ldots,12\}$, from Table~\ref{tab:inner_automata_three_states}
  to form (up to symmetry all relevant) three-state automata. Please see the proof of Theorem~\ref{thm:three_states}
  for explanation.}
 \label{tab:outer_automata_three_states}
  \vspace{-8mm}
\end{table}

\begin{table}[ht] 
    \centering
 
\begin{tabular}{c|c|c|c}
 Name & Two Transitions Above & Name & Two Transitions Below \\ \hline
 
 $\mathcal B^{(1)}$ & \scalebox{.7}{
\begin{tikzpicture}[>=latex',shorten >=1pt,node distance=1.5cm,on grid,auto,baseline=0]
 \tikzset{every state/.style={minimum size=1pt}}
 
 \node[state] (1) {};
 \node[state, accepting] (2) [right of=1] {};
 
 \path[->] (1) edge [loop above] node {$a$} (1);
 \path[->] (1) edge [bend left]  node {$b$} (2);
 \path[->] (2) edge [bend left]  node {$b$} (1);
\end{tikzpicture}} 

 & $\mathcal B^{(7)}$ & 
\scalebox{.7}{
\begin{tikzpicture}[>=latex',shorten >=1pt,node distance=1.5cm,on grid,auto,baseline=0]
 \tikzset{every state/.style={minimum size=1pt}}
 
 \node[state] (1) {};
 \node[state, accepting] (2) [right of=1] {};
 
 \path[->] (2) edge [loop above] node {$a$} (2);
 \path[->] (1) edge [bend left]  node {$a$} (2);
 \path[->] (2) edge [bend left]  node {$b$} (1);
\end{tikzpicture}} 
\\ \hline

 $\mathcal B^{(2)}$ & \scalebox{.7}{
\begin{tikzpicture}[>=latex',shorten >=1pt,node distance=1.5cm,on grid,auto,baseline=0]
 \tikzset{every state/.style={minimum size=1pt}}
 
 \node[state] (1) {};
 \node[state, accepting] (2) [right of=1] {};
 
 \path[->] (1) edge [loop above] node {$a$} (1);
 \path[->] (1) edge [bend left]  node {$b$} (2);
 \path[->] (2) edge [bend left]  node {$a$} (1);
\end{tikzpicture}}

 & $\mathcal B^{(8)}$ & 
\scalebox{.7}{
\begin{tikzpicture}[>=latex',shorten >=1pt,node distance=1.5cm,on grid,auto,baseline=0]
 \tikzset{every state/.style={minimum size=1pt}}
 
 \node[state] (1) {};
 \node[state, accepting] (2) [right of=1] {};
 
 \path[->] (2) edge [loop above] node {$b$} (2);
 \path[->] (1) edge [bend left]  node {$a$} (2);
 \path[->] (2) edge [bend left]  node {$a$} (1);
\end{tikzpicture}} 
\\ \hline

 $\mathcal B^{(3)}$ & \scalebox{.7}{
\begin{tikzpicture}[>=latex',shorten >=1pt,node distance=1.5cm,on grid,auto,baseline=0]
 \tikzset{every state/.style={minimum size=1pt}}
 
 \node[state] (1) {};
 \node[state, accepting] (2) [right of=1] {};
 
 \path[->] (1) edge [loop above] node {$b$} (1);
 \path[->] (1) edge [bend left]  node {$a$} (2);
 \path[->] (2) edge [bend left]  node {$b$} (1);
\end{tikzpicture}} 

 & $\mathcal B^{(9)}$ & 
\scalebox{.7}{
\begin{tikzpicture}[>=latex',shorten >=1pt,node distance=1.5cm,on grid,auto,baseline=0]
 \tikzset{every state/.style={minimum size=1pt}}
 
 \node[state] (1) {};
 \node[state, accepting] (2) [right of=1] {};
 
 \path[->] (2) edge [loop above] node {$a$} (2);
 \path[->] (1) edge [bend left]  node {$b$} (2);
 \path[->] (2) edge [bend left]  node {$b$} (1);
\end{tikzpicture}} 
\\ \hline

 $\mathcal B^{(4)}$ & \scalebox{.7}{
\begin{tikzpicture}[>=latex',shorten >=1pt,node distance=1.5cm,on grid,auto,baseline=0]
 \tikzset{every state/.style={minimum size=1pt}}
 
 \node[state] (1) {};
 \node[state, accepting] (2) [right of=1] {};
 
 \path[->] (1) edge [loop above] node {$b$} (1);
 \path[->] (1) edge [bend left]  node {$a$} (2);
 \path[->] (2) edge [bend left]  node {$a$} (1);
\end{tikzpicture}} 

 & $\mathcal B^{(10)}$ & 
\scalebox{.7}{
\begin{tikzpicture}[>=latex',shorten >=1pt,node distance=1.5cm,on grid,auto,baseline=0]
 \tikzset{every state/.style={minimum size=1pt}}
 
 \node[state] (1) {};
 \node[state, accepting] (2) [right of=1] {};
 
 \path[->] (2) edge [loop above] node {$b$} (2);
 \path[->] (1) edge [bend left]  node {$b$} (2);
 \path[->] (2) edge [bend left]  node {$a$} (1);
\end{tikzpicture}} 
\\ \hline

 $\mathcal B^{(5)}$ & \scalebox{.7}{
\begin{tikzpicture}[>=latex',shorten >=1pt,node distance=1.5cm,on grid,auto,baseline=0]
 \tikzset{every state/.style={minimum size=1pt}}
 
 \node[state] (1) {};
 \node[state, accepting] (2) [right of=1] {};
 
 \path[->] (1) edge [bend left]  node {$a,b$} (2);
 \path[->] (2) edge [bend left]  node {$a$} (1);
\end{tikzpicture}}

 & $\mathcal B^{(11)}$ & 
\scalebox{.7}{
\begin{tikzpicture}[>=latex',shorten >=1pt,node distance=1.5cm,on grid,auto,baseline=0]
 \tikzset{every state/.style={minimum size=1pt}}
 
 \node[state] (1) {};
 \node[state, accepting] (2) [right of=1] {};
 
 \path[->] (1) edge [bend left]  node {$a$} (2);
 \path[->] (2) edge [bend left]  node {$a,b$} (1);
\end{tikzpicture}} 
\\ \hline

 $\mathcal B^{(6)}$ & \scalebox{.7}{
\begin{tikzpicture}[>=latex',shorten >=1pt,node distance=1.5cm,on grid,auto,baseline=0]
 \tikzset{every state/.style={minimum size=1pt}}
 
 \node[state] (1) {};
 \node[state, accepting] (2) [right of=1] {};
 
 \path[->] (1) edge [bend left]  node {$a,b$} (2);
 \path[->] (2) edge [bend left]  node {$b$} (1);
\end{tikzpicture}} 

 & $\mathcal B^{(12)}$ & 
\scalebox{.7}{
\begin{tikzpicture}[>=latex',shorten >=1pt,node distance=1.5cm,on grid,auto,baseline=0]
 \tikzset{every state/.style={minimum size=1pt}}
 
 \node[state] (1) {};
 \node[state, accepting] (2) [right of=1] {};

 \path[->] (1) edge [bend left]  node {$b$} (2);
 \path[->] (2) edge [bend left]  node {$a,b$} (1);
\end{tikzpicture}} \\ \hline

\end{tabular}
 \caption{\footnotesize The partial subautomata between the states $\{2,3\}$
 with precisely three defined transition
 that are combined with the automata templates from Table~\ref{tab:outer_automata_three_states}
 to form a three-state automaton. The cases are sorted such that in the first column, every case
 such that $|\Sigma_{2,2} \cup \Sigma_{2,3}| = 2$ holds is listed, and in the 
 last column every case such that $|\Sigma_{3,3} \cup \Sigma_{3,2}| = 2$,
 where state $2$ is the left state and state $3$ the right state, a naming
 derived from the way how these automata will be combined.
 Please see the proof of Theorem~\ref{thm:three_states}
 for explanation.}
 \label{tab:inner_automata_three_states}
  \vspace{-8mm}
\end{table}

 If $\mathcal B$ consists of a single strongly connected component, then $\mathcal B$
 is returning and the results follows by Theorem~\ref{thm:returning_poly_alg}.
 The remaining cases are handled next.
 
 \medskip 
 
 \noindent i) The case that the states $\{2,3\}$ form one strongly connected component.

 \medskip 
 
 As the final state should be reachable, we can assume $|\Sigma_{1,2} \cup \Sigma_{1,3}| > 0$. 
 Suppose for every letter, we have a transition from the state $2$
 and from the state $3$, i.e., the subautomaton between $2$
 and $3$ is complete and $|\Sigma_{2,2} \cup \Sigma_{2,3}| = |\Sigma_{3,3} \cup \Sigma_{3,2}| = 2$.
 By Remark~\ref{remark:add-stuff}, we can suppose
 both states $2$ and $3$ are final. But then
 $$
  L(\mathcal B) = \Sigma_{1,1}^*\Sigma_{1,2}\Sigma^* \cup \Sigma_{1,1}^*\Sigma_{1,3}\Sigma^*. 
 $$
 Both languages in the union are definable by two-state PDFAs
 over binary alphabet, hence by Theorem~\ref{thm:classification_MFCS_paper}
 give polynomial time solvable constraint problems.
 So, by Lemma~\ref{lem:union}, $L(\mathcal B)\textsc{-Constr-Sync}\in \PTIME$
 in this case.
 So, for the rest of the argument, we only need to handle those cases such that
 for the state $2$ or for the state $3$ at least one outgoing transition 
 for a letter is not defined.
 If at least two transitions are undefined, then, as $\{2,3\}$
 form a strongly connected component and the alphabet is binary,
 precisely two must be defined, one going from state $2$
 to state $3$ and the other back from state $3$ to state $2$.
 In that case
 $$
  L(\mathcal B) = \Sigma_{1,1}^*\Sigma_{1,2}(\Sigma_{2,3}\Sigma_{3,2})^*\Sigma_{2,3}
   \cup \Sigma_{1,1}^*\Sigma_{1,3}(\Sigma_{3,2}\Sigma_{2,3})^*.
 $$
 we have $|\Sigma_{1,1}| \le 1$, as $\Sigma_{1,2} \cup \Sigma_{1,3} \ne \emptyset$. 
 Consider the homomorphism $\varphi : \{a,b,c\} \to \Sigma$
 given by $\varphi(\{a\}) = \Sigma_{1,1}$, $\varphi(\{b\}) = \Sigma_{1,2}$
 and $\varphi(\{c\}) = \Sigma_{3,2}\Sigma_{2,3}$.
 Then, $\varphi(a^*bc^*) = \Sigma_{1,1}^*\Sigma_{1,3}(\Sigma_{3,2}\Sigma_{2,3})^*$,
 and, as $(a^*bc^*)\textsc{-Constr-Sync}\in\PTIME$ by Theorem~\ref{thm:classification_MFCS_paper},
 by Theorem~\ref{thm:hom_lower_bound_complexity}
 we find $(\Sigma_{1,1}^*\Sigma_{1,3}(\Sigma_{3,2}\Sigma_{2,3})^*)\textsc{-Constr-Sync}\in\PTIME$.
 For $\Sigma_{1,1}^*\Sigma_{1,2}(\Sigma_{2,3}\Sigma_{3,2})^*\Sigma_{2,3}$
 we can show, by using Theorem~\ref{thm:add-stuff-general},
 that it has the same computational complexity as
 for the language $\Sigma_{1,1}^*\Sigma_{1,2}(\Sigma_{2,3}\Sigma_{3,2})^*$.
 For the latter language, by a similar argument, using Theorem~\ref{thm:hom_lower_bound_complexity},
 we can show that it gives a polynomial time solvable problem.
 So, with Lemma~\ref{lem:union}, $L(\mathcal B)\textsc{-Constr-Sync}\in\PTIME$.
 Hence, the case that precisely three transitions are defined
 is left.
 In Table~\ref{tab:inner_automata_three_states}, all the possible
 subautomata fulfilling this condition between the states $2$
 and $3$ are listed, or said differently, the table lists all proper partial
 automata with two states and precisely one undefined transition.
 In Table~\ref{tab:outer_automata_three_states},
 all possible ways, up to symmetry, to enter the strongly connected component $\{2,3\}$
 from the start state $1$ are shown.
 Note that indeed every case is covered up to symmetry, as
 we can suppose, without loss of generality, that
 the symbol $a$ goes from state $1$ to state $2$,
 as for the part between $\{2,3\}$ we have \emph{every} possibility, 
 hence in the combinations really get every case such that
 $a \in \Sigma_{1,2}$.
 So, to get all the remaining cases,
 we have to combine each way to enter $\{2,3\}$ from Table~\ref{tab:outer_automata_three_states}
 with every possible strongly component $\{2,3\}$ from 
 Table~\ref{tab:inner_automata_three_states} to cover
 all cases such that in $\{2,3\}$ precisely three transitions are defined.
 In Table~\ref{tab:inner_automata_three_states},
 the cases are sorted such that in the first column, every case
 such that $|\Sigma_{2,2} \cup \Sigma_{2,3}| = 2$ holds is listed, and in the 
 last column every case such that $|\Sigma_{3,3}\cup \Sigma_{3,2}| = 2$. 
 So, we have to check $12 \times 4 = 48$ cases next.
 We will write $\mathcal B$ for the combination
 of some $\mathcal B^{(i)}$, $i \in \{1,\ldots,12\}$,
 with some part from Table~\ref{tab:outer_automata_three_states}
 in each case. First, we select an automaton from Table~\ref{tab:inner_automata_three_states},
 then we investigate each possibility to combine
 it with the first state according to the cases listed
 in Table~\ref{tab:outer_automata_three_states}.

\begin{enumerate} 
\item The inner automaton $\mathcal B^{(1)}$ between the states $\{2,3\}$.

\begin{enumerate}
\item Here $L(\mathcal B) = a(a+bb)^*b$
and $L(\mathcal B_{2,\{2\}})) = (a+bb)^*$.
We have $a \in \suff(L(\mathcal B_{2,\{2\}}))$
and $b \in \pref(L(\mathcal B_{2,\{2\}}))$.
So, by Theorem~\ref{thm:UVW-theorem},
we find $L(\mathcal B)\textsc{-Constr-Sync}\in \PTIME$.

\item  Here $L(\mathcal B) = b^*a(a+bb)^*b$
and $L(\mathcal B_{2,\{2\}})) = (a+bb)^*$.
Similarly as in the previous
case, as $b^*a \subseteq \suff((a+bb)^*)$
and $b \in \pref((a+bb)^*)$,
we find $L(\mathcal B)\textsc{-Constr-Sync}\in \PTIME$.

\item Here $L(\mathcal B) = (a+b)(a+bb)^*b$
 and $L(\mathcal B_{2,\{2\}})) = (a+bb)^*$.
 Similarly, with Theorem~\ref{thm:UVW-theorem}, as in the previous cases, 
 we find $L(\mathcal B)\textsc{-Constr-Sync}\in \PTIME$.

\item Here $L(\mathcal B) = a(a+bb)^*b \cup b \cup bb(a+bb)^*b$.
 
 By Theorem~\ref{thm:UVW-theorem}, similarly as before,
 and Lemma~\ref{lem:finite}, we find that every part of the union
 gives a polynomial time solvable problem.
 Hence, by Lemma~\ref{lem:union}, $L(\mathcal B)\textsc{-Constr-Sync}\in \PTIME$.
 Otherwise, we can note that we
 could simplify $L(\mathcal B) = (a+bb)^*b$
 and using Theorem~\ref{thm:UVW-theorem}.

\end{enumerate}

\item The inner automaton $\mathcal B^{(2)}$ between the states $\{2,3\}$.

\begin{enumerate}
\item Here $L(\mathcal B) = a(a+ba)^*b$.
 Using Theorem~\ref{thm:add-stuff-general},
 we find that 
 $$
  L(\mathcal B)\textsc{-Constr-Sync}\equiv_m^{\log} 
  (a(a+ba)^*)\textsc{-Constr-Sync}.
 $$
 For $a(a+ba)^*$, by Theorem~\ref{thm:UVW-theorem},
 we find $L(\mathcal B)\textsc{-Constr-Sync}\in \PTIME$.

\item Here $L(\mathcal B) = b^*a(a+ba)^*b$.  Using Theorem~\ref{thm:add-stuff-general},
 we find that 
 $$
  L(\mathcal B)\textsc{-Constr-Sync}\equiv_m^{\log} 
  (b^*a(a+ba)^*)\textsc{-Constr-Sync}.
 $$
 Set $U = b^*a(a+ba)^*$.
 We have $U \cap \Sigma^* bba \Sigma^* = b^*bba(a+ba)^*$.
 On the last language, we can apply Theorem~\ref{thm:uC_PSPACE-hard}
 with $\Gamma = \{b\}$, $u = bba$ and $C = \{a,ba\}$
 and find, with Theorem~\ref{thm:ideal_hardness},
 that $U\textsc{-Constr-Sync}$
 is $\PSPACE$-hard.
 So, the original problem $L(\mathcal B)\textsc{-Constr-Sync}$
 is also $\PSPACE$-hard.
 
\item Here $L(\mathcal B) = (a+b)(a+ba)^*b$.
 Set $U = (a+b)(a+ba)^*$.
 By Theorem~\ref{thm:add-stuff-general},
 constrained synchronization for $U$
 has the same computational complexity.
 We find $U \cap \Sigma^*bba\Sigma^* = bba(a+ba)^*$.
 Combining Theorem~\ref{thm:ideal_hardness}
 and Theorem~\ref{thm:uC_PSPACE-hard}
 gives $\PSPACE$-hardness.
 Hence, $L(\mathcal B)\textsc{-Constr-Sync}$
 is $\PSPACE$-hard here.
 
\item Here $L(\mathcal B) = a(a+ba)^*b \cup b \cup ba(a+ba)^*b = (a+ba)^+b \cup b = (a+ba)^*b$.
 By Theorem~\ref{thm:add-stuff-general},
 this language has the same computational complexity
 as synchronization for $(a+ba)^*$.
 The latter language is recognizable by an automaton
 with a single final state that equals the start state.
 So, by Theorem~\ref{thm:UVW-theorem}, see the remark thereafter,
 $(a+ba)^*\textsc{-Constr-Sync}\in \PTIME$.

\end{enumerate}

\item The inner automaton $\mathcal B^{(3)}$ between the states $\{2,3\}$.

\begin{enumerate}
\item Here $L(\mathcal B) = a(b+ab)^*a$.
 By Theorem~\ref{thm:add-stuff-general},
 this gives the same computational complexity
 as $U = a(b+ab)^*$.
 We have $U \cap \Sigma^*aab\Sigma^* = aab(b+ab)^*$,
 hence combining Theorem~\ref{thm:ideal_hardness}
 and Theorem~\ref{thm:uC_PSPACE-hard}
 yields $\PSPACE$-hardness.

\item Here $L(\mathcal B) = b^*a(b+ab)^*a$.
 With $U = b^*a(b+ab)^*$
 and $U \cap \Sigma^*aab\Sigma^* = b^*aab(b+ab)^*$
 we can reason similarly as before
 to find that the problem $L(\mathcal B)\textsc{\--Constr\--Sync}$
 is $\PSPACE$-hard.

\item Here $L(\mathcal B) = (a+b)(b+ab)^*a$.
 The constraint language $U = (a+b)(b+ab)^*$
 has the same complexity by Theorem~\ref{thm:add-stuff-general}.
 Then $U \cap \Sigma^* aab \Sigma^* = aab(b+ab)^*$
 and for this constraint language $\PSPACE$-hardness
 was already shown in case (a) above.

\item Here $L(\mathcal B) = a(b+ab)^*a \cup b \cup bb(b+ab)^*a = (a+bb)(b+ab)^*a \cup b$.
 Then $L(\mathcal B) \cap \Sigma^* aab \Sigma^* =  aab(b+ab)^*a$.
 This constraint language has, by Theorem~\ref{thm:add-stuff-general},
 the same complexity
 as $aab(b+ab)^*$, and for this language
 $\PSPACE$-hardness
 was already shown in case (a) above.
 Hence, with Theorem~\ref{lem:ideal_language},
 we find that $L(\mathcal B)\textsc{-Constr-Sync}$
 is $\PSPACE$-hard here.
\end{enumerate}

\item The inner automaton $\mathcal B^{(4)}$ between the states $\{2,3\}$.

\begin{enumerate}
\item Here $L(\mathcal B) = a(b+aa)^*a$; Theorem~\ref{thm:UVW-theorem}
 gives $L(\mathcal B)\textsc{-Constr-Sync}\in \PTIME$.

\item Here $L(\mathcal B) = b^*a(b+aa)^*a$.
 Set $U = b^*a(b+aa)^*$, which has the same
 complexity by Theorem~\ref{thm:add-stuff-general}.
 As $U \cap \Sigma^* bab\Sigma^* = b^*bab(b+aa)^*$,
 Theorem~\ref{thm:ideal_hardness}
 and Theorem~\ref{thm:uC_PSPACE-hard}
 give $\PSPACE$-hardness.
 
\item Here $L(\mathcal B) = (a+b)(b+aa)^*a$; by Theorem~\ref{thm:UVW-theorem},
 $L(\mathcal B)\textsc{-Constr-Sync}\in \PTIME$.
 
\item Here $L(\mathcal B) = a(b+aa)^*a \cup b \cup ba(b+aa)^*a$.
 Then $L(\mathcal B) \cap \Sigma^*bab\Sigma^* = bab(b+aa)^*a$.
 By a combination of Theorem~\ref{thm:ideal_hardness},
 Theorem~\ref{thm:add-stuff-general}
 and Theorem~\ref{thm:uC_PSPACE-hard},
 we find that the problem is $\PSPACE$-hard.

\end{enumerate}

\item The inner automaton $\mathcal B^{(5)}$ between the states $\{2,3\}$.

\begin{enumerate}
\item Here $L(\mathcal B) = a(a+b)(aa+ab)^*$; by Theorem~\ref{thm:UVW-theorem},
 $L(\mathcal B)\textsc{-Constr-Sync}\in \PTIME$.

\item Here $L(\mathcal B) = b^*a(a+b)(aa+ab)^*$.
 Then $L(\mathcal B) \cap \Sigma^* bbaa \Sigma^* = b^*bbaa(aa+ab)^*$.
 So, by a combination of Theorem~\ref{thm:ideal_hardness}
 and Theorem~\ref{thm:uC_PSPACE-hard}
 we find that the problem is $\PSPACE$-hard.
 
\item Here $L(\mathcal B) = (a+b)(a+b)(aa+ab)^*$
 and $L(\mathcal B) \cap \Sigma^* bb \Sigma^* = bb(aa+ab)^*$.
 Hence, Theorem~\ref{thm:ideal_hardness}
 together with Theorem~\ref{thm:uC_PSPACE-hard},
 with $\Gamma = \emptyset$, $u = bb$ and $C = \{aa,ab\}$,
 we find that the problem is $\PSPACE$-hard.

\item Here $L(\mathcal B) = a(a+b)(aa+ab)^* \cup b(aa+ab)^*$.
 The constraint language $ a(a+b)(aa+ab)^*$
 gives a polynomial time solvable problem
 by Theorem~\ref{thm:UVW-theorem}.
 The constraint language $b(aa+ab)^*$
 has, by Theorem~\ref{thm:add-stuff-general},
 the same complexity as $(aa+ab)^*$.
 The latter gives a polynomial time solvable problem
 by Theorem~\ref{thm:UVW-theorem}, see the remark thereafter.
 So, by Lemma~\ref{lem:union},
 $L(\mathcal B)\textsc{-Constr-Sync}\in \PTIME$.
\end{enumerate}

\item The inner automaton $\mathcal B^{(6)}$ between the states $\{2,3\}$. 

\begin{enumerate}
\item Here $L(\mathcal B) = a(a+b)(ba+bb)^*$.
 We find $L(\mathcal B) \cap \Sigma^*aa\Sigma^* = aa(ba+bb)^*$.
 The latter language yields, by Theorem~\ref{thm:uC_PSPACE-hard},
 a $\PSPACE$-hard problem. So, by Theorem~\ref{thm:ideal_hardness},
 the original problem is $\PSPACE$-hard.

\item Here $L(\mathcal B) = b^*a(a+b)(ba+bb)^*$.
 We find $L(\mathcal B) \cap \Sigma^*aa\Sigma^* = b^*aa(ba+bb)^*$.
 The latter language yields, by Theorem~\ref{thm:uC_PSPACE-hard},
 a $\PSPACE$-hard problem. So, by Theorem~\ref{thm:ideal_hardness},
 the original problem is $\PSPACE$-hard.

\item Here $L(\mathcal B) = (a+b)(a+b)(bb+ba)^*$.
 Then $L(\mathcal B) \cap \Sigma^* aa \Sigma^* = aa(bb+ba)^*$.
 The latter language yields, by Theorem~\ref{thm:uC_PSPACE-hard},
 a $\PSPACE$-hard problem. So, by Theorem~\ref{thm:ideal_hardness},
 the original problem is $\PSPACE$-hard.

\item Here $L(\mathcal B) = a(a+b)(ba+bb)^* \cup b(ba+bb)^*$.
 Then $L(\mathcal B) \cap \Sigma^* aa \Sigma^* = aa(bb+ba)^*$.
 The latter language yields, by Theorem~\ref{thm:uC_PSPACE-hard},
 a $\PSPACE$-hard problem. So, by Theorem~\ref{thm:ideal_hardness},
 the original problem is $\PSPACE$-hard.
\end{enumerate}

\item The inner automaton $\mathcal B^{(7)}$ between the states $\{2,3\}$. 

\begin{enumerate} 
\item Here $L(\mathcal B) = aa(a+ba)^*$; by Theorem~\ref{thm:UVW-theorem},
 $L(\mathcal B)\textsc{-Constr-Sync}\in \PTIME$.

\item Here $L(\mathcal B) = b^*aa(a+ba)^*$. 
 Then $L(\mathcal B) \cap \Sigma^* bbaa \Sigma^* = b^*bbaa(a+ba)^*$.
 Hence, by combining Theorem~\ref{thm:ideal_hardness}
 and Theorem~\ref{thm:uC_PSPACE-hard}, we
 find that $L(\mathcal B)\textsc{-Constr-Sync}$
 is $\PSPACE$-hard.
 
\item Here $L(\mathcal B) = (a+b)a(a+ba)^*$; by Theorem~\ref{thm:UVW-theorem},
 $L(\mathcal B)\textsc{-Constr-Sync}\in \PTIME$.

\item Here $L(\mathcal B) = aa(a+ba)^* \cup b(a+ba)^*$.
 Then $L(\mathcal B) \cap \Sigma^* bb \Sigma^* = bba(a+ba)^*$.
 Hence, by combining Theorem~\ref{thm:ideal_hardness}
 and Theorem~\ref{thm:uC_PSPACE-hard}, we
 find that $L(\mathcal B)\textsc{-Constr-Sync}$
 is $\PSPACE$-hard.
\end{enumerate} 

\item The inner automaton $\mathcal B^{(8)}$ between the states $\{2,3\}$. 

\begin{enumerate}
\item $L(\mathcal B) = aa(b+aa)^*$; by Theorem~\ref{thm:UVW-theorem},
 $L(\mathcal B)\textsc{-Constr-Sync}\in \PTIME$.

\item $L(\mathcal B) = b^*aa(b+aa)^*$; by Theorem~\ref{thm:UVW-theorem},
 $L(\mathcal B)\textsc{-Constr-Sync}\in \PTIME$.

\item $L(\mathcal B) = (a+b)a(b+aa)^*$.
 Then $L(\mathcal B) \cap \Sigma^* bab \Sigma^* = bab(b+aa)^*$.
 Hence, by combining Theorem~\ref{thm:ideal_hardness}
 and Theorem~\ref{thm:uC_PSPACE-hard}, we
 find that the constrained synchronization with $L(\mathcal B)$
 is $\PSPACE$-hard.
 
\item Here $L(\mathcal B) = aa(b+aa)^* \cup b(b+aa)^* = (aa+b)(aa+b)^*$; 
 by Theorem~\ref{thm:UVW-theorem},
 $L(\mathcal B)\textsc{-Constr-Sync}\in \PTIME$.
\end{enumerate}

\item The inner automaton $\mathcal B^{(9)}$ between the states $\{2,3\}$. 

\begin{enumerate}
\item Here $L(\mathcal B) = ab(a+bb)^*$.
 Then $L(\mathcal B) \cap \Sigma^* aba \Sigma^* = aba(a+bb)^*$.
 Hence, by combining Theorem~\ref{thm:ideal_hardness}
 and Theorem~\ref{thm:uC_PSPACE-hard}, we
 find that the constrained synchronization with $L(\mathcal B)$
 is $\PSPACE$-hard.

\item Here $L(\mathcal B) = b^*ab(a+bb)^*$.
 Then $L(\mathcal B) \cap \Sigma^* aba \Sigma^* = b^*aba(a+bb)^*$.
 Hence, by combining Theorem~\ref{thm:ideal_hardness}
 and Theorem~\ref{thm:uC_PSPACE-hard}, we
 find that the constrained synchronization with $L(\mathcal B)$
 is $\PSPACE$-hard.

\item Here $L(\mathcal B) = (a+b)b(a+bb)^*$.
 Then $L(\mathcal B) \cap \Sigma^* aba \Sigma^* = aba(a+bb)^*$.
 Hence, by combining Theorem~\ref{thm:ideal_hardness}
 and Theorem~\ref{thm:uC_PSPACE-hard}, we
 find that the constrained synchronization with $L(\mathcal B)$
 is $\PSPACE$-hard.

\item Here $L(\mathcal B) = ab(a+bb)^* \cup b(a+bb)^*$.
 Then $L(\mathcal B) \cap \Sigma^* aba \Sigma^* = aba(a+bb)^*$.
 Hence, by combining Theorem~\ref{thm:ideal_hardness}
 and Theorem~\ref{thm:uC_PSPACE-hard}, we
 find that the constrained synchronization with $L(\mathcal B)$
 is $\PSPACE$-hard.
\end{enumerate}

\item The inner automaton $\mathcal B^{(10)}$ between the states $\{2,3\}$. 

\begin{enumerate}
\item Here $L(\mathcal B) = ab(b+ab)^*$; by Theorem~\ref{thm:UVW-theorem},
 $L(\mathcal B)\textsc{-Constr-Sync}\in \PTIME$.

\item Here $L(\mathcal B) = b^*ab(b+ab)^*$; by Theorem~\ref{thm:UVW-theorem},
 $L(\mathcal B)\textsc{-Constr-Sync}\in \PTIME$.

\item Here $L(\mathcal B) = (a+b)b(b+ab)^*$; by Theorem~\ref{thm:UVW-theorem},
 $L(\mathcal B)\textsc{-Constr-Sync}\in \PTIME$.
 
\item Here $L(\mathcal B) = (ab+b)(b+ab)^*$; by Theorem~\ref{thm:UVW-theorem},
 $L(\mathcal B)\textsc{-Constr-Sync}\in \PTIME$.
\end{enumerate}

\item The inner automaton $\mathcal B^{(11)}$ between the states $\{2,3\}$. 

\begin{enumerate}
\item Here $L(\mathcal B) = aa(aa+ba)^*$; by Theorem~\ref{thm:UVW-theorem},
 $L(\mathcal B)\textsc{-Constr-Sync}\in \PTIME$.

\item Here $L(\mathcal B) = b^*aa(aa+ba)^*$.
 Then $L(\mathcal B) \cap \Sigma^* bba \Sigma^* = b^*bbaa(aa+ba)^*$.
 Hence, by combining Theorem~\ref{thm:ideal_hardness}
 and Theorem~\ref{thm:uC_PSPACE-hard}, we
 find that the constrained synchronization with $L(\mathcal B)$
 is $\PSPACE$-hard.

\item Here $L(\mathcal B) = (a+b)a(aa+ba)^*$; by Theorem~\ref{thm:UVW-theorem},
 $L(\mathcal B)\textsc{-Constr-Sync}\in \PTIME$.

\item Here $L(\mathcal B) = (aa+b)(aa+ba)^*$.
 Then $L(\mathcal B) \cap \Sigma^* bba \Sigma^* = bba(aa+ba)^*$.
 Hence, by combining Theorem~\ref{thm:ideal_hardness}
 and Theorem~\ref{thm:uC_PSPACE-hard}, we
 find that the constrained synchronization with $L(\mathcal B)$
 is $\PSPACE$-hard.
\end{enumerate}

\item The inner automaton $\mathcal B^{(12)}$ between the states $\{2,3\}$. 

\begin{enumerate}
\item Here $L(\mathcal B) = ab(ab+bb)^*$; by Theorem~\ref{thm:UVW-theorem},
 $L(\mathcal B)\textsc{-Constr-Sync}\in \PTIME$.
 
\item Here $L(\mathcal B) = b^*ab(ab+bb)^*$; by Theorem~\ref{thm:UVW-theorem},
 $L(\mathcal B)\textsc{-Constr-Sync}\in \PTIME$.
 
\item Here $L(\mathcal B) = (a+b)b(ab+bb)^*$; by Theorem~\ref{thm:UVW-theorem},
 $L(\mathcal B)\textsc{-Constr-Sync}\in \PTIME$.

\item Here $L(\mathcal B) = (ab+b)(ab+bb)^*$; by Theorem~\ref{thm:UVW-theorem},
 $L(\mathcal B)\textsc{-Constr-Sync}\in \PTIME$.
\end{enumerate}
\end{enumerate}

\noindent ii) The set $\{1,2\}$ is one strongly connected component.

\medskip 

We will also use the following result here, stating
that, regarding strongly connected components, the placement of the starting
state could be arbitrary.

\begin{quote}
\begin{theorem}
\label{thm:initial_states_in_one_SCC}
 Let $\mathcal B = (\Sigma, P, \mu, p_0, F)$ be a constraint automaton.
 Let $R \subseteq P$ be all states from the same strongly connected
 component as $p_0$, i.e., for each $r \in R$ we have words $u,v \in \Sigma^*$
 such that $\mu(r, u) = p_0$ and $\mu(p_0, v) = r$.
 Then, for any $p \in R$,
 $$
  L(\mathcal B)\textsc{-Constr-Sync} \equiv_m^{\log} L(\mathcal B_{p, F})\textsc{-Constr-Sync}.
 $$
\end{theorem}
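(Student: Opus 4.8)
The plan is to reduce the claim to Theorem~\ref{thm:add-stuff-general} by establishing the two mutual factor inclusions
\[
 L(\mathcal B_{p_0,F}) \subseteq \factor(L(\mathcal B_{p,F}))
 \quad\text{and}\quad
 L(\mathcal B_{p,F}) \subseteq \factor(L(\mathcal B_{p_0,F})).
\]
That theorem converts exactly such a pair of inclusions into a logspace equivalence of the associated constrained synchronization problems, and since $L(\mathcal B) = L(\mathcal B_{p_0,F})$ by our notational convention ($p_0 = 1$ is the start state), this is precisely the asserted equivalence. The crucial observation is that both inclusions become available \emph{because} $p$ lies in the same strongly connected component as $p_0$: strong connectivity supplies reachability in both directions, and each direction yields one of the two inclusions.

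First I would fix, once and for all, words $u,v \in \Sigma^*$ with $\mu(p,u) = p_0$ and $\mu(p_0,v) = p$; these exist by the definition of $R$ and depend only on the fixed automaton $\mathcal B$, not on any input, so prepending them is legitimate for a factor relation. For the first inclusion, take any $w \in L(\mathcal B_{p_0,F})$, i.e.\ $\mu(p_0,w) \in F$. Then $\mu(p,uw) = \mu(\mu(p,u),w) = \mu(p_0,w) \in F$, so $uw \in L(\mathcal B_{p,F})$ and $w$ is a suffix, hence a factor, of a word in $L(\mathcal B_{p,F})$. Symmetrically, for any $w \in L(\mathcal B_{p,F})$ we get $\mu(p_0,vw) = \mu(p,w) \in F$, so $vw \in L(\mathcal B_{p_0,F})$ and $w$ is a factor of a word in $L(\mathcal B_{p_0,F})$.

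With both inclusions established, Theorem~\ref{thm:add-stuff-general} immediately yields $L(\mathcal B_{p_0,F})\textsc{-Constr-Sync} \equiv_m^{\log} L(\mathcal B_{p,F})\textsc{-Constr-Sync}$; concretely the reduction is the identity on the input DCSA, since prepending the fixed words $u$ or $v$ only sandwiches a candidate synchronizing word, and the set of synchronizing words is a two-sided ideal (Lemma~\ref{lem:append_sync}). I do not anticipate a genuine obstacle here, as each step is routine once the reduction target is identified; the only point that needs care is not to overlook that \emph{both} factor inclusions are required. This is exactly where strong connectivity is indispensable: with mere reachability of $p$ from $p_0$ one obtains only $L(\mathcal B_{p,F}) \subseteq \factor(L(\mathcal B_{p_0,F}))$, which alone does not give the equivalence.
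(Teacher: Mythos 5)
Your proof is correct. The paper argues the statement directly: it takes the identity map on input DCSAs as the reduction and verifies both directions by hand --- given a synchronizing word $w$ with $\mu(p_0,w) \in F$, it prepends a word $v$ with $\mu(p,v) = p_0$ to obtain $vw \in L(\mathcal B_{p,F})$, which still synchronizes because synchronizing words form a two-sided ideal, and symmetrically for the converse. You instead package exactly this argument as an application of Theorem~\ref{thm:add-stuff-general}: strong connectivity yields the two factor inclusions $L(\mathcal B_{p_0,F}) \subseteq \factor(L(\mathcal B_{p,F}))$ and $L(\mathcal B_{p,F}) \subseteq \factor(L(\mathcal B_{p_0,F}))$, and that theorem converts them into the logspace equivalence. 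The two routes rest on the same two ingredients (connectivity words in both directions, and the ideal property via Lemma~\ref{lem:append_sync}); indeed, the paper's direct verification is essentially an inlined copy of the proof of Theorem~\ref{thm:add-stuff-general}. What your version buys is modularity: the theorem is exposed as a corollary of an already-established result, the only new content being two one-line inclusions. What the paper's version buys is self-containedness, avoiding any dependence on the factor-inclusion machinery. Your closing observation --- that mere reachability of $p$ from $p_0$ yields only one inclusion and hence no equivalence --- correctly pinpoints where strong connectivity is indispensable; the same asymmetry is visible in the paper's proof, where each direction of the verification consumes one of the two connectivity words.
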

\begin{proof}
 Notation as in the statement. Suppose $\mathcal A = (\Sigma, Q, \delta)$ is a semi-automaton.
 Then, $\mathcal A$ has a synchronizing word in $L(\mathcal B)$
 if and only if it has one in $L(\mathcal B_{p, F})$.
 For, if we have $|\delta(Q, w)| = 1$ with $\mu(p_0, w) \in F$,
 then choose $v \in \Sigma^*$ with $\mu(p, v) = p_0$.
 Hence $vw \in L(\mathcal B_{p, F})$ and $|\delta(Q, vw)| = 1$.
 Conversely, if we have $w \in L(\mathcal B_{p, F})$
 with $|\delta(Q, w)|$. Then choose $v \in \Sigma^*$
 with $\mu(p_0, v) = p$, and we have
 $|\delta(Q, vw)| = 1$ and $vw \in L(\mathcal B)$. \qed
\end{proof}
\end{quote}

Recall that by the assumption at the very beginning of this proof $|\Sigma_{3,3}|\le 1$
and $\Sigma_{3,1} = \emptyset$.
If $\Sigma_{1,3} = \Sigma_{2,3} = \emptyset$, we cannot reach the final state,
hence $L(\mathcal B) = \emptyset$.
By assumption, the state set $\{1,2\}$ is one strongly connected component,
which implies $\Sigma_{2,1} \ne \emptyset$.
%
So, if $\Sigma_{1,2} = \{a,b\}$, which implies $\Sigma_{1,3} = \emptyset$,
in the only remaining case, stated in Claim~1 below, 
the problem is a $\PSPACE$-hard problem.

\medskip 

\noindent (a) The case $\Sigma_{1,3} \ne \emptyset$.

\medskip

As $\{1,2\}$ is a strongly connected component,
we must have $\Sigma_{1,2} \ne \emptyset$.
If $\Sigma_{1,3}$ and $\Sigma_{1,2}$
are both non-empty, we must have, by determinism and as $\Sigma = \{a,b\}$,
$\Sigma_{1,1} = \emptyset$.
Assume, without loss of generality,
$\Sigma_{1,2} = \{a\}$, then $\Sigma_{1,3} = \{ b \}$.
Then
\[
L = b\Sigma_{3,3}^* \cup (a\Sigma_{2,1})^*\Sigma_{2,3}\Sigma_{3,3}^*.
\]
The symbols in $\Sigma_{2,1}$ and $\Sigma_{2,3}$
must be different, and both sets are assumed to be non-empty
by the previous arguments.
By Theorem~\ref{thm:classification_MFCS_paper}, for the constraint language $b\Sigma_{3,3}^*$
the constrained synchronization problem is polynomial time solvable. 
For the other language, consider the homomorphism $\varphi : \{a,b,c\}^* \to \{a,b\}^*$
given by $\varphi(\{a\}) = a\Sigma_{2,1}$, $\varphi(\{b\}) = \Sigma_{2,3}$
and $\varphi(\{c\}) = \Sigma_{3,3}$ if $|\Sigma_{3,3}| = 1$,
$\varphi(c) = \varepsilon$ otherwise (note that $|\Sigma| \le 1$).
Then $\varphi(a^*bc^*) = (a\Sigma_{2,1})^*\Sigma_{2,3}\Sigma_{3,3}^*$
and $(a^*bc^*)\textsc{-Constr-Sync}\in\PTIME$
by Theorem~\ref{thm:classification_MFCS_paper}, so that the constraint
problem for $\varphi(a^*bc^*)$ is also polynomial time solvable
by Theorem~\ref{thm:hom_lower_bound_complexity}.
So, with Lemma~\ref{lem:union}, $L\textsc{-Constr-Sync}\in \PTIME$.

\medskip

\noindent (b) The case $\Sigma_{1,3} = \emptyset$, $\Sigma_{1,2} \ne \emptyset$, $\Sigma_{2,1}\ne \emptyset$, $\Sigma_{2,3} \ne \emptyset$ and $|\Sigma_{3,3}|\le 1$.

\medskip 

If $\Sigma_{1,1} \ne \emptyset$, then by Claim~2 below the constraint problem
is $\PSPACE$-hard.
If $\Sigma_{1,1} = \emptyset$ and $|\Sigma_{1,2}| = 1$, we have
$$
 L = (\Sigma_{1,2}\Sigma_{2,1})^*\Sigma_{2,3}\Sigma_{3,3}^*.
$$
As $1 = |\Sigma_{1,2}| = |\Sigma_{2,3}|$, similarly as above
the corresponding constrained problem is polynomial time solvable.


\medskip

\noindent\underline{Claim 1: } If $\Sigma_{1,2} = \{a,b\}$,
$\Sigma_{2,1}$ and $\Sigma_{2,3}$ are non-empty and $|\Sigma_{3,3}| \le 1$,
the problem is $\PSPACE$-hard.
\begin{quote}
 As $\Sigma = \{a,b\}$
 and by determinism of $\mathcal B$, 
 we must have $|\Sigma_{2,1}| = |\Sigma_{2,3}| = 1$
 with a distinct symbol in each set.
 Without loss of generality, we can assume $\Sigma_{2,1} = \{ b \}$
 and $\Sigma_{1,2} = \{a\}$.
 By Lemma~\ref{lem:start-final-three-states},
 we can suppose $\mathcal B = (\Sigma, \{1,2,3\}, \mu, 1, \{3\})$.
 Then, $L(\mathcal B) = (ab+bb)^*(a+b)a\Sigma_{3,3}^*$.
 We have $L(\mathcal B) \cap \Sigma^*bbaa\Sigma^* = (ab+bb)bbaa\Sigma_{3,3}^*$.
 Set $C = \{ab,bb\}$, $u = bbaa$ and $\Gamma = \Sigma_{3,3}$. We have
 \begin{enumerate}
 \item $C$ is finite, prefix-free and $C^* \cap \Sigma^* bbaa \Sigma^* = \emptyset$;
 \item as $|\Gamma| \le 1$, $u$ contains at least one letter not in $\Gamma$;
 \item $\suff(u) \cap \pref(u) = \{ \varepsilon, u \}$;
 \item for $x = bb \in C$, if $vbbw \in \{ab,bb,bbaa\}^*$ with $v,w \in \Sigma^*$,
 then $vbb \in \{ab,bb,bbaa\}^*$.
 \end{enumerate}
 So, we can apply Theorem~\ref{thm:uC_PSPACE-hard}, which gives $\PSPACE$-hardness.
\end{quote}

\noindent\underline{Claim 2:} 
If $\Sigma_{1,1} \ne \emptyset$, $\Sigma_{1,2}\ne \emptyset$,
$\Sigma_{2,1} \ne \emptyset$, $\Sigma_{2,3}\ne \emptyset$
and $|\Sigma_{3,3}| \le 1$, then
$L(\mathcal B)\textsc{-Constr-Sync}$ is $\PSPACE$-hard.

Without loss of generality, assume $\Sigma_{1,1} = \{a\}$, and hence
$\Sigma_{1,2} = \{ b \}$.
We distinguish two cases for the sets $\Sigma_{2,1}$
and $\Sigma_{2,3}$.

\begin{enumerate}
\item $\Sigma_{2,1} = \{a\}$, $\Sigma_{2,3} = \{b\}$, $|\Sigma_{3,3}|\le 1$

 We have $L(\mathcal B) = (a + ba)^*bb\Sigma_{3,3}^*$
 and $L(\mathcal B) \cap \Sigma^* abb \Sigma^* = (a+ba)^*abb\Sigma_{3,3}^*$.
 Set $C = \{a,ba\}$, $u = abb$, $\Gamma = \Sigma_{3,3}$.
 Then,
 \begin{enumerate}
 \item $C$ is finite, prefix-free and $C^* \cap \Sigma^* abb \Sigma^* = \emptyset$;
 \item as $|\Gamma| \le 1$, $u$ contains at least one letter not in $\Gamma$;
 \item $\suff(u) \cap \pref(u) = \{ \varepsilon, u \}$;
 \item for $x = a \in C$, if $vaw \in \{a,ba,abb\}^*$ with $v,w \in \Sigma^*$,
 then $va \in \{a,ba,abb\}^*$, as we can easily check if we write $vaw$
 as a unique product of words from $\{a,ba,abb\}$ and a case distinction which word in the factorisation
 contains the letter $a$ under consideration.
 \end{enumerate}
 
 So, Theorem~\ref{thm:ideal_hardness}
 and Theorem~\ref{thm:uC_PSPACE-hard} give $\PSPACE$-hardness.

\item $\Sigma_{2,1} = \{b\}$, $\Sigma_{2,3} = \{a\}$, $|\Sigma_{3,3}|\le 1$    

 We have $L(\mathcal B) = (a + bb)^*ba\Sigma_{3,3}^*$.
 Hence, $L(\mathcal B) \cap \Sigma^* bbaba \Sigma^* = (a+bb)^*bbaba\Sigma_{3,3}^*$.
 With $C = \{a,bb\}$, $u = bbaba$ and $\Gamma = \Sigma_{3,3}$. We have
 \begin{enumerate}
 \item $C$ is finite, prefix-free and $C^* \cap \Sigma^* bbaba \Sigma^* = \emptyset$;
 \item as $|\Gamma| \le 1$, $u$ contains at least one letter not in $\Gamma$;
 \item $\suff(u) \cap \pref(u) = \{ \varepsilon, u \}$;
 \item for $x = bb \in C$, if $vbbw \in \{a,bb,bbaba\}^*$ with $v,w \in \Sigma^*$,
 then $vbb \in \{a,bb,bbaba\}^*$.
 \end{enumerate}
 So, we can apply Theorem~\ref{thm:uC_PSPACE-hard}, which gives $\PSPACE$-hardness.
\end{enumerate}
Hence, the claim is proven. \qed

\medskip

\noindent iii) The case that $\{1\}, \{2\}$ and $\{3\}$ are all the strongly connected components.

\medskip 

\begin{proposition}\label{prop:add_Sigma11_makes_it_harder}
 Let $\mathcal B = (\Sigma, P, \mu, p_0, F)$
 be some constraint automaton.
 Suppose $\Gamma \subseteq \{ x \mid \{p_0\} \times \{x\} \times P \cap \mu = \emptyset \}$, i.e.,
 we have no transition labelled by letters from $\Gamma$ leaving 
 the start state.
 Then
 $$
  L(\mathcal B)\textsc{-Constr-Sync} \le_m^{\log} (\Gamma^*\cdot L(\mathcal B))\textsc{-Constr-Sync}.
 $$
Intuitively, adding self-loops at the start state gives a harder synchronization problem.
\end{proposition}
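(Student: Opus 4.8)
The plan is to give an explicit logspace many-one reduction on the instance side that realises the stated intuition: reading letters of $\Gamma$ first should be rendered harmless. The structural fact I will exploit is that, by the hypothesis on $\Gamma$, the transition $\mu(p_0,x)$ is undefined for every $x\in\Gamma$; hence no nonempty word of $L(\mathcal B)$ can begin with a letter of $\Gamma$, i.e. the first letter of any nonempty $v\in L(\mathcal B)$ lies in $\Sigma\setminus\Gamma$. Under the conventions of this section $p_0\notin F$, so $\varepsilon\notin L(\mathcal B)$ and every word of $L(\mathcal B)$ is nonempty; I will assume this, the degenerate case $L(\mathcal B)=\{\varepsilon\}$ being trivial.

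Given an input DCSA $\mathcal A=(\Sigma,Q,\delta)$ for $L(\mathcal B)\textsc{-Constr-Sync}$, I would build $\mathcal A'=(\Sigma,Q\times\{0,1\},\delta')$ carrying a binary flag recording whether we are still in a ``waiting'' phase (flag $0$) or in a ``working'' phase (flag $1$). The transitions are: for $x\in\Gamma$, set $\delta'((q,0),x)=(q,0)$ and $\delta'((q,1),x)=(\delta(q,x),1)$; for $x\in\Sigma\setminus\Gamma$, set $\delta'((q,0),x)=(\delta(q,x),1)$ and $\delta'((q,1),x)=(\delta(q,x),1)$. Thus in the working copy $\mathcal A'$ simulates $\mathcal A$ faithfully, while in the waiting copy a $\Gamma$-letter acts as the identity and the first non-$\Gamma$-letter both applies $\delta$ and switches to the working copy. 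This is deterministic, complete, and constructible from $\mathcal A$ in logarithmic space.

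The heart of the argument is one computation. Fix $g\in\Gamma^*$ and a nonempty $v\in L(\mathcal B)$ whose first letter $x$ lies in $\Sigma\setminus\Gamma$. Reading $g$ fixes the waiting copy pointwise and moves the working copy to $\delta(Q,g)\times\{1\}$, so the reached set is $(Q\times\{0\})\cup(\delta(Q,g)\times\{1\})$; reading $x$ sends everything into the working copy, collapsing this to $\delta(Q,x)\times\{1\}$ (here $\delta(Q,gx)\subseteq\delta(Q,x)$ is used); reading the remainder of $v$ then simulates $\mathcal A$, giving $\delta'(Q\times\{0,1\},gv)=\delta(Q,v)\times\{1\}$. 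Hence $|\delta'(Q\times\{0,1\},gv)|=|\delta(Q,v)|$. This yields both directions at once: if $v\in L(\mathcal B)$ synchronises $\mathcal A$ then $v=\varepsilon\,v\in\Gamma^*L(\mathcal B)$ synchronises $\mathcal A'$; conversely any $w\in\Gamma^*L(\mathcal B)$ synchronising $\mathcal A'$ factors as $w=gv$ with $g\in\Gamma^*$ and $v\in L(\mathcal B)$, and the identity forces $|\delta(Q,v)|=1$, so $v\in L(\mathcal B)$ synchronises $\mathcal A$.

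I expect the only delicate point to be the flag bookkeeping, specifically verifying that the waiting copy is genuinely flushed into the working copy by the first non-$\Gamma$-letter so that the two copies reunite into a single subset of $Q\times\{1\}$; this is exactly where the hypothesis that $\Gamma$-letters are undefined at $p_0$ (hence never lead a word of $L(\mathcal B)$) is indispensable. The remaining subtlety is the empty-word case: when $p_0\in F$ one could have $v=\varepsilon$, and then the two flag copies never merge; but under the standing convention $p_0\notin F$ this does not occur, and the isolated degenerate case $L(\mathcal B)=\{\varepsilon\}$ makes $L(\mathcal B)\textsc{-Constr-Sync}$ trivial and can be handled directly.
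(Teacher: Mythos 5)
Your proof is correct and is essentially the paper's own argument: the paper builds $Q' = Q \cup Q_1$ with a disjoint ``waiting'' copy $Q_1$ on which $\Gamma$-letters act as the identity and which is flushed into the original $Q$ by the first non-$\Gamma$ letter, which is exactly your $Q\times\{0,1\}$ flag construction, and the key correctness observation (no word of $L(\mathcal B)$ can begin with a $\Gamma$-letter, so after the leading $\Gamma^*$-block the two copies merge and $\mathcal A'$ simulates $\mathcal A$) is identical. The only difference is bookkeeping at the margins: the paper dispenses with the empty-word issue by assuming $|Q|>1$ for the input DCSA (single-state inputs being trivial), whereas you exclude $\varepsilon \in L(\mathcal B)$ via the section's convention $p_0 \notin F$ and treat $L(\mathcal B)=\{\varepsilon\}$ separately; both resolutions are fine.
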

\begin{proof}
 We give a reduction from $L(\mathcal B)\textsc{-Constr-Sync}$
 to $(\Gamma^*\cdot L(\mathcal B))\textsc{-Constr-Sync}$.
 Let $\mathcal A = (\Sigma, Q, \delta)$ be some input semi-automaton.
 We can assume $|Q| > 1$, as single state DCSA's are obviously
 synchronizing, for any constraint language.
 We modify $\mathcal A' = (\Sigma, Q', \delta')$ by setting
 $Q' = Q \cup Q_1$, where $Q_1 = \{ q_1 \mid q \in Q \}$
 is a disjoint copy of $Q$. Note the implicit correspondence
 between $Q$ and $Q_1$ set up in the notation. We will use
 this correspondence in the next definition.
 Set
 $$
  \delta'(t, x) = \left\{ 
  \begin{array}{ll}
   t                & \mbox{if } t \in Q_1, x \in \Gamma; \\
   \delta(q, x)     & \mbox{if } t = q_1, q_1 \in Q_1, x \notin \Gamma; \\
   \delta(t,x)      & \mbox{if } t \in Q.
  \end{array}
  \right.
 $$
 Then, $\mathcal A'$ has a synchronizing word in $\Gamma^*\cdot L(\mathcal B)$
 if and only if $\mathcal A$ has a synchronizing word in $L(\mathcal B)$.
 
 \begin{enumerate}
 \item Suppose we have some $w \in L(\mathcal B)$
  with $|\delta(Q, w)| = 1$. As $|Q| > 1$, we must have $|w| > 0$.
  Write $w = xu$ with $x \in \Sigma$. As $w \in L(\mathcal B)$,
  some transition labelled by $x$ must emanate from the
  start state, i.e., $x \notin \Gamma$.
  By construction of $\mathcal A'$, $\delta'(Q_1, x) = \delta'(Q, x)$.
  Hence $\delta'(Q', xu) = \delta'(\delta'(Q,x), u)$.
  As on $Q$, $\mathcal A'$ and $\mathcal A$ operate the same way,
  we have $\delta'(\delta'(Q,x), u) = \delta(Q, w)$.
  Hence, $\mathcal A'$ is synchronized by $w$.
 
 \item Conversely, suppose we have a synchronizing word $w \in \Gamma^*\cdot L(\mathcal B)$
  for $\mathcal A'$.
  As $\delta(Q_1, x) = Q_1$ for any $x \in \Gamma$ and $|Q_1| > 1$,
  in $w$ we must have at least one letter not from $\Gamma$.
  Write $w = uxv$ with $u \in \Gamma^*$, $x \notin \Gamma$
  and $v \in \Sigma^*$.
  Then, by construction, $\delta'(Q', u) = Q_1 \cup \delta(Q, u)$,
  where $\delta'(Q, u) = \delta(Q,u)$, because on $Q$, $\mathcal A'$ and $\mathcal A$ 
  operate the same way.
  As $x \notin \Gamma$, we have $\delta'(Q_1, x) = \delta(Q, x)$.
  Hence $\delta'(Q', ux) = \delta(Q,x) \cup \delta(Q, u)$.
  So 
  $$
   \delta'(Q', uxv) = \delta'(\delta(Q, x) \cup \delta(Q, u), v)
    = \delta(\delta(Q,x) \cup \delta(Q, u), v).
  $$
  As this set is a singleton, and
  $\delta(Q,xv) = \delta(\delta(Q,x),v) \subseteq \delta(\delta(Q,x) \cup \delta(Q, u), v)$,
  the word $xv \in L(\mathcal B)$ synchronizes $\mathcal A$.
 \end{enumerate}
 This shows the statement.~\qed
\end{proof}

 Note that, by assumption, if $\Sigma_{i,j} \ne \emptyset$, $i,j \in \{1,2,3\}$,
 for $i \ne j$, we have $\Sigma_{j,i} = \emptyset$.

\begin{enumerate} 

\item $\Sigma_{1,2} \ne \emptyset$ and $\Sigma_{2,3} = \emptyset$.
 
 Then, as also $\Sigma_{2,1} = \emptyset$,
 the state $2$ is a non-final sink state, which also could
 be omitted by Lemma~\ref{lem:co-accessible-states-only}.
 But then our constraint language could be described
 by a two-state PDFA over a binary alphabet.
 Hence, by Theorem~\ref{thm:classification_MFCS_paper},
 we find $L\textsc{-Constr-Sync}\in\PTIME$.

\item $\Sigma_{1,2} \ne \emptyset$ and $\Sigma_{2,3} \ne \emptyset$.

 Then, $\Sigma_{2,1} = \emptyset$ and $\Sigma_{3,2} = \emptyset$.
 Hence
 $$
  L = \Sigma_{1,1}^* \Sigma_{1,2} \Sigma_{2,2}^* \Sigma_{2,3} \Sigma_{3,3}^*
      \cup \Sigma_{1,1}^* \Sigma_{1,3} \Sigma_{3,3}^*
 $$
 with $\max\{ |\Sigma_{1,1}|, |\Sigma_{2,2}|, |\Sigma_{3,3}|\}\le 1$.
 So $L \subseteq \Sigma_{1,1}^* \Sigma_{1,3}^* \Sigma_{1,2}^* \Sigma_{2,2}^* \Sigma_{2,3}^* \Sigma_{3,3}^*$ and as $|\Sigma_{i,j}| \le 1$ for each $i,j\in \{1,\ldots,3\}$
 we find that $L$ is a bounded language.
 So, by Theorem~\ref{thm:bounded_in_NP}, $L\textsc{-Constr-Sync}\in \NP$.
 We distinguish various sub-cases.

 \begin{enumerate}
 \item[(i)] $\Sigma_{1,2} \ne \emptyset$ and $\Sigma_{1,3} \ne \emptyset$, $\Sigma_{2,3} \ne \emptyset$. 
  As $\Sigma_{1,2} \cap \Sigma_{1,3} = \emptyset$ by determinism
  of $\mathcal B$, we have $\Sigma_{1,1} = \emptyset$.
  Also $|\Sigma_{2,2}| \le 1$ as $\Sigma_{2,3}\ne \emptyset$.
  Recall that by our general assumption, written out at the very beginning
  of the proof, $|\Sigma_{3,3}|\le 1$.
  Write
  $$
   L(\mathcal B) = \Sigma_{1,2}\Sigma_{2,2}^*\Sigma_{3,3}^* \cup \Sigma_{1,3}\Sigma_{3,3}^*.
  $$
  The language $\Sigma_{1,3}\Sigma_{3,3}^*$ 
  could be described
  by a two-state automaton, hence, as we are over a binary alphabet,
  gives a constraint synchronization problem in $\PTIME$
  by Theorem \ref{thm:classification_MFCS_paper}.
  If $\Sigma_{1,2} = \Sigma_{2,2}$, as then
  $$
    \Sigma_{1,2}\Sigma_{2,2}^*\Sigma_{3,3}^* \subseteq \Sigma_{2,2}^*\Sigma_{3,3}^*
     \subseteq \factor(  \Sigma_{1,2}\Sigma_{2,2}^*\Sigma_{3,3}^* ).
  $$
  For $\Sigma_{2,2}^*\Sigma_{3,3}^*$, as $|\Sigma_{3,3}|\le 1$
  and we have a binary alphabet, this language is recognizable by
  a two-state PDFA. Hence, by Theorem~\ref{thm:classification_MFCS_paper},
  gives a polynomial time solvable constraint problem.
  So, by Theorem \ref{thm:add-stuff-general} also the constraint
  synchronization problem for the language $\Sigma_{1,2}\Sigma_{1,2}^*\Sigma_{3,3}^*$
  is in $\PTIME$. With Lemma \ref{lem:union}, then $L(\mathcal B)\textsc{-Constr-Sync} \in \PTIME$.
  Otherwise, if $\Sigma_{1,2} \ne \Sigma_{2,2}$ with $\Sigma_{2,2} \ne \emptyset$, then 
  $\Sigma_{1,2} = \Sigma_{2,3}$.
  We show that, in this case, the problem is $\NP$-complete.
  
  For definiteness, assume $\Sigma_{1,2} = \Sigma_{2,3} = \{a\}$
  and $\Sigma_{2,2} = \{b\}$.

  Then, $\Sigma_{1,1}\cup \Sigma_{1,3} \subseteq \{a\}$, with at least
  one of both sets being empty.
  If $\Sigma_{1,3} = \{a\}$, then, regardless of the value of $\Sigma_{3,3}$,
  we have $L(\mathcal B) \cap \Sigma^* ba^+b \Sigma_{3,3}^* = ba^+b\Sigma_{3,3}^*$.
  Then, if $\Sigma_{3,3} = \{b\}$ or $\Sigma_{3,3} = \emptyset$, apply Proposition~\ref{prop:NPc} 
  with $u = ba$, $v = a$ and $U = b\Sigma_{3,3}^*$.
  The only case left is $ba^+ba^*$, which we handle next.
  
 \begin{claiminproof}
  For $L = ba^+ba^*$, the problem $L\textsc{-Constr-Sync}$
  is $\NP$-hard.
 \end{claiminproof}
 \begin{claimproof}
  By Proposition~\ref{prop:NPc}, for $ba^+b$
  the constrained synchronization problem is
  $\NP$-hard. Looking at the reduction in~\cite{DBLP:conf/ictcs/Hoffmann20},
  it is $\NP$-hard even for input semi-automata with a sink state.
  We will give a reduction from this problem for input semi-automata
  with a sink state to $L\textsc{-Constr-Sync}$.
  Let $\mathcal A = (\{a,b\}, Q, \delta)$ be an input semi-automaton
  with sink state $t \in Q$.
  
  Denote by $t' \notin Q$ a new state.
  Construct $\mathcal A' = (\{a,b\}, Q \times \{1,2\} \cup \{t'\}, \delta')$
  with 
  \begin{align*}
      \delta'((q,1), a) & = (q,1); \\
      \delta'((q,1), b) & = (\delta(q,b), 2); \\ 
      \delta'((q,2), a) & = (\delta(q,a), 2); \\
      \delta'((q,2), b) & = \left\{ 
      \begin{array}{ll}
       (\delta(q,b), 2) & \mbox{if } \delta(q, b) \ne t; \\
       t'               & \mbox{if } \delta(q, b) = t;
      \end{array} 
      \right. \\
      \delta'(t', a) & = \delta'(t', b) = t'.
  \end{align*}
  
  Suppose we have $w \in ba^+b$
  that synchronizes $\mathcal A$.
  Then, as $t$ is a sink state, we have $\delta(Q, w) = \{t\}$.
  Write $w = ba^nb$ with $n > 0$.
  By construction of $\mathcal A'$,
  for any $q \in Q$, we have 
  \begin{align*} 
   \delta'((q, 1), ba^m) & = (\delta(q, ba^m), 2) \\
   \delta'((q, 2), ba^m) & = \left\{
   \begin{array}{ll}
     (\delta(q, ba^m), 2) & \mbox{if } \delta(q, b) \ne t; \\
     t'                   & \mbox{if } \delta(q, b) = t.
   \end{array}
   \right.
  \end{align*}
  As $\delta(t, b) = t$, if $\delta(q, ba^n) = t$,
  we have 
  \[
  \delta'((q, 1), ba^nb) = \delta'((q,2), ba^nb) = t'.
  \]
  If $\delta(q, ba^n) \ne t$, as $\delta(q, b^nb) = t$,
  we also have $\delta'((q, 1), ba^nb) = \delta'((q,2), ba^nb) = t'$.
  So, $\delta'(Q \times \{1,2\} \cup \{t'\}, ba^nb) = \{t'\}$
  and $ba^nb \in ba^+ba^*$.

  Conversely, suppose we have $w \in ba^+ba^*$
  that synchronizes $\mathcal A'$.
  As $t'$ is a sink state in $\mathcal A'$, we have $\delta'(Q \cup \{t'\}, w) = \{t'\}$.
  By construction of $\mathcal A'$, we can only enter $t'$
  from states in $Q \times\{2\}$, and only from those $(q, 2) \in Q\times\{2\}$
  such that $\delta(q, b) = t$ and only by reading the letter $b$.
  So, if we write $w = ba^nba^m$ with $n > 0$, $m \ge 0$, we must have 
  \[
   \delta'(Q\times\{1,2\}\cup \{t'\}, ba^nb) = \{t'\}.
  \] 
  But then, if $q \in Q$, as $\delta'((q,1), ba^nb) = t'$,
  we can deduce $\delta(q, ba^nb) = t$.
  Hence, $\delta(Q, ba^nb) = \{t\}$.
 \end{claimproof}

  If $\Sigma_{1,1} \ne \emptyset$, then 
  we can use Proposition~\ref{prop:add_Sigma11_makes_it_harder}.

  So, for $\Sigma_{1,2} \ne \Sigma_{2,2}$ with $\Sigma_{2,2} \ne \emptyset$,
  under the additional assumptions of this case,
  we have shown that $L(\mathcal B)\textsc{-Constr-Sync}$
  is \NP-complete.
  Now for the next case.
  If $\Sigma_{1,2} \ne \Sigma_{2,2}$ with $\Sigma_{2,2} = \emptyset$,
  then
  $$
   L(\mathcal B_{p_0, \{p_2\}}) = \Sigma_{1,2}\Sigma_{2,3}\Sigma_{3,3}^* \cup \Sigma_{1,3}\Sigma_{3,3}^*. 
  $$
  We have $(\Sigma_{1,3}\Sigma_{3,3}^*)\textsc{-Constr-Sync}\in \PTIME$
  by Theorem \ref{thm:classification_MFCS_paper}, as it could be described by a two-state
  automaton, and we have $|\Sigma| = 2$ here.
  Assume $\Sigma_{1,2} = \{x\}, \Sigma_{2,3}=\{y\}, \Sigma_{3,3}=\{z\}$,
  where we do not suppose these letters to be distinct.
  Define a morphism $\varphi : \{e,f\}^* \to \Sigma^*$
  by $\varphi(e) = xy$ and $\varphi(f) = z$.
  Then $\varphi(ef^*) = \Sigma_{1,2}\Sigma_{2,3}\Sigma_{3,3}^*$.
  The constraint synchronization is in $\PTIME$ for $e^*f$, by Theorem~\ref{thm:classification_MFCS_paper},
  as it is over a binary alphabet and could be described by a two-state automaton.
  Hence, applying Theorem~\ref{thm:hom_lower_bound_complexity}, 
  then gives $(\Sigma_{1,2}\Sigma_{2,}\Sigma_{3,3}^*)\textsc{-Constr-Sync}\in \PTIME$.
  By Lemma~\ref{lem:union}, $L(\mathcal B_{p_0, \{p_2\}})\textsc{-Constr-Sync}\in \PTIME$.

 \item[(ii)] $\Sigma_{1,2} \ne \emptyset$ and $\Sigma_{1,3} = \emptyset$, $\Sigma_{2,3} \ne \emptyset$.
  
  Recall that $|\Sigma_{3,3}|\le 1$.
  That $|\Sigma_{1,1}| \le 1$ and $|\Sigma_{2,2}|\le 1$
  is implied by determinism of $\mathcal B$.

  If $\Sigma_{1,2} = \{a,b\}$ and $\Sigma_{2,2}\ne \emptyset$,
  then $L(\mathcal B)\textsc{-Constr-Sync}$ is $\NP$-complete.
  We have
  $
  L(\mathcal B) = (a+b)\Sigma_{2,2}^*\Sigma_{2,3}^*\Sigma_{3,3}^*.
  $ 
  If $\Sigma_{2,2} = \{a\}$, then $\Sigma_{2,3} = \{b\}$
  and $L(\mathcal B) \cap \Sigma^*ba^+b\Sigma^* = ba^+b\Sigma_{3,3}^*$.
  This case was handled previously as being $\NP$-hard.
  If $\Sigma_{2,2} = \{b\}$, then $\Sigma_{2,3} = \{a\}$. By interchanging $a$
  and $b$ we can argue as in the previous case and find that
  the problem is again $\NP$-hard.

  Now suppose $|\Sigma_{1,2}| = 1$ and $\Sigma_{2,2} \ne \emptyset$.
  Without loss of generality, let $\Sigma_{1,2} = \{b\}$.
  We want to show that this case gives an $\NP$-complete problem
  in the case $\Sigma_{1,2} \ne \Sigma_{2,2}$.
  By Proposition~\ref{prop:add_Sigma11_makes_it_harder}
  it is sufficient if we can establish this for $\Sigma_{1,1} = \emptyset$.
  So, assume $\Sigma_{1,1} = \emptyset$.  
  Then, $L(\mathcal B) = ba^*b\Sigma_{3,3}^* \cap \Sigma^*baab\Sigma^* = ba^+b\Sigma_{3,3}^*$.
  This case was handled previously as being $\NP$-hard.

  If $\Sigma_{1,2} \subseteq \{a,b\}$ and $\Sigma_{2,2} = \emptyset$,
  then
  $$
   L(\mathcal B) = \bigcup_{x\in \Sigma_{1,2}, y\in \Sigma_{2,3}}
     \Sigma_{1,1}^*xy\Sigma_{3,3}^*.
  $$
  Fix $x \in \Sigma_{1,2}$ and $y \in \Sigma_{2,3}$.
  Let $\varphi : \{e, f, g\}^* \to \{a,b\}^*$
  be the homomorphism given by $\varphi(f) = xy$, $\varphi(\{e\}) = \Sigma_{1,2}$
  and $\varphi(\{g\}) = \Sigma_{2,3}$.
  Then, $\varphi(e^*fg^*) = \Sigma_{1,2}^* xy \Sigma_{2,3}^*$.
  By Theorem~\ref{thm:classification_MFCS_paper}, for the constraint $e^*fg^*$,
  the synchronization problem is in $\PTIME$.
  Hence, with Theorem~\ref{thm:hom_lower_bound_complexity},
  \[ 
  (\Sigma_{1,1}^*xy\Sigma_{3,3}^*)\textsc{-Constr-Sync}\in \PTIME
  \]
  so that $L(\mathcal B)\textsc{-Constr-Sync}\in \PTIME$ by Lemma~\ref{lem:union}.
  
  Now consider the case
  $|\Sigma_{1,2}|=1$ and $\Sigma_{2,2}\ne \emptyset$ with $\Sigma_{1,2} = \Sigma_{2,2}$.
  Suppose, for definiteness, $\Sigma_{1,1} = \{b\}$
  and $\Sigma_{1,2} = \Sigma_{2,2} = \{a\}$.
  Then
  $$
   L(\mathcal B) = b^*aa^*b\Sigma_{3,3}^*.
  $$
  
  \begin{claiminproof}
   For $L(\mathcal B) =  b^*aa^*b\Sigma_{3,3}^*$,
   $L(\mathcal B)\textsc{-Constr-Sync}$
   is $\NP$-hard.
  \end{claiminproof}
  \begin{claimproof}
   We show that $L(\mathcal B) \cap \Sigma^* b^+ a^+ b \Sigma^* = b^+a^+b\Sigma_{3,3}^*$
   is $\NP$-hard by giving a reduction from
   the following problem restricted to unary input automata.

\begin{decproblem}\label{def:problem_AutInt}
  \problemtitle{\textsc{Intersection-Non-Emptiness}}
  \probleminput{Deterministic complete automata $\mathcal A_1$, $\mathcal A_2$, \ldots, $\mathcal A_k$.}
  \problemquestion{Is there a word accepted by all of them?}
\end{decproblem}

  This problem is taken from \cite{Koz77} and $\PSPACE$-complete in general, but $\NP$-complete for unary automata, see \cite{fernau2017problems}.
 
   Let $\mathcal A_i = (\{a\}, Q_i, \delta_i, q_i, F_i)$
   be unary automata for $i \in \{1,\ldots,n\}$ 
   with disjoint state sets and $q_i \notin F_i$, which is no essential restriction.
   Construct $\mathcal A = (\{a,b\}, Q_1 \cup \ldots \cup Q_n \cup \{s_1, \ldots, s_n, t\}, \delta)$
   with 
   \begin{align*}
       \delta(q, b) & = \left\{ 
       \begin{array}{ll}
        s_i & \mbox{if } q \in Q_i \setminus F_i; \\ 
        q   & \mbox{if } q \in \{s_1, \ldots, s_n\}; \\
        t   & \mbox{if } q \in F_i \cup \{t\};
       \end{array}
       \right. \\
     \delta(q, a) & = \left\{ 
     \begin{array}{ll}
      \delta_i(q, a) & \mbox{if } q \in Q_i; \\
      q_i            & \mbox{if } \exists i \in \{1,\ldots, n\} : q = s_i; \\
      t              & \mbox{if } q = t.
     \end{array}
     \right.
   \end{align*}
   Then, $\mathcal A$ has a synchronizing word \todo{noch hinschreiben.}
   of the form $b^n a^m bv$ with $n, m > 0$ and $v \in \Sigma_{3,3}^*$
   if and only if $\delta_i(q_i, a^{m-1}) \in F_i$.
  \end{claimproof}
 
 \item[(iii)] $\Sigma_{1,2} = \emptyset$ and $\Sigma_{1,3} \ne \emptyset$, $\Sigma_{3,2} \ne \emptyset$.
 
  By changing the states $2$ and $3$, this is symmetric with case (iii).
  
 \end{enumerate}
\end{enumerate}
 So, we have handled all the possible ways the strongly connected components
 could partition the states and the proof is done. \qed
\end{proof}
\end{toappendix}


\section{Conclusion}
\label{sec:conclusion}

 We have presented general theorems to deduce, for a known constraint language,
 the computational complexity of the corresponding constrained synchronization
 problem. We applied these results to small constraint automata, generalizing
 the classification of two-state automata~\cite{DBLP:conf/mfcs/FernauGHHVW19}
 from an at most ternary alphabet to an arbitrary alphabet.
 We also gave a full classification for three-state constraint automata
 with a binary alphabet.
 Hence, we were able, by using new tools, to strengthen the results from~\cite{DBLP:conf/mfcs/FernauGHHVW19}.
 In light of the methods used and the results obtained so far, it seems
 probable that even for general constraint languages only the three complexity
 classes $\PTIME$, $\PSPACE$-complete or $\NP$-complete arise, hence
 giving a trichotomy result.
 However, we are still far from settling this issue, and much remains to be done
 to answer this question or maybe, surprisingly, present constraint languages
 giving complete problems for other complexity classes.
 Inspection of the results also shows that the $\NP$-complete cases
 are all induced by bounded languages. Hence, the question arises
 if this is always the case, or if we can find non-bounded constraint languages
 giving $\NP$-complete constrained problems.
 
 
 \smallskip \noindent
{\small
\textbf{Acknowledgement}. 
I thank Prof. Dr.
Mikhail V. Volkov for suggesting the problem of constrained synchronization
during the workshop `Modern Complexity Aspects of Formal Languages' that took place  at Trier University  11.--15.\ February, 2019. I also thank anonymous referees of a very preliminary version of this work, whose detailed feedback directly led to this complete reworking of the three-state proof, and anonymous referees of another version.}

\bibliographystyle{splncs04}
\bibliography{msneu} 
\end{document}